\newenvironment{proof}      
{\par\noindent\textbf{Proof}}{\eop\smallskip\vskip 3 pt}            
\newcommand{\eop}      
           {\hspace*{\fill}{$\vcenter{\hrule height1pt       
                     \hbox{\vrule width1pt height5pt       
            \kern5pt \vrule width1pt} \hrule height1pt}$} }   
\tikzset{
		->, % makes the edges directed
		>=stealth', % makes the arrow heads bold
%		node distance = 2cm, % specifies the minimum distance between two nodes
%		every state/.style={thick, fill=gray!10},
		initial text=$ $, % sets the text that appears on the start arrow
%		auto
}
\definecolor{darkgreen}{rgb}{0,0.5,0}
\newtheorem{theorem}{Theorem}[section]
\newtheorem{definition}[theorem]{Definition}
\newtheorem{proposition}[theorem]{Proposition}
\newtheorem{assumption}[theorem]{Assumption}
\newtheorem{example}[theorem]{Example}
\newtheorem{remark}[theorem]{Remark}
\newtheorem{corollary}[theorem]{Corollary}
\begin{document}

\ititle{\LARGE \bf
Linear Temporal Logic for Hybrid Dynamical Systems: Characterizations and Sufficient Conditions}
\iauthor{
Hyejin Han \\ 
{\normalsize hhan@ucsc.edu}\\
Ricardo G. Sanfelice \\
{\normalsize ricardo@ucsc.edu} \\
}	
%\idate{\today} 
\idate{June 15, 2020} 
%\iyear{2017}
%\irefnr{06}
%
\makeititle
\thispagestyle{empty}
\pagestyle{empty}
\newpage
\tableofcontents
\newpage
%%%%%%%%%%%%%%%%%%%%%%%%%%%%%%%%%%%%%%%%%%%%%%%%%%%%%%%
%%%%%%%%%%%%%%%%%%%%%%%%%%%%%%%%%%%%%%%%%%%%%%%%%%%%%%%%%%%%%%%%
\begin{abstract}
This paper introduces operators, semantics, characterizations, and solution-independent conditions to guarantee temporal logic specifications for hybrid dynamical systems.
Hybrid dynamical systems are given in terms of differential inclusions -- capturing the continuous dynamics -- and difference inclusions -- capturing the discrete dynamics or events -- with constraints.
State trajectories (or solutions) to such systems are parameterized by a hybrid notion of time.
For such broad class of solutions, the operators and semantics needed to reason about temporal logic are introduced.
Characterizations of temporal logic formulas in terms of dynamical properties of hybrid systems are presented -- in particular, forward invariance and finite time attractivity.
These characterizations are exploited to formulate sufficient conditions assuring the satisfaction of temporal logic formulas --- when possible, these conditions do not involve solution information.
Combining the results for formulas with a single operator, ways to certify more complex formulas are pointed out, in particular, via a decomposition using a finite state automaton.
Academic examples illustrate the results throughout the paper.
\end{abstract}
%\begin{keyword}
%Linear temporal logic \sep Hybrid systems \sep Forward invariance \sep Finite time attractivity
%Finite state automaton
%% keywords here, in the form: keyword \sep keyword
%% PACS codes here, in the form: \PACS code \sep code
%% MSC codes here, in the form: \MSC code \sep code
%% or \MSC[2008] code \sep code (2000 is the default)
%\end{keyword}
%
%\end{frontmatter}
%% \linenumbers
%%%%%%%%%%%%%%%%%%%%%%%%%%%%%%%%%%%%%%%%%%%%%%%%%%%%%%%%%%%%%%%%
% main text
\section{Introduction}

\subsection{Background}
High-level languages are useful in formulating specifications for dynamical
systems that go beyond classical asymptotic stability, where convergence to the
desired point or set is typically certified to occur in the limit as time approaches infinity, that is,
over an infinitely long time horizon; see,
e.g.,~\cite{tabuada2006linear,kloetzer2008fully,kwon2008ltlc}.
Temporal logic 
employs operators and logic to define formulas that the solutions or executions to the systems should satisfy after some finite time, or during a particular amount of bounded time. 
In particular, temporal logic can be efficiently employed to determine safety and liveness type properties, where the former type of property typically guarantees that the state remains in a particular set, while the latter property that the state reaches a specific set in finite time. 
Such specifications are given in terms of a language that employs logical and temporal connectives (or operators) applied to propositions and their combinations. For certain classes of dynamical systems, verification of these properties can be performed using model checking tools. For instance, the question of whether a safety-type specification is satisfied can be answered by finding an execution that violates the specification in finite time.

Linear temporal logic (LTL), as introduced in~\cite{Pnueli.77},
permits to formulate specifications that involve temporal properties of
computer programs; see also~\cite{MannaPnueli.92}.
Numerous contributions pertaining to modeling, analysis, design, and
verification of LTL specifications for dynamical systems have appeared in the
literature in recent years. Without attempting to present a thorough review of
the very many articles in such topic, it should be noted that
in~\cite{Fainekos.ea.09.Automatica}, the authors employ temporal logic for
motion planning of mobile robots.
In their setting, the robots are given by continuous-time systems with second-order dynamics and the proposed temporal logic specifications model reachability, invariance, sequencing, and obstacle avoidance.
Similar approaches but for dynamical systems given in discrete time, which are
more amenable to computational tools, such as model checking, have also been
pursued in the literature. In~\cite{karaman2008optimal}, the authors propose
mixed integer linear programming and quadratic programming tools for the design
of algorithms required to satisfy LTL specifications for dynamical systems with
both continuous-valued and discrete-valued states.
Such systems are called
\textit{mixed logic dynamical systems} and are expressive enough to model discontinuous and hybrid piecewise discrete-time linear systems.
In~\cite{Wolff.ea.14.ICRA}, for discrete-time nonlinear systems with
continuous-valued and discrete-valued states, the authors formulate optimization
problems related to trajectory generation with linear temporal logic
specifications for which mixed integer linear programming tools are applied.
In~\cite{dimitrova2014deductive}, the design of controllers to satisfy
alternating-time temporal logic (ATL*), which is an expressive branching-time
logic that allows for quantification over control strategies, is pursued using
barrier and Lyapunov functions for a class of continuous-time systems.
More recently, using similar programming tools, in~\cite{Saha.Julius.16.ACC},
tools to design reactive controllers for mixed logical dynamical systems so as
to satisfy high-level specifications
given in the language of metric temporal logic are proposed,
while in~\cite{bisoffi2018hybrid} a hybrid system model and tools for the
satisfaction of 
a linear temporal logic specification for the trajectories of a physical plant
modeled as a continuous-time system are presented.
Promising extensions of these techniques to the case of specifications that
need to hold over pre-specified bounded horizons, called signal temporal logic,
have been recently pursued in several articles; see,
e.g.,~\cite{Raman.ea.15.HSCC}, to just list a few.

\subsection{Motivation}

Tools for the systematic study of temporal logic properties in dynamical
systems that have solutions (or executions) changing continuously over
intervals of ordinary continuous time and, at certain time instances, having
jumps in their continuous-value and discrete-valued states, such as the
frameworks proposed
in~\cite{vanderSchaftSchumacher00,LygerosJohanssonSimicZhangSastry03,Collins04,HaddadChellaboinaNersesov06,goebel2012hybrid},
are much less developed.
A hybrid system $\mathcal{H} = (C,F,D,G)$ exhibiting such behavior can be described as
follows~\cite{goebel2012hybrid}:
\begin{equation}
\begin{split}
  \dot{x} \phantom{\mbox{}^+} &\in F(x) \qquad\qquad x \in C\\
  x^+ &\in G(x) \qquad\qquad x \in D
\end{split}\label{eqn:H}
\end{equation}
where $x \in \mathcal{X}$ is the state and $\mathcal{X} \subset \mathbb{R}^n$ is the state space. The map $F : \mathcal{X}  \rightrightarrows \mathcal{X}$ is a set-valued map and denotes the flow map capturing the continuous dynamics on the flow set $C$, and  $G : \mathcal{X}  \rightrightarrows \mathcal{X}$ is a set-valued map and defines the jump map capturing the discrete dynamics on the jump set $D$.
Throughout the paper, we assume $C \subset \dom F$, $D \subset \dom G$, and $\overline{C} \cup D \cup G(D) \subset \mathcal{X}$, where $\overline{C}$ denotes its closure.

A canonical academic example of a hybrid system is the well-known bouncing ball system, which has
infinitely many events over a bounded ordinary time horizon (i.e.,~Zeno) at
time instances that are not pre-specified and actually depend on the initial
condition of the system; see, e.g.,~\cite{Johansson99}.
Another canonical example is the dynamical system resulting from controlling the temperature of a room with a logic controller, in which the jumps of the logic variables in the controller occur when the temperature hits certain thresholds.
In such hybrid dynamical systems, the study of temporal logic using discretization-based approaches may not be fitting as, in principle, the time at which a jump occurs is not known a priori and these times are likely to occur aperiodically.
Though results enabling the reasoning of continuously changing systems and signals using discrete-time methods 
are available in the literature (see, e.g.,~\cite{fainekos2009robustness}),
the sampling effect may prevent one from being able to guarantee that the properties certified
for the discretization extend to the actual continuous time process.

%%%%%%%%%%%%%%%%%%%%%%%%%%%%%%%%%%%%%%%%%%%%%%%%%%%

\subsection{Contributions}

In this paper, we present tools that permit guaranteeing high-level specifications for solutions to hybrid dynamical systems without requiring the computation of the solutions themselves or discretization of the dynamics, but rather, guaranteeing properties of the data defining the system and the existence of Lyapunov-like functions.

We consider a broad class of hybrid dynamical systems, in which the state vector may include physical and continuous-valued variables, logic and discrete-valued variables, timers, memory states, and others; solutions may not be unique and may not necessarily exist for arbitrary long hybrid time (namely, solutions may not be complete); and solutions may exhibit Zeno behavior.
In particular, as in~\cite{goebel2012hybrid}, a hybrid dynamical system is
defined by a flow map, which is given by a set-valued map governing the
continuous change of the state variables, a flow set, which is a subset of the
state space on which solutions are allowed to evolve continuously, a jump map,
which is also given by a set-valued map governing the discrete change of the
variables, and a jump set, which defines the set of points where jumps can
occur. Along with the state space, these four objects define the data of a hybrid dynamical system.

For this broad class of hybrid dynamical systems, 
characterizations of formulas involving one temporal operator and atomic propositions 
are presented in terms of dynamical properties of hybrid systems,
in particular, forward pre-invariance and finite time attractivity.
These notions are used to formulate sufficient conditions for the satisfaction of basic temporal logic formulas.
More precisely, we show that the specifications using the always operator can
be guaranteed to hold under mild conditions on the data of the hybrid system
when a forward invariance property of an appropriately defined set holds. To
arrive to such conditions, we present sufficient conditions for forward
(pre-)invariance of closed sets in hybrid dynamical systems that extend those
in~\cite{185}.
To derive conditions that certify that formulas using the eventual operator
hold, we generate results to certify finite time attractivity of sets in hybrid
dynamical systems, for which we exploit and extend the ideas used to certify finite time stability of hybrid dynamical systems in~\cite{188}.
Furthermore, our (mostly solution-independent) approach allows us to provide an estimate of the (hybrid) time it takes for a temporal specification to be satisfied, with the estimate only depending on a Lyapunov function and the initial condition of the solution being considered.
Moreover, we introduce sufficient conditions for certain formulas that combine more than one temporal operator, which combine our conditions for the individual temporal operators.

While many of our results do not require computing solutions to the hybrid dynamical
system, which is a key advantage when compared to methods for continuous-time,
discrete-time, and mixed logic dynamical systems cited above and the method for
hybrid traces in~\cite{cimatti2015hreltl}, the price to pay when using the
results in this paper is finding a certificate for finite time attractivity,
which is in terms of a Lyapunov function. It should be noted that though our
conditions are weaker than those in~\cite{dimitrova2014deductive}, finding such
functions might be challenging at times. However, the same complexity is
present in Lyapunov methods for certifying asymptotic stability of a point or a
set~\cite{Khalil}, or for employing continuously differentiable barrier
certificates and Lyapunov functions to certify temporal logic constraints for
continuous-time systems. On the other hand, it should be noted that the
framework for hybrid dynamical systems considered here is such that, under mild
conditions, in addition to enabling a converse theorem for asymptotic
stability, has robustness properties to small perturbations, which may permit
extending the results in this paper to the case under perturbations;
see~\cite[Chapters 6 and 7]{goebel2012hybrid}.

This paper significantly extends our previous work in~\cite{176} which contains no proofs, fewer results, and much less details.
This paper provides  characterizations of temporal operators for hybrid systems, some of which are equivalent, and some others shed light on necessity.
Moreover, sufficient conditions for temporal logic formulas that have more than one operator are presented in more detail than in~\cite{176}. In particular, we show how to derive conditions for formulas that have more than one operator by combining the conditions for formulas that have one operator. Additionally, a discussion on the decomposition of temporal logic formulas using finite state automata is included.
Furthermore, detailed proofs are included and more examples are provided.

%%%%%%%%%%%%%%%%%%%%%%%%%%%%%%%%%%%%%%%%%%%%%%%%%%%%%%%%%%%%%%%%
The remainder of this paper is organized as follows. After preliminaries in Section~\ref{sec:preliminaries}, Section~\ref{sec:LTL_H} introduces LTL for hybrid systems. The characterizations of temporal operators using dynamical properties are presented in Section~\ref{sec:characterizations_operators}. The sufficient conditions to guarantee the satisfaction of LTL formulas are presented in Section~\ref{sec:sufficient_single} (for a single operator) and in Section~\ref{sec:conditions_combining} (for more than one operator).

%%%%%%%%%%%%%%%%%%%%%%%%%%%%%%%%%%%%%%%%%%%%%%%%%%%%%%%%%%%%%%%%
\section{Preliminaries}
\label{sec:preliminaries}

In this paper, properties of a hybrid system $\mathcal{H}$ as in \eqref{eqn:H} are specified with LTL formulas, and conditions to guarantee the satisfaction of LTL formulas for $\mathcal{H}$ are presented.
A solution $\phi$ to $\mathcal{H}$ is parameterized by $(t, j) \in \mathbb{R}_{\geq 0} \times \mathbb{N}$, where $t$ is the ordinary time variable, $j$ is the discrete jump variable, $\mathbb{R}_{\geq 0} := [0, \infty)$, and $\mathbb{N} :=  \{0,1,2,\ldots\}$.
The domain $\dom\phi \subset \mathbb{R}_{\geq 0} \times \mathbb{N}$ of $\phi$ is a hybrid time domain if for every $(T,J) \in \dom\phi$, the set $\dom\phi \cap ([0,T] \times \{0,1,\ldots,J\})$ can be written as the union of sets $\bigcup_{j=0}^J (I_j \times \{j\})$, where $I_j := [t_j, t_{j+1}]$ for a time sequence $0 = t_0 \leq t_1 \leq t_2 \leq \cdots \leq t_{J+1}$. The $t_j$'s with $j > 0$ define the time instants when the state of the hybrid system jumps and $j$ counts the number of jumps.
A solution is given by a hybrid arc.
A function $\phi : E \rightarrow \mathbb{R}^n$ is a hybrid arc if $E$ is a hybrid time domain and if for each $j \in \mathbb{N}$, the function $t \mapsto \phi(t,j)$ is locally absolutely continuous on the interval $I^j = \{t:(t,j) \in E\}$.
A hybrid arc $\phi$ is a solution to $\mathcal{H} = (C,F,D,G)$ if $\phi(0,0)\in\overline{C} \cup D$; for all $j \in \mathbb{N}$ such that $I^j := \{t : (t,j) \in \dom \phi\}$ has nonempty interior\footnote{The interior of $I^j$ is denoted as $\mbox{int}I^j$.}, $\phi(t,j) \in C$ for all $t \in \mbox{int}I^j$ and $\dot{\phi}(t,j) \in F(\phi(t,j))$ for almost all $t \in I^j$; for all $(t,j) \in \dom\phi$ such that $(t,j+1) \in \dom \phi$, $\phi(t,j) \in D$ and $\phi(t,j+1) \in G(\phi(t,j))$.
A solution to $\mathcal{H}$ is called maximal if it cannot be further extended.

For convenience, we define the range of a solution $\phi$ to a hybrid system $\mathcal{H}$ as $\rge\phi = \{ \phi(t, j) : (t, j) \in \dom\phi \}$. We also define the set of maximal solutions to $\mathcal{H}$ from the set $K$ as $\mathcal{S}_{\mathcal{H}}(K) := \{ \phi : \phi$ is a maximal solution to $\mathcal{H}$ with $\phi(0,0) \in K \}$.
See~\cite{goebel2012hybrid} for more details about hybrid dynamical systems.

%%%%%%%%%%%%%%%%%%%%%%%%%%%%%%%%%%%%%%%%%%%%%%%%%%%%%%%%%%%%%%%%
\section{Linear Temporal Logic for Hybrid Dynamical Systems}
\label{sec:LTL_H}
Linear Temporal Logic (LTL) provides a framework to specify desired properties
such as \emph{safety}, i.e.,~``something bad never happens'', and
\emph{liveness}, i.e.,~``something good eventually happens''. In this section,
for a given hybrid system $\mathcal{H}$, we define operators and specify
properties of $\mathcal{H}$ with LTL formulas~\cite{105}.
We first introduce atomic propositions.
\begin{definition}[{Atomic Proposition}]
An atomic proposition $p$ is a statement on the system state $x$ that, for each $x$, $p$ is either \texttt{True} (1 or $\top$) or \texttt{False} (0 or $\bot$).
\end{definition}
A proposition $p$ will be treated as a single-valued function of $x$, that is, it will be a function $x \mapsto p(x)$. The set of all possible atomic propositions will be denoted by $\mathcal{P}$.

Logical and temporal operators are defined as follows.
\begin{definition}[{Logic Operators}]~~~
\begin{itemize}
 \item $\lnot$ is the \emph{negation} operator
 \item $\lor$ is the \emph{disjunction} operator
 \item $\land $ is the \emph{conjunction} operator
 \item $\Rightarrow$ is the \emph{implication} operator
 \item $\Leftrightarrow$ is the \emph{equivalence} operator
\end{itemize}
\end{definition}
\begin{definition}[{Temporal Operators}]~~~
\begin{itemize}
 \item $\Circle $ is the \emph{next} operator
 \item $\Diamond$ is the \emph{eventually} operator
 \item $\Box$ is the \emph{always} operator
 \item $\mathcal{U}_s$ is the strong \emph{until} operator
 \item $\mathcal{U}_w$ is the weak \emph{until} operator
\end{itemize}
\label{def:temporalOperator}
\end{definition}

Given a hybrid system $\mathcal{H}$, the semantics of LTL are defined as follows.
For simplicity, we consider the case of no inputs and state-dependent atomic propositions.
When a proposition $p$ is \texttt{True} at $(t, j) \in \text{dom} \, \phi$, i.e.,~$p (\phi (t,j)) = 1$, it is denoted by
\begin{equation}
 \phi (t, j) \, \Vdash \, p ,
\end{equation}
whereas if $p$ is \texttt{False} at $(t, j) \in \text{dom} \, \phi$, it is written as
\begin{equation}
 \phi (t, j) \, \nVdash \, p .
\end{equation}
An LTL formula is a sentence that consists of atomic propositions and operators of LTL.
An LTL formula $f$ being satisfied by a solution $(t, j) \mapsto \phi \, (t, j)$ at some time $(t, j)$ is denoted by
\begin{equation}
 (\phi, (t, j)) \, \vDash \, f ,
\end{equation}
while $f$ not satisfied by a solution $(t, j) \mapsto \phi (t, j)$ at some time $(t, j)$ is denoted by\footnote{Note that to be compatible with the literature, instead of $\Vdash$, we use $\vDash$ for a formula.}
\begin{equation}
 (\phi, (t, j)) \, \nvDash \, f .
\end{equation}

Let $p, q \in \mathcal{P}$ be atomic propositions. The semantics of LTL are defined as follows: given a solution $\phi$ to $\mathcal{H}$ and $(t,j) \in \dom\phi$
\begin{subequations}
\begin{align}
	&(\phi, (t, j)) \vDash p \: \Leftrightarrow \: \phi (t, j) \Vdash p\\
	&(\phi, (t, j)) \vDash \lnot p \: \Leftrightarrow \: (\phi, (t, j)) \nvDash p\\
	&(\phi, (t, j)) \vDash p \lor q \: \Leftrightarrow \: (\phi, (t, j)) \vDash p \:\:\mbox{or} \:\: (\phi, (t, j)) \vDash q\\
	&(\phi, (t, j)) \vDash \ocircle p \: \Leftrightarrow \: (t,j+1) \in \dom\phi \:\:\mbox{and}\:\: (\phi, (t, j+1)) \vDash p\\
	&(\phi, (t, j)) \vDash p\, \mathcal{U}_s q \: \Leftrightarrow \: \exists (t', j') \in \dom \phi, t'+j' \geq t+j \:\: \mbox{s.t.} \:\: (\phi, (t', j')) \vDash q,\\
	&\qquad\qquad\qquad\qquad\; \mbox{and} \:\: \forall \, (t'',j'') \in \dom \phi \:\: \mbox{s.t.} \:\: t+j \leq t''+j'' < t'+j', (\phi, (t'', j'')) \vDash p \nonumber\\
	&(\phi, (t, j)) \vDash p\, \mathcal{U}_w q \: \Leftrightarrow \: (\phi, (t', j')) \vDash p\ \  \forall\: (t',j') \in \dom\phi \:\: \mbox{s.t.} \:\: t'+j' \geq t+j\\
	&\qquad\qquad\qquad\qquad\; \mbox{or}\:\: (\phi, (t, j)) \vDash p\, \mathcal{U}_s q \nonumber\\
	&(\phi, (t, j)) \vDash p \land q \: \Leftrightarrow \: (\phi, (t, j)) \vDash p \:\:\mbox{and}\:\: (\phi, (t, j)) \vDash q\\
	&(\phi, (t, j)) \vDash \Box p \:\Leftrightarrow \: (\phi, (t', j')) \vDash p \quad\forall \: t'+j' \geq t+j, \: (t', j') \in \dom \phi\\
	&(\phi, (t, j)) \vDash \Diamond p \: \Leftrightarrow \: \exists (t', j') \in \dom \phi, t'+j' \geq t+j \:\: \mbox{s.t.} \:\: (\phi, (t', j')) \vDash p.
\end{align}
\end{subequations}
The same semantics of LTL are used for formulas.
For example, with a given formula $f$, $\ocircle f$ is satisfied by $\phi$ at $(t,j) \in \dom\phi$ when $(t,j+1) \in \dom\phi$ and $(\phi, (t,j+1))$ satisfies $f$.

%With the above semantics, we propose sufficient conditions that, when possible are solution independent, to check whether a given solution satisfies a formula at hybrid time $(0,0)$ or at each hybrid time $(t,j) \!\in\! \dom\phi$.

%%%%%%%%%%%%%%%%%%%%%%%%%%%%%%%%%%%%%%%%%%%%%%%%%%%%%%%%%%%%%%%%
\section{Characterizations of Temporal Operators using Dynamical Properties}
\label{sec:characterizations_operators}

In this section, we present basic necessary and sufficient conditions for the satisfaction of LTL formulas involving one temporal operator such as \emph{always} ($\Box$), \emph{eventually} ($\Diamond$), \emph{next} ($\ocircle$), and \emph{until} ($\mathcal{U}$).
We first build a set $K$ on which the atomic proposition is satisfied. Then, the satisfaction of the formula is assured by guaranteeing particular properties of the solutions to the hybrid system relative to the set $K$.

 \subsection{Characterization of $\Box$ via forward invariance}t
\label{subsec:always}

According to the definition of the $\Box$ operator, given an atomic proposition $p$, a solution $(t, j) \mapsto \phi (t,j)$ to a hybrid system $\mathcal{H} = (C,F,D,G)$ on $\mathcal{X}$ satisfies the formula
\begin{equation}
f = \Box p
\label{eqn:always}
\end{equation}
at $(t,j)$ when we have that $\phi(t',j')$ satisfies $p$ for all $t'+j' \geq t+j$ such that $(t',j') \in \dom \phi$.
The set of points in $\mathcal{X}$ satisfying an atomic proposition $p$ is given by
\begin{equation}
K :=  \{x \in \mathcal{X} : p(x) = 1\} ,
\label{eqn:K_set}
\end{equation}
which, throughout the paper, is assumed to be nonempty.

To characterize that every solution $\phi$ to $\mathcal{H}$ satisfies $f$ in \eqref{eqn:always} at each $(t,j) \in \dom \phi$, each solution starting in $K$ needs to stay in $K$ for all time.
For this purpose, we recall the definition of forward pre-invariance and then present necessary and sufficient conditions guaranteeing $f$ in \eqref{eqn:always}.

\begin{definition}[{Forward pre-Invariance}]
Consider a hybrid system $\mathcal{H}$ on $\mathcal{X}$. A set $K \subset \mathcal{X}$ is said to be forward pre-invariant for $\mathcal{H}$ if every solution $\phi \in \mathcal{S}_{\mathcal{H}} (K)$ satisfies $\rge\phi \subset K$.
\label{def:fpi}
\end{definition}

Furthermore, we are also interested in $f$ in \eqref{eqn:always} being satisfied at some $(t,j) \in \dom\phi$ (not necessarily at $(t,j) = (0,0)$).
For this purpose, we define the following notion.

\begin{definition}[{Eventually Forward pre-Invariance}]
Consider a hybrid system $\mathcal{H}$ on $\mathcal{X}$. A set $K \subset \mathcal{X}$ is said to be eventually forward pre-invariant for $\mathcal{H}$ if\footnote{A notion that does not insist on the solutions starting from $K$ can also be formulated, but it would be a departure from a forward invariance notion since such a notion would hold for solutions that do not start from $K$.} for every solution $\phi \in \mathcal{S}_{\mathcal{H}} (K)$, there exists $(t,j) \in \dom\phi$ such that $\phi(t',j') \in K$ for all $(t',j') \in \dom\phi$ such that $t'+j' \geq t+j$.
\label{def:eventual_fpi}
\end{definition}

\begin{proposition}
Given an atomic proposition $p$, the formula $f = \Box p$ is satisfied for every maximal solution $\phi$ to a hybrid system $\mathcal{H}$ at $(t,j) = (0,0)$ with $\phi(0,0) \Vdash p$ if and only if the set $K$ in \eqref{eqn:K_set} is forward pre-invariant for $\mathcal{H}$.
\label{prop:always_fpi}
\end{proposition}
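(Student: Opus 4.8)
The statement to prove is an ``if and only if'' equating satisfaction of $f = \Box p$ at $(0,0)$ for all maximal solutions starting in $K$ with forward pre-invariance of $K$. My plan is to unpack both sides through the definitions and show they are literally the same statement, being careful about the quantifier ``every maximal solution $\phi$ with $\phi(0,0) \Vdash p$'' versus ``every $\phi \in \mathcal{S}_{\mathcal{H}}(K)$''.

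First I would fix an atomic proposition $p$ and recall that $K = \{x \in \mathcal{X} : p(x) = 1\}$, so that for any solution $\phi$ and any $(t,j) \in \dom\phi$ we have $\phi(t,j) \Vdash p$ if and only if $\phi(t,j) \in K$. In particular $\phi(0,0) \Vdash p$ is equivalent to $\phi(0,0) \in K$, i.e. $\phi \in \mathcal{S}_{\mathcal{H}}(K)$. Then I would invoke the semantics of $\Box$: $(\phi,(0,0)) \vDash \Box p$ holds exactly when $(\phi,(t',j')) \vDash p$ for all $(t',j') \in \dom\phi$ with $t'+j' \geq 0$, which (since $t'+j' \geq 0$ always) is the same as $\phi(t',j') \in K$ for all $(t',j') \in \dom\phi$, i.e. $\rge\phi \subset K$.

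For the forward direction, I would assume $f = \Box p$ is satisfied at $(0,0)$ by every maximal solution $\phi$ with $\phi(0,0) \Vdash p$. Take an arbitrary $\phi \in \mathcal{S}_{\mathcal{H}}(K)$; by definition it is a maximal solution with $\phi(0,0) \in K$, hence $\phi(0,0) \Vdash p$, so by hypothesis $(\phi,(0,0)) \vDash \Box p$, which by the translation above gives $\rge\phi \subset K$. Since $\phi$ was arbitrary, $K$ is forward pre-invariant. The converse is the same chain of equivalences read in reverse: if $K$ is forward pre-invariant and $\phi$ is any maximal solution with $\phi(0,0) \Vdash p$, then $\phi(0,0) \in K$ so $\phi \in \mathcal{S}_{\mathcal{H}}(K)$, hence $\rge\phi \subset K$, hence $\phi(t',j') \in K$ and thus $(\phi,(t',j')) \vDash p$ for every $(t',j') \in \dom\phi$, which is precisely $(\phi,(0,0)) \vDash \Box p$.

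Honestly, there is no real obstacle here: the proposition is essentially a bookkeeping exercise matching the LTL semantics of $\Box$ against Definition~\ref{def:fpi}. The only point requiring the slightest care is the quantifier bookkeeping — checking that ``every maximal solution with $\phi(0,0) \Vdash p$'' ranges over exactly $\mathcal{S}_{\mathcal{H}}(K)$, and that the time condition $t'+j' \geq t+j$ at $(t,j)=(0,0)$ imposes no restriction so that ``for all $(t',j')$ with $t'+j' \geq 0$'' is just ``for all $(t',j') \in \dom\phi$''. I would present the argument as a short list of equivalences and then note that both directions follow immediately.
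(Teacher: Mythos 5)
Your proposal is correct and follows essentially the same route as the paper's own proof: both directions are obtained by unwinding the semantics of $\Box$ and Definition~\ref{def:fpi} and matching the quantifiers, with your version being slightly more explicit about the identification of ``maximal solutions with $\phi(0,0) \Vdash p$'' with $\mathcal{S}_{\mathcal{H}}(K)$ and the vacuity of the condition $t'+j' \geq 0$. No gaps.
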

\begin{proof}
$(\Rightarrow)$ Since $\Box p$ is satisfied for all solutions $\phi$ at $(t,j) = (0,0)$ and $\phi(0,0)$ satisfies $p$, we have that every solution $\phi$ to $\mathcal{H}$ satisfies that $\phi(t,j) \in K = \{x \in \mathcal{X} : p(x) = 1\}$ for all $(t,j) \in \dom\phi$.
This implies that $K$ is forward pre-invariant via the definition of forward pre-invariance of the set $K$ in  Definition \ref{def:fpi}; namely, $\rge\phi \subset K$.

$(\Leftarrow)$ Since the set $K$ is forward pre-invariant, each solution $\phi$ that starts in $K$ stays in $K$. That is, $\phi(0,0)$ satisfies $p$ and each solution $\phi$ at $(t,j)$ in the domain of each solution satisfies $p$. This implies that $f = \Box p$ is satisfied for every solution $\phi$ to $\mathcal{H}$ at $(t,j) = (0,0)$ with $\phi(0,0) \Vdash p$.
\end{proof}

\begin{proposition}
Given an atomic proposition $p$,
the formula $f = \Box p$ is satisfied for every maximal solution $\phi$ to a hybrid system $\mathcal{H}$ at some $(t,j) \in \dom\phi$ with $\phi(0,0) \Vdash p$ if and only if the set $K$ in \eqref{eqn:K_set} is eventually forward pre-invariant for $\mathcal{H}$.
\end{proposition}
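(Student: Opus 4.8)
The plan is to recognize that, once the atomic-proposition notation is translated into set membership, this statement is essentially a verbatim restatement of Definition~\ref{def:eventual_fpi}; the proof therefore mirrors that of Proposition~\ref{prop:always_fpi}, with the single universal quantifier over $(t,j)$ replaced by an existential one that is simply carried through the semantics of $\Box$. Concretely, by \eqref{eqn:K_set} we have $\phi(0,0)\Vdash p$ if and only if $\phi(0,0)\in K$, i.e.\ $\phi\in\mathcal{S}_{\mathcal{H}}(K)$; and for a solution $\phi$, $(\phi,(t,j))\vDash\Box p$ holds if and only if $\phi(t',j')\in K$ for every $(t',j')\in\dom\phi$ with $t'+j'\geq t+j$. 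With these two equivalences in hand, both implications are immediate.

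For the ($\Rightarrow$) direction, I would fix an arbitrary $\phi\in\mathcal{S}_{\mathcal{H}}(K)$. Then $\phi(0,0)\in K$ means $p(\phi(0,0))=1$, i.e.\ $\phi(0,0)\Vdash p$, so the hypothesis applies to $\phi$ and yields some $(t,j)\in\dom\phi$ with $(\phi,(t,j))\vDash\Box p$. Unfolding the semantics of $\Box$ gives $(\phi,(t',j'))\vDash p$, hence $\phi(t',j')\in K$, for every $(t',j')\in\dom\phi$ with $t'+j'\geq t+j$; this is precisely the condition required in Definition~\ref{def:eventual_fpi}, so $K$ is eventually forward pre-invariant. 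For the ($\Leftarrow$) direction, I would fix an arbitrary maximal solution $\phi$ with $\phi(0,0)\Vdash p$; then $\phi(0,0)\in K$, so $\phi\in\mathcal{S}_{\mathcal{H}}(K)$, and eventual forward pre-invariance provides $(t,j)\in\dom\phi$ such that $\phi(t',j')\in K$ for all $(t',j')\in\dom\phi$ with $t'+j'\geq t+j$. Reading $\phi(t',j')\in K$ as $(\phi,(t',j'))\vDash p$ and applying the semantics of $\Box$ yields $(\phi,(t,j))\vDash\Box p$, which is the claim.

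I expect no genuine obstacle here: the only care needed is notational bookkeeping — matching $\Vdash$ with membership in $K$ via \eqref{eqn:K_set}, and matching the ordering ``$t'+j'\geq t+j$'' used in the semantics of $\Box$ with the identical ordering appearing in Definition~\ref{def:eventual_fpi}. There is no analytic content (no appeal to $F$, $G$, $C$, $D$, completeness of solutions, or Zeno behavior); the argument is a pure unwinding of definitions, exactly parallel to Proposition~\ref{prop:always_fpi} but with ``for all $(t,j)$'' weakened to ``for some $(t,j)$''.
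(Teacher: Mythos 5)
Your proof is correct and follows essentially the same route as the paper's: both directions are a direct unwinding of the semantics of $\Box$, the identification of $\phi(0,0)\Vdash p$ with $\phi\in\mathcal{S}_{\mathcal{H}}(K)$ via \eqref{eqn:K_set}, and a match against Definition~\ref{def:eventual_fpi}. No gap; your version is if anything slightly more explicit about the quantifier bookkeeping than the paper's.
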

\begin{proof}
$(\Rightarrow)$
By the definition of $\Box$ and the definition of solutions to $\mathcal{H}$, since every solution $\phi$ to $\mathcal{H}$ starting from $K$ satisfies $\Box p$ at some $(t,j) \in \dom\phi$, $\phi(t',j')$ satisfies $p$ for all $(t',j') \in \dom\phi$ such that $t'+j' \geq t+j$; and thus, $\phi(t',j') \in K$ for all $(t',j')$ such that $t'+j' \geq t+j$.
This implies that $K$ is forward pre-invariant after $(t,j) \in \dom \phi$.
Then, we conclude that $K$ is eventually forward pre-invariant for $\mathcal{H}$ via the definition of eventually forward pre-invariance of the set $K$ in Definition~\ref{def:eventual_fpi}.

($\Leftarrow$) Since the set $K$ is eventually forward pre-invariant, for each solution $\phi$ that starts from $K$, there exists $(t, j) \in \dom\phi$ such that $\phi(t',j') \in K$ for all $(t',j') \in \dom\phi$ such that $t'+j' \geq t+j$.
This implies that $\phi(0,0)$ satisfies $p$ and such solution $\phi$ satisfies $p$ at each $(t',j') \in \dom\phi$ such that $t'+j' \geq t+j$.
Therefore, we conclude that $f = \Box p$ is satisfied for every solution $\phi$ to $\mathcal{H}$ at $(t,j) \in \dom\phi$ with $\phi(0,0) \Vdash p$.
\end{proof}

Note that when $K$ in \eqref{eqn:K_set} is not forward pre-invariant for $\mathcal{H}$, $\Box p$ is not satisfied for all solutions $\phi$ to $\mathcal{H}$ at every $(t,j) \in \dom \phi$ with $\phi(0,0) \Vdash p$.
The following example shows the case when $\Box p$ is not satisfied for a solution $\phi$ to $\mathcal{H}$ at every $(t,j) \in \dom \phi$ with $\phi(0,0) \Vdash p$.

\begin{example}
Let an atomic proposition $p$ given by
\begin{equation}
\begin{array}{@{}ll@{}}
  p(x) = 1 & \qquad\mbox{if } x \in [0,1]\\
  p(x) = 0 & \qquad\mbox{otherwise.}
\end{array}
\end{equation}
Consider a hybrid system $\mathcal{H} = (C,F,D,G)$ on $\mathcal{X} :=  \mathbb{R}$ given by
\begin{equation}
\begin{array}{ll}
  F(x) :=  0 &\qquad \forall x \in C:=  \big[0,\tfrac{1}{2}\big]\\
  G(x) := 
  \left\{
	\begin{array}{cl}
   		2 & \:\mbox{if}\:\: x = 1\\
   		0 & \:\mbox{if}\:\: x = 2
   	\end{array}\right.
  &\qquad \forall x \in D :=  \{1\} \cup \{2\}.
\end{array}
\end{equation}
Now, pick $\phi(0,0) = 1$ so that $\phi(0,0)$ satisfies $p$. A solution $\phi$ from $\phi(0,0)$ does not satisfy $p$ after the first jump; i.e.,~$\phi(0,1) \nVdash p$; however, $\phi(0,1)$ is still in the jump set $D$ so that it jumps to $0$, and it satisfies $p$ after the second jump; i.e.,~$\phi(0,2) \Vdash p$. Furthermore, the solution $\phi$ flows after the second jump so that $\phi(t,2)$ satisfies $p$ for every $t \geq 0$.
On the other hand, there exists another solution that starts from $1$ and stays flowing there for all furture time; hence, it satisfies $f$.
This example shows that $\Box p$ is not satisfied for all solutions $\phi$ to $\mathcal{H}$ at every $(t,j) \in \dom\phi$ when $K = \{x \in \mathcal{X} : p(x) = 1\}$ is not forward pre-invariant.
\end{example}

 \subsection{Characterization of $\Diamond$ via finite time attractivity}

\label{sec:characterization_eventual}

A solution $(t, j) \mapsto \phi (t,j)$ to a hybrid system $\mathcal{H}$ satisfies the formula
\begin{equation}
f = \Diamond p
\label{eqn:f_eventually}
\end{equation}
at $(t,j) \in$ dom $\phi$ when there exists $(t',j') \in$ dom $\phi$ such that $t'+j' \geq t+j$, and $\phi(t',j')$ satisfies $p$. The same set $K$ in \eqref{eqn:K_set} is used in this section.

To guarantee that every solution $\phi$ to $\mathcal{H}$ satisfies $f$ in \eqref{eqn:f_eventually} at each $(t,j) \in \dom \phi$, the distance of each solution to $K$ should become zero at some finite $(t,j) \in \dom \phi$ so that $\phi$ reaches $K$.
Related to this property, we recall the definition of finite time attractivity (FTA) for hybrid systems and then present necessary and sufficient conditions guaranteeing the formula $f$ in \eqref{eqn:f_eventually}.
In this definition, the amount of hybrid time required for a solution $\phi$ to converge to the set $K$ is captured by a settling-time function $\mathcal{T}$ whose argument is the solution $\phi$ and its output is a positive number determining the time to converge to $K$. More precisely, given $\phi$, $\mathcal{T}(\phi) :=  \inf \{ t+j : \phi(t,j) \in K \}$ is the time to reach $K$.
Below, given $x \in \mathbb{R}^n$ and a nonempty set $K \subset \mathbb{R}^n$, $|x|_K :=  \inf_{y \in K} |x-y| $. We use $\nearrow$ to denote the limit from below.

\begin{definition}[{Finite Time Attractivity}]
A closed set $K$ is said to be finite time attractive (FTA) for $\mathcal{H}$ with respect to $\mathcal{O} \subset \overline{C} \cup D$ if for every solution $\phi \in \mathcal{S}_{\mathcal{H}}(\mathcal{O})$, $\sup_{(t,j) \in \dom\phi} t+j \geq \mathcal{T}(\phi)$, and
\begin{equation}
 \lim_{(t,j) \in \dom\phi\,:\,t+j \nearrow \mathcal{T}(\phi)} {|\phi(t,j)| }_K = 0 .
\end{equation}
Furthermore, the set $K$ is said to be finite time attractive (FTA) for $\mathcal{H}$ if so it is with respect to $\overline{C} \cup D$.
\label{def:FTA}
\end{definition}

\begin{proposition}
Given an atomic proposition $p$, the formula $f = \Diamond p$ is satisfied for every solution $\phi$ to a hybrid system $\mathcal{H}$ at $(t,j) = (0,0)$ if and only if the closed set $K$ in \eqref{eqn:K_set} is FTA for $\mathcal{H}$.
\label{prop:eventually_FTA}
\end{proposition}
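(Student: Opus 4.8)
The plan is to mirror the structure of the proof of Proposition~\ref{prop:always_fpi}: unfold the semantics of $\Diamond$ on one side and the definition of FTA on the other, and show the two statements coincide. The bridge between them is the observation that $(\phi,(0,0)) \vDash \Diamond p$ means exactly that there is some $(t',j') \in \dom\phi$ with $\phi(t',j') \Vdash p$, i.e.~$\phi(t',j') \in K$, which in turn says $\rge\phi \cap K \neq \emptyset$, equivalently $\mathcal{T}(\phi) := \inf\{t+j : \phi(t,j) \in K\} < \infty$ and the infimum is attained (or approached) within $\dom\phi$.

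For the forward direction $(\Rightarrow)$, I would fix an arbitrary $\phi \in \mathcal{S}_{\mathcal{H}}(\overline{C}\cup D)$; by hypothesis $(\phi,(0,0))\vDash\Diamond p$, so there exists $(t',j')\in\dom\phi$ with $\phi(t',j')\in K$. Hence the set $\{t+j : (t,j)\in\dom\phi,\ \phi(t,j)\in K\}$ is nonempty and bounded below by $0$, so $\mathcal{T}(\phi)\in\realsgeq$ is well defined and $\sup_{(t,j)\in\dom\phi} t+j \geq t'+j' \geq \mathcal{T}(\phi)$. For the limit condition, I would argue that since $K$ is closed, $\phi$ is a hybrid arc (so $t\mapsto\phi(t,j)$ is continuous on each $I^j$ and the value right after a jump is $\phi(t,j+1)$), and $\mathcal{T}(\phi)$ is an infimum of values $t+j$ at which $\phi\in K$, any sequence $(t_i,j_i)\in\dom\phi$ with $t_i+j_i\nearrow\mathcal{T}(\phi)$ has $|\phi(t_i,j_i)|_K\to 0$: either the infimum is attained at some $(t^*,j^*)$, and by continuity of the relevant flow piece (or because jumps land exactly on the attaining point) the one-sided limit is $\phi(t^*,j^*)\in K$; or it is approached by a sequence of points in $K$ whose hybrid-time coordinates decrease to $\mathcal{T}(\phi)$, again forcing the distance to $0$. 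This establishes FTA.

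For the converse $(\Leftarrow)$, I would fix $\phi \in \mathcal{S}_{\mathcal{H}}(\overline{C}\cup D)$. Since $\phi(0,0)\in\overline{C}\cup D$, FTA of $K$ gives $\sup_{(t,j)\in\dom\phi}t+j\geq\mathcal{T}(\phi)$ and $\lim_{t+j\nearrow\mathcal{T}(\phi)}|\phi(t,j)|_K=0$. In particular $\mathcal{T}(\phi)<\infty$, so by definition of the infimum there is a point (or a sequence of points) of $\dom\phi$ with value in $K$; combined with the limit condition and closedness of $K$, this yields an actual $(t',j')\in\dom\phi$ with $\phi(t',j')\in K$, i.e.~$\phi(t',j')\Vdash p$ with $t'+j'\geq 0 = t+j$. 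By the semantics of $\Diamond$, $(\phi,(0,0))\vDash\Diamond p$. Since $\phi$ was arbitrary, $f=\Diamond p$ holds for every solution at $(0,0)$.

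The main obstacle is the subtle point that $\mathcal{T}(\phi)$ is defined as an infimum, which need not be attained in general; I expect to need a short argument — using that $K$ is closed and that a hybrid arc has at most finitely many jumps before any finite hybrid time and is continuous between them — to upgrade "$\inf$ is finite and the distance tends to $0$ along it" to "there is an actual $(t',j')\in\dom\phi$ with $\phi(t',j')\in K$." One must also be slightly careful that the limit in Definition~\ref{def:FTA} is over $(t,j)\in\dom\phi$ with $t+j\nearrow\mathcal{T}(\phi)$, so the case $\mathcal{T}(\phi)=0$ (solution already in $K$ at $(0,0)$) should be checked separately, where the claim is immediate. Everything else is a direct unwinding of the two definitions.
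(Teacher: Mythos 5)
Your proposal is correct and follows essentially the same route as the paper's proof: both directions are a direct unwinding of the semantics of $\Diamond$ against Definition~\ref{def:FTA}, with closedness of $K$ and the continuity of the hybrid arc (together with the distance function) used to pass between ``the distance to $K$ tends to zero as $t+j\nearrow\mathcal{T}(\phi)$'' and ``there is an actual $(t',j')\in\dom\phi$ with $\phi(t',j')\in K$.'' If anything, you are more careful than the paper about the attainment of the infimum defining $\mathcal{T}(\phi)$ and about the inequality $\sup_{(t,j)\in\dom\phi}t+j\geq t'+j'\geq\mathcal{T}(\phi)$, which the paper states somewhat loosely as $t'+j'=\mathcal{T}(\phi)$.
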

\begin{proof}
$(\Rightarrow)$ Since $\Diamond p$ is satisfied for every solution $\phi$ to a hybrid system $\mathcal{H}$ at $(t,j) = (0,0)$, there exists $(t',j') \in \dom\phi$ such that $t'+j' \geq 0$ and $\phi(t',j') \in K = \{x \in \mathcal{X}: p(x) = 1\}$.
In fact, $\phi(t',j') \in K$ implies $|\phi(t',j')|_K = 0$ and $t' + j' = \mathcal{T}(\phi)$;
that is,
\[
\lim_{(t,j) \in \dom\phi\,:\,t+j \nearrow \mathcal{T}(\phi)} {| \phi(t,j)| }_K = 0
\]
with $\mathcal{T}(\phi) = t'+j'$.
This implies that $K$ is FTA via the definition of FTA of the set $K$ in Definition~\ref{def:FTA}.

$(\Leftarrow)$ Since the closed set $K$ is FTA for $\mathcal{H}$, each solution $\phi$ to $\mathcal{H}$ satisfies
\[
\lim_{(t,j) \in \dom\phi\,:\,t+j \nearrow \mathcal{T}(\phi)} {| \phi(t,j)| }_K = 0
\]
and $\sup_{(t,j) \in \dom\phi} t+j \geq \mathcal{T}(\phi)$,
where $\mathcal{T}(\phi) = t'+j'$ for some $(t',j') \in \dom \phi$.
Indeed, by the definition of the set $K$, its closedness, and the (local) absolute continuity of $\phi$ (along with the continuity of the distance function to the set $K$), there exists $(t',j') \in \dom\phi$ such that $\phi(t',j')$ satisfies $p$.
This implies that $f = \Diamond p$ is satisfied for every solution $\phi$ to a hybrid system $\mathcal{H}$ at $(t,j) = (0,0)$.
\end{proof}

  \subsection{Characterization of $\ocircle$ via properties of the data of $\mathcal{H}$}

A solution $(t,j) \mapsto \phi(t,j)$ to a hybrid system $\mathcal{H} = (C,F,D,G)$ satisfies the formula
\begin{equation}
 f = \ocircle p
\label{eqn:f_next}
\end{equation}
when we have that $\phi(t, j+1)$ satisfies $p$ for each $(t,j) \in \dom \phi$.
Here, the same set $K$ in \eqref{eqn:K_set} is used. To guarantee that every solution $\phi$ to $\mathcal{H}$ satisfies $f$ in \eqref{eqn:f_next} at each $(t,j) \in \dom \phi$, each solution needs to jump to the set $K$ at the next hybrid time; i.e.,~$\phi(t,j+1) \in K$.

\begin{proposition}
Given an atomic proposition $p$,
let the set $K$ be as in \eqref{eqn:K_set}.
The formula $f = \ocircle p$ is satisfied for all maximal solutions $\phi$ to $\mathcal{H}$ at each $(t,j) \in \dom\phi$ if and only if all of the following properties hold simultaneously:
\begin{itemize}
  \item[a)] each nontrivial solution $\phi$ to $\mathcal{H}$ is such that $\phi(0,0) \in D$; and
 \item[b)] no flows of $\mathcal{H}$ are possible from any $x \in C$; and
 \item[c)] $G(D) \subset K \cap D$; and
 \item[d)] $\overline{C} \subset D$.
\end{itemize}
\label{prop:next}
\end{proposition}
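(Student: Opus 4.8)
The plan is to prove the two implications of the equivalence separately, relying throughout on the semantics of the next operator --- $(\phi,(t,j))\vDash\ocircle p$ iff $(t,j+1)\in\dom\phi$ and $\phi(t,j+1)\in K$ --- on the clause-by-clause definition of a solution to $\mathcal{H}$ (in particular: $\phi(0,0)\in\overline{C}\cup D$; if $I^j$ has nonempty interior then $\phi$ flows in $C$ on it; and if $(t,j),(t,j+1)\in\dom\phi$ then $\phi(t,j)\in D$ and $\phi(t,j+1)\in G(\phi(t,j))$), and on the ``prefix'' structure of hybrid time domains. A standing tool is the standard fact from \cite{goebel2012hybrid} that every solution extends to a maximal one.

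\textbf{Necessity} ($\Rightarrow$). Assume $\ocircle p$ holds for every maximal solution at every point of its domain. I would establish b), c), d) by building short solutions, extending them to maximal ones, and reading off the conclusion from the hypothesis at well-chosen points; a) then comes for free. For d): pick any $x_0\in\overline{C}$; the trivial arc $\dom\phi=\{(0,0)\}$, $\phi(0,0)=x_0$ is a solution, and its maximal extension $\phi$ must satisfy $\ocircle p$ at $(0,0)$, hence $(0,1)\in\dom\phi$, hence $\phi(0,0)=x_0\in D$ by the jump clause; so $\overline{C}\subset D$, and then a) holds since every solution has $\phi(0,0)\in\overline{C}\cup D=D$. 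For c): pick $x_0\in D$ and $x_1\in G(x_0)$; the arc with domain $\{(0,0),(0,1)\}$ and values $x_0,x_1$ is a solution, and on its maximal extension, $\ocircle p$ at $(0,0)$ gives $x_1=\phi(0,1)\in K$ while $\ocircle p$ at $(0,1)$ forces $(0,2)\in\dom\phi$, hence $x_1\in D$ by the jump clause; so $G(D)\subset K\cap D$. For b): if some $x_0\in C$ admitted a flow of positive duration $\epsilon$, realize it as a solution with domain $[0,\epsilon]\times\{0\}$ and extend to a maximal $\phi$; since $\phi$ flows on $[0,\epsilon]$ at level $0$, its first jump time is $\geq\epsilon$, so for $t^*\in(0,\epsilon)$ we have $(t^*,0)\in\dom\phi$ but $(t^*,1)\notin\dom\phi$, contradicting $\ocircle p$ at $(t^*,0)$; hence no such flow exists.

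\textbf{Sufficiency} ($\Leftarrow$). Assume a)--d) and let $\phi$ be any maximal solution. By b), no interval $I^j$ has nonempty interior, so the hybrid-time-domain structure forces $\dom\phi\subset\{0\}\times\mathbb{N}$, say $\dom\phi=\{0\}\times\{0,\dots,J\}$ with $J\in\mathbb{N}\cup\{\infty\}$. By d), $\phi(0,0)\in\overline{C}\cup D=D$, and by induction using c) --- if $(0,j)\in\dom\phi$ with $j\geq 1$ then $\phi(0,j)\in G(D)\subset K\cap D$ --- one gets $\phi(0,j)\in D$ for every $(0,j)\in\dom\phi$. Since $D\subset\dom G$, a jump is always available from $\phi(0,J)$, so a finite $J$ would contradict maximality; hence $\dom\phi=\{0\}\times\mathbb{N}$. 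Then for each $(0,j)\in\dom\phi$ we have $(0,j+1)\in\dom\phi$ and $\phi(0,j+1)\in G(\phi(0,j))\subset G(D)\subset K$, i.e. $(\phi,(0,j))\vDash\ocircle p$.

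The step I expect to be the most delicate is the bookkeeping at the interface between hybrid time domains and the solution clauses in the necessity argument --- precisely, confirming that a positive-duration flow from $x\in C$ yields a solution whose first jump time is at least $\epsilon$ (so that $(t^*,1)$ is genuinely outside the domain for $t^*<\epsilon$), and that the trivial and one-jump arcs constructed above really are solutions and admit maximal extensions. These are routine within the framework of \cite{goebel2012hybrid}, but they are the points where the argument is easiest to get wrong. One should also note the harmless redundancy that a) is implied by d).
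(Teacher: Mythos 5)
Your proof is correct and follows essentially the same route as the paper's: necessity by instantiating the hypothesis on trivial, one-jump, and positive-duration flowing solutions, and sufficiency by showing that the domain consists only of jumps and propagating condition c) along them. If anything, your write-up is more explicit than the paper's at two points it glosses over --- deriving the $D$-part of $G(D)\subset K\cap D$ from $\ocircle p$ evaluated at $(0,1)$, and invoking $D\subset\dom G$ together with maximality to justify $(t,j+1)\in\dom\phi$ in the sufficiency direction.
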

\begin{proof}($\Rightarrow$) Suppose that $\ocircle p$ is satisfied for all solutions to $\mathcal{H}$. We need to show that a), b), c), and d) hold.
By definition of $\ocircle$ and definition of solutions to $\mathcal{H}$, since every solution $\phi$ to $\mathcal{H}$ satisfies $\ocircle p$, $\phi(0,0) \in D$ and $\phi(0,1) \in K$ for every $\phi(0,0) \in \overline{C} \cup D$.
In fact, if $\overline{C} \setminus D$ were not to be empty, then there would exist a (trivial) solution $\phi$ with $\phi(0,0) \notin D$, so $\ocircle p$ would not hold since $(0,1) \notin \dom\phi$.
Hence, $\overline{C} \subset D$ and $\phi(0,0) \in D$ hold; and thus, items a) and d) hold.
Next, we show that item b) holds.
Proceeding by contradiction, if flow is possible from a point $x \in C$, then there exists a solution $\phi$ to $\mathcal{H}$ with $\phi(0,0) = x$ and there exists $\varepsilon > 0$ such that $[0,\varepsilon) \times \{0\} \subset \dom\phi$. Since $x \in D$ due to $\overline{C} \subset D$, $\phi(0,0) \in D$. However, $(0,1) \notin \dom\phi$ since $[0,\varepsilon) \times \{0\} \subset \dom\phi$. This is a contradiction, and thus, item b) holds.
Finally, we show that item c) holds.
By definition of $\ocircle$, since every solution $\phi$ to $\mathcal{H}$ satisfies $\ocircle p$, then $(t, j+1) \in \dom\phi$ and $\phi(t, j+1) \in K$ for each $(t,j) \in \dom\phi$.
By definition of solutions, it implies that for each $(t,j) \in \dom\phi$, $\phi(t,j) = \xi \in D$ and $G(\xi) \subset K$. Hence, item c) holds.

($\Leftarrow$)
Note that $\phi(0,0) \in D$ and $(0,1) \in \dom\phi$ by items a) and b).
Then, by item c),
$G(\phi(0,0)) \subset K$ since $ \phi(0,0) \in D$.
Furthermore, for each $(t,j) \in \dom\phi$ such that $\phi(t,j) \in \overline{C} \cup D$, no flows are possible from $\phi(t,j)$ by items b) and d).
Thus, $(t,j+1) \in \dom\phi$ and $\phi(t,j+1) \in K$ by item c).
Therefore, $f = \ocircle p$ is satisfied for every solution $\phi$ to $\mathcal{H}$.
\end{proof}

  \subsection{Characterization of $\mathcal{U}$ via properties of the data of $\mathcal{H}$}

\label{sec:characterization_until}

According to the definition of the $\mathcal{U}_s$ operator, a solution $(t,j) \mapsto \phi(t,j)$ to a hybrid system $\mathcal{H} = (C,F,D,G)$ satisfies the formula
\begin{equation}
 f = p \: \mathcal{U}_s \, q
\label{eqn:until}
\end{equation}
at $(t,j) \in \dom\phi$ when there exists $(t',j') \in \dom \phi$ such that $t' + j' \geq t + j$ and $\phi(t',j')$ satisfies $q$; and $\phi(t'',j'')$ satisfies $p$ for all $(t'',j'') \in \dom \phi$ such that $t + j \leq t''+j'' < t'+j'$.
The set of points in $\mathcal{X}$ satisfying an atomic proposition $p$ or an atomic proposition $q$ are respectively given by
\begin{equation}
 P = \{ x \in \mathcal{X} : p(x) = 1\} \:\:\:\mbox{and}\:\:\:
 Q = \{ x \in \mathcal{X} : q(x) = 1\} .
\label{eqn:pq_sets}
\end{equation}

To guarantee that a solution $\phi$ to $\mathcal{H}$ satisfies $f$ in \eqref{eqn:until} at $(t,j) = (0,0)$, the solution needs to start and stay in the set $P$ at least until convergence to the set $Q$ happens; or the solution needs to start from the set $Q$.
For this purpose, we present necessary and sufficient conditions, related to the $\Diamond$ operator in Section~\ref{sec:characterization_eventual}, for convergence to the set $Q$.

\begin{proposition}
Given atomic propositions $p$ and $q$, let the sets $P$ and $Q$ be given in \eqref{eqn:pq_sets}.
Suppose the set $Q$ is nonempty.
The formula $f = p \,\mathcal{U}_s q$ is satisfied for all solutions $\phi$ to $\mathcal{H}$ at $(t,j) = (0,0)$ with $\phi(0,0) \Vdash p$ or $\phi(0,0) \Vdash q$ if and only if
every solution $\phi \in \mathcal{S}_{\mathcal{H}}(P \cup Q)$ satisfies that there exists $(t,j) \in \dom\phi$ for which all of the following properties hold simultaneously:
\begin{itemize}
  \item[a)] $\phi(t,j) \in Q$; and
  \item[b)] $\phi(t',j') \in P$ for all $(t',j') \in \dom\phi$ such that $t'+j' < t+j$.
\end{itemize}
\label{prop:until_strong}
\end{proposition}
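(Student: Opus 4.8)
The plan is to prove the two directions of the equivalence by directly unwinding the semantics of the strong until operator $\mathcal{U}_s$ together with the definition of what it means for $K$-type sets (here $P$ and $Q$) to be the sublevel sets of the propositions. The key observation is that the statement of the proposition is essentially a semantic reformulation: conditions a) and b) are nothing but the set-theoretic rendering of "$\exists (t',j')$ with $q$ satisfied there, and $p$ satisfied at all earlier times", specialized to the base point $(t,j)=(0,0)$ so that $t''+j''<t'+j'$ becomes exactly the condition in b). So the proof will be short and will not require any Lyapunov-type machinery, unlike the $\Diamond$ case.

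For the forward direction $(\Rightarrow)$, I would take an arbitrary $\phi\in\mathcal{S}_{\mathcal{H}}(P\cup Q)$. Then $\phi(0,0)\in P\cup Q$, i.e.~$\phi(0,0)\Vdash p$ or $\phi(0,0)\Vdash q$, which is precisely the hypothesis under which the premise guarantees $(\phi,(0,0))\vDash p\,\mathcal{U}_s q$. Unwinding the semantics of $\mathcal{U}_s$ at $(t,j)=(0,0)$: there exists $(t',j')\in\dom\phi$ with $t'+j'\geq 0$ such that $(\phi,(t',j'))\vDash q$, and $(\phi,(t'',j''))\vDash p$ for all $(t'',j'')\in\dom\phi$ with $0\leq t''+j''<t'+j'$. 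Translating "$\vDash q$" into "$\in Q$" via \eqref{eqn:pq_sets} gives a) with the choice $(t,j)=(t',j')$, and translating "$\vDash p$" into "$\in P$" for all earlier hybrid times gives b). So the required $(t,j)$ exists.

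For the converse direction $(\Leftarrow)$, I would take an arbitrary solution $\phi$ to $\mathcal{H}$ with $\phi(0,0)\Vdash p$ or $\phi(0,0)\Vdash q$; then $\phi(0,0)\in P\cup Q$, so $\phi\in\mathcal{S}_{\mathcal{H}}(P\cup Q)$ and the hypothesis yields $(t,j)\in\dom\phi$ satisfying a) and b). Reading a) as $(\phi,(t,j))\vDash q$ and b) as $(\phi,(t',j'))\vDash p$ for all $(t',j')\in\dom\phi$ with $t'+j'<t+j$, and noting $t+j\geq 0=0+0$, these are exactly the two clauses in the semantic definition of $(\phi,(0,0))\vDash p\,\mathcal{U}_s q$ (with the bound $0\leq t'+j'<t+j$ collapsing to $t'+j'<t+j$). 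Hence $f=p\,\mathcal{U}_s q$ is satisfied at $(0,0)$ for every such $\phi$.

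The only genuinely delicate point — and what I would flag as the main thing to get right rather than an "obstacle" per se — is the bookkeeping at the base point $(t,j)=(0,0)$: one must make sure the "$\geq$" in "$t'+j'\geq t+j$" of the semantics becomes the trivially-true "$t'+j'\geq 0$", and that the strict inequality window $t+j\leq t''+j''<t'+j'$ in the semantics coincides with the window $t'+j'<t+j$ used in item b) after the renaming of indices. Once this alignment is made explicit, both implications are immediate from the definitions of $\mathcal{U}_s$ and of the sets $P$, $Q$, so no further estimates, closedness, or solution-continuity arguments (as were needed for $\Diamond$) are required here.
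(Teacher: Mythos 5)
Your proposal is correct and follows essentially the same route as the paper: a direct unwinding of the $\mathcal{U}_s$ semantics at $(t,j)=(0,0)$ combined with the identification of $P$ and $Q$ as the sets where $p$ and $q$ hold. The paper's proof gives the same argument for the forward direction and simply declares the converse ``straightforward,'' so your explicit treatment of that direction is, if anything, slightly more complete.
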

\begin{proof}({$\Rightarrow$})
Suppose $p \,\mathcal{U}_s q$ is satisfied for all solutions $\phi$ to $\mathcal{H}$ at $(t,j) = (0,0)$ with $\phi(0,0) \Vdash p$ or $\phi(0,0) \Vdash q$.
Then, we need to show that every solution $\phi \in \mathcal{S}_{\mathcal{H}}(x)$ with $x \in P \cup Q$ satisfies that there exists $(t,j) \in \dom\phi$ satisfying items a) and b).
Since $\phi(0,0)$ satisfies $p$ or $q$, $\phi(0,0) \in P \cup Q$. By the definition of $\mathcal{U}_s$ and the definition of solutions to $\mathcal{H}$, since $p \,\mathcal{U}_s q$ is satisfied for every solution $\phi$ to $\mathcal{H}$ at $(t,j) = (0,0)$, 
$\phi(0,0) \in \overline{C} \cup D$ and
there exists $(t,j) \in \dom\phi$ such that
\begin{itemize}
 \item[1)] $\phi(t,j)$ satisfies $q$; and
 \item[2)] $\phi(t',j')$ satisfies $p$ for all $(t',j') \in \dom \phi$ such that $t'+j' < t+j$.
\end{itemize}
Since there exists $(t,j) \in \dom\phi$ such that $\phi(t,j) \in Q$ by item 1), item a) holds.
Additionally, by item 2), for all $(t',j') \in \dom\phi$ such that $t'+j' < t+j$, $\phi(t',j') \in P$; and thus, item b) holds.

($\Leftarrow$)
The proof is straightforward using the definition of the $\mathcal{U}_s$ operator.
\end{proof}

  \section{Sufficient Conditions for Temporal Formulas with One Operator using Hybrid Systems Tools}

\label{sec:sufficient_single}

\subsection{Sufficient Conditions for $\Box p$}

\label{subsec:sufficient_always}

In this section, we present sufficient conditions guaranteeing $f$ in \eqref{eqn:always}.
Due to the equivalence we provide in Section~\ref{subsec:always}, any sufficient condition that guarantees the needed invariance property of the set guarantees the satisfaction of the formula $\Box p$.
For example, in~\cite{185,Maghenem.Sanfelice.18.}, such invariance property for
hybrid systems is studied as follows:
\begin{itemize}
 \item Forward pre-invariance of a set in~\cite[Theorem 4.3]{185};
 \item Forward pre-invariance of a subset of the sublevel sets of a Lyapunov
  function in~\cite[Theorem 5.1]{185};
 \item Forward pre-invariance of a set defined by a barrier function
  in~\cite[Theorem 1]{Maghenem.Sanfelice.18.}.
\end{itemize}
By exploiting the results and the ideas in~\cite{Maghenem.Sanfelice.18.}, the
conditions given below provide sufficient conditions to verify that
$\mathcal{H}$ is such that every solution $\phi$ to $\mathcal{H}$ with
$\phi(0,0) \Vdash p$ satisfies $f = \Box p$.
Below, the concept of tangent cone of a set is used; see~\cite[Definition
5.12]{goebel2012hybrid}.
The tangent cone at a point $x \in \mathcal{X}$ of a set $K \subset \mathcal{X}$, denoted $T_K(x)$, is the set of all vectors $w \in \mathcal{X}$ for which there exists $x_i \in K$, $\tau_i > 0$ with $x_i \rightarrow x$ and $\tau_i \searrow 0$ such that $w = \tfrac{x_i-x}{\tau_i}$.
For a set $K \subset \mathcal{X}$, $U(K)$ denotes any open neighborhood of $K$ and $\partial K$ denotes its boundary.
Furthermore, the notion of barrier function candidate with respect to
$K$ for $\mathcal{H}$ is given as
follows~\cite{Maghenem.Sanfelice.18.}:
\begin{definition}[{Barrier Function Candidate}]
Consider a hybrid system $\mathcal{H} = (C,F,D,G)$ on $\mathcal{X}$. A function $B : \mathcal{X} \rightarrow \mathbb{R}$ is said to be a barrier function candidate with respect to $K$ for $\mathcal{H}$ if
\begin{equation}
\left\{
\begin{array}{ll}
 B(x) \leq 0 &\qquad\forall x \in K\\
 B(x) > 0 &\qquad\forall x \in (C \cup D \cup G(D)) \setminus K \mbox{.}
\end{array}
\right.
\label{eqn:barrier_candidate}
\end{equation}
\end{definition}

\begin{assumption}
The flow map $F$ is outer semicontinuous, nonempty, and locally bounded with convex images on $C$. Furthermore, the jump map $G$ is nonempty on $D$.
\label{assump:always}
\end{assumption}
\begin{theorem}
Consider a hybrid system $\mathcal{H} = (C,F,D,G)$ on $\mathcal{X}$ satisfying Assumption~\ref{assump:always}.
Given an atomic proposition $p$, suppose the state space $\mathcal{X}$ and the atomic proposition $p$ are such that $K$ in \eqref{eqn:K_set} is closed and $K \subset C \cup D$.
Then, the formula $f = \Box p$ is satisfied for all solutions $\phi$ to $\mathcal{H}$ (and for all $(t,j) \in \dom\phi$) with $\phi(0,0) \Vdash p$ if there exists a barrier function candidate $B$ with respect to $K$ for $\mathcal{H}$ as in \eqref{eqn:barrier_candidate} that is continuously differentiable and the following properties hold:
\begin{itemize}
 \item[1)] $\left\langle  \nabla B(x), \eta \right\rangle  \leq 0$ for all $x \in C \cap (U(\partial K) \setminus K)$ and all $\eta \in F(x) \cap T_C(x)$.
 \item[2)] $B(\eta) \leq 0$ for all $x \in D \cap K$ and all $\eta \in G(x)$.
 \item[3)] $G(D \cap K) \subset C \cup D$.
\end{itemize}
\label{thm:always}
\end{theorem}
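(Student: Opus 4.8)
The plan is to invoke Proposition~\ref{prop:always_fpi}, which reduces the task to showing that $K$ is forward pre-invariant for $\mathcal{H}$: every solution $\phi$ with $\phi(0,0)\in K$ satisfies $\rge\phi\subset K$. So the whole argument is really a barrier-function-based forward invariance proof in the spirit of~\cite{Maghenem.Sanfelice.18.}, adapted to use the local tangent-cone condition 1) only near $\partial K$.

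First I would set up the contradiction: suppose some solution $\phi$ starts in $K$ but leaves $K$, i.e.\ there is $(t^*,j^*)\in\dom\phi$ with $\phi(t^*,j^*)\notin K$, hence (using $K\subset C\cup D$ and the containment $\overline C\cup D\cup G(D)\subset\mathcal X$) $B(\phi(t^*,j^*))>0$ by \eqref{eqn:barrier_candidate}. I would then distinguish how $K$ is first exited. If it is exited by a jump, say $\phi(t,j)\in K$ and $\phi(t,j+1)\notin K$, then $\phi(t,j)\in D\cap K$ and $\phi(t,j+1)\in G(\phi(t,j))$, so condition 2) gives $B(\phi(t,j+1))\le 0$, contradicting $\phi(t,j+1)\notin K$ (via \eqref{eqn:barrier_candidate}, noting $\phi(t,j+1)\in G(D)\subset\mathcal X$ and condition 3) keeps it in $C\cup D$ so the ``$>0$ off $K$'' clause applies). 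The remaining case — exit during flow — is the substantive one: on some interval $I^j$ we have $\phi(t,j)$ passing from $K$ to outside $K$, so there is a first time $s$ at which $\phi$ reaches $\partial K$ from inside and leaves. Here I would use the continuity of $t\mapsto B(\phi(t,j))$ along the flow, the fact that $\phi(t,j)\in C$ during flow, and that $\dot\phi(t,j)\in F(\phi(t,j))\cap T_C(\phi(t,j))$ for almost every $t$ (the tangent-cone membership being the standard consequence of $\phi(t,j)\in C$ for a locally absolutely continuous arc staying in $C$). On a small interval just after $s$ the state lies in $C\cap(U(\partial K)\setminus K)$, so condition 1) yields $\frac{d}{dt}B(\phi(t,j))=\langle\nabla B(\phi(t,j)),\dot\phi(t,j)\rangle\le 0$ a.e.\ there; integrating gives $B(\phi(t,j))\le B(\phi(s,j))\le 0$, contradicting $B>0$ off $K$.

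The main obstacle I anticipate is making the ``first exit during flow'' argument rigorous: one must show that if $\phi$ leaves the closed set $K$ along a flow, it does so through the boundary $\partial K$, identify the crossing time $s$ with $\phi(s,j)\in\partial K\subset K$, and argue that immediately after $s$ the solution stays in the neighborhood $U(\partial K)$ where condition 1) is assumed (this is where Assumption~\ref{assump:always} — outer semicontinuity, local boundedness, convex values of $F$ — and closedness of $K$ matter, to control the arc and ensure the relevant nonsmooth-analysis facts apply, as in~\cite[Theorem 5.1]{185} and~\cite{Maghenem.Sanfelice.18.}). A secondary technical point is handling the measure-zero set of times where $\dot\phi$ may fail to exist or fail to lie in $F\cap T_C$, and the possibility that $K$ is re-entered and re-exited infinitely often; a clean way is to work with $t_e:=\inf\{t\in I^j:\phi(t,j)\notin K\}$ for the first flow interval on which exit occurs, show $\phi(t_e,j)\in\partial K$, and derive the contradiction on $[t_e,t_e+\varepsilon)$. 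Once the flow and jump cases are both excluded, no solution from $K$ can leave $K$, so $K$ is forward pre-invariant, and Proposition~\ref{prop:always_fpi} finishes the proof.
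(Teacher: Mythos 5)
Your proposal follows essentially the same route as the paper: reduce the claim to forward pre-invariance of $K$ and then invoke Proposition~\ref{prop:always_fpi} (plus the observation that invariance from $(0,0)$ propagates to all $(t,j)\in\dom\phi$). The only difference is that the paper disposes of the invariance step in one line by citing~\cite[Theorem 1]{Maghenem.Sanfelice.18.}, whereas you re-derive that result via the first-exit/contradiction argument; your reconstruction is consistent with the cited theorem's proof and correctly identifies where Assumption~\ref{assump:always}, the closedness of $K$, and the tangent-cone membership of $\dot\phi$ are needed.
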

\begin{proof}
Under conditions 1)--3), we conclude that the set $K$ in
 \eqref{eqn:K_set} is forward pre-invariant for $\mathcal{H}$
 using~\cite[Theorem 1]{Maghenem.Sanfelice.18.}.
Then, by Proposition~\ref{prop:always_fpi}, the formula $f = \Box p$ is satisfied for each solution $\phi$ to $\mathcal{H}$ at $(t,j) = (0,0)$ with $\phi(0,0) \Vdash p$ since the set $K$ is forward pre-invariant for $\mathcal{H}$.
Moreover, this property at $(t,j) = (0,0)$ implies $\phi(t,j) \Vdash p$ at each $(t,j) \in \dom\phi$; and thus, the formula $f = \Box p$ is satisfied for each solution $\phi$ to $\mathcal{H}$ and at each $(t,j) \in \dom \phi$ with $\phi(0,0) \Vdash p$.
\end{proof}

\begin{remark}
Note that $\Box p$ is satisfied for all solutions $\phi$ to $\mathcal{H}$ if $\phi(0,0) \Vdash p$ and $\phi(t,j) \Vdash p$ for all future hybrid time $(t,j) \in \dom \phi$.
Under the conditions in Theorem~\ref{thm:always}, solutions with $\phi(0,0) \nvDash p$ may satisfy $p$ after some time if $\phi$ reaches the set $K$ in \eqref{eqn:K_set} in finite time. Convergence to such set in finite hybrid time is presented in the next section.
\end{remark}

Next, the bouncing ball example in~\cite[Example 1.1]{goebel2012hybrid}
illustrates Theorem~\ref{thm:always}.
\begin{example}
Consider a hybrid system $\mathcal{H} = (C,F,D,G)$ modeling a ball bouncing vertically on the ground, with state $x = (x_1,x_2) \in \mathcal{X}:= \mathbb{R}^2$ and the data given by
\begin{equation}
\begin{split}
 F(x) &:= 
 \left[
\begin{array}{@{}c@{}}
  x_2\\
  -\gamma
\end{array}
 \right] \:\:\:\: \forall x \in C :=  \{x \in \mathcal{X} : x_1 \geq 0\} ,\\
 G(x) &:= 
 \left[
\begin{array}{@{}c@{}}
  0\\
  -\lambda x_2
\end{array}
 \right] \:\: \forall x \in D :=  \{x \in \mathcal{X} : x_1 = 0, x_2 \leq 0\},
\end{split}
\label{eqn:H_ball}
\end{equation}
where $x_1$ denotes the height above the surface and $x_2$ is the vertical velocity. The parameter $\gamma > 0$ is the gravity coefficient and $\lambda \in [0,1]$ is the restitution coefficient.
Every maximal solution to this system is Zeno.
Define an atomic proposition $p$ as follows: for every $x \in \mathcal{X}$, $p(x) = 1$ when $x \in C \cup D$ and $2 \gamma x_1 + (x_2 - 1) (x_2 + 1) \leq 0$; $p(x) = 0$ otherwise.
Let $K$ be given as in \eqref{eqn:K_set}. Then, we observe that the closed set $K$ is the sublevel set where the total energy of the ball is less than or equal to $1/2$.
The function $B(x) :=  2 \gamma x_1 + (x_2 - 1) (x_2 + 1)$ is a barrier function candidate since $B(x) \leq 0$ for all $x \in K$ and $B(x) > 0$ otherwise.
Then, we have $\left\langle  \nabla B(x), F(x) \right\rangle  = 0$ for each $x \in C$; and thus, condition 1) in  Theorem~\ref{thm:always} is satisfied.
Moreover, we have $B(G(x)) = 2 \gamma x_1 + \lambda^2 x_2^2 - 1 \leq 0$ for every $x \in D \cap K$ since $\lambda \in [0,1]$; and thus, condition 2) in  Theorem~\ref{thm:always} is satisfied.
Finally, since $G(D) = \{0\} \times \mathbb{R}_{\geq 0} \subset C \cup D$, condition 3) in Theorem~\ref{thm:always} is satisfied.
Therefore, via Theroem~\ref{thm:always}, the formula $f = \Box p$ is satisfied for each solution $\phi$ to $\mathcal{H}$ from $K$ and at each $(t,j) \in \dom\phi$. \hfill $\triangle$
\label{ex:bouncing_ball}
\end{example}

\begin{example}
Consider a hybrid system $\mathcal{H} = (C,F,D,G)$ modeling a constantly evolving timer system with the state $x = (\tau,h) \in \mathcal{X} :=  [0, \infty) \times \{0,1\} $ given by
\begin{equation}
\begin{split}
 F(x) &:= 
 \left[
\begin{array}{@{}c@{}}
  1\\
  0
\end{array}
 \right] \qquad\quad \forall x \in C :=  \{ x \in \mathcal{X} : 0 \leq \tau \leq T\},\\
 G(x) &:= 
 \left[
\begin{array}{@{}c@{}}
  0\\
  1-h
\end{array}
 \right] \quad\: \forall x \in D :=  \{ x \in \mathcal{X} : \tau \geq T\},
\end{split}
\end{equation}
where $\tau$ denotes a timer variable, $h$ is a logic variable, and $T$ is the period of the timer. 
Moreover, for each $x \in \mathcal{X}$ such that $0 \leq \tau \leq T$, $p(x) = 1$; otherwise, $p(x) = 0$.
Let $K$ be given as in \eqref{eqn:K_set}.
Consider the barrier function candidate $B(x) :=  \tau - T$.
We notice that $C \cap (U(\partial K) \setminus K) = \emptyset$; and thus, condition 1) in Theorem~\ref{thm:always} is trivially satisfied.
Moreover, we have $B(G(x)) = -T \leq 0$ for every $x \in D$; and thus, condition 2) in Theorem~\ref{thm:always} is satisfied.
Furthermore, since $G(D) = \{0\} \times \{0,1\} \subset C \cup D$, condition 3) in Theorem~\ref{thm:always} is satisfied.
Therefore, via Theorem~\ref{thm:always}, the formula $f = \Box p$ is satisfied for each solution $\phi$ to $\mathcal{H}$ and at each $(t,j) \in \dom \phi$. \hfill $\triangle$
\end{example}

 \subsection{Sufficient Conditions for $\Diamond p$}

\label{subsec:sufficient_eventually}

We now present sufficient conditions guaranteeing the formula $f$ in \eqref{eqn:f_eventually}.
Due to the equivalence we provide in Section~\ref{sec:characterization_eventual}, any sufficient condition that guarantees the FTA property of the set $K$ in \eqref{eqn:K_set} guarantees the satisfaction of the formula $\Diamond p$.
In that sense, we observe that the results on finite-time stability (FTS) for a set for hybrid systems in~\cite{188} and the results on recurrence for a set for hybrid systems in~\cite{subbaraman2016equivalence} can be applied to
derive sufficient conditions guaranteeing the desired FTA property.
In the following, by exploiting the results and the ideas in~\cite{188},
sufficient conditions are proposed to verify that $\mathcal{H}$ is such that
every solution $\phi$ to $\mathcal{H}$ satisfies $f = \Diamond p$;
%see~\cite[Appendix B]{han.arXiv19} for more details about sufficient conditions for FTA.
see Appendix~\ref{appendix:eventually} for more details about sufficient conditions for FTA.

As stated above, the satisfaction of the formula $f = \Diamond p$ is assured by conditions that guarantee that the set $K$ in \eqref{eqn:K_set} is FTA for $\mathcal{H}$, where
\begin{equation}
 p(x) =
 \left\{
\begin{array}{ll}
  1 & \qquad\mbox{if}\:\: x \in K \\
  0 & \qquad\mbox{otherwise.}\end{array}\right.
\end{equation}

In the following, we propose sufficient conditions to satisfy the formula $f = \Diamond p$.
Below, the function $V : \mathcal{X} \rightarrow \mathbb{R}$ is continuous on $\mathcal{X}$ and locally Lipschitz on a neighborhood of $C$.
Using Clarke generalized derivative, we define the functions $u_C$ and $u_D$ as follows: $u_C(x) :=  \max_{{v \in F(x)}} \: \max_{\zeta \in \partial V(x)} \langle \zeta,v \rangle$ for each $x \in C$, and $-\infty$ otherwise; $u_D(x) :=  \max_{\zeta \in G(x)} V(\zeta) - V(x)$ for each $x \in D$, and $-\infty$ otherwise, where $\partial V$ is the generalized gradient of $V$ in the sense of Clarke;
see Appendix~\ref{appendix:nonsmooth_ly} for more details.
Moreover, a function $\alpha : \mathbb{R}_{\geq 0} \mapsto \mathbb{R}_{\geq 0}$ is a class-$\mathcal{K}$ function, denoted by $\alpha \in \mathcal{K}$, if it is zero at zero, continuous, and strictly increasing and $\alpha$ is a class-$\mathcal{K}_\infty$ function, denoted by $\alpha \in \mathcal{K}_\infty$, if $\alpha \in \mathcal{K}$ and is unbounded.
Given a real number $s \in \mathbb{R}$, $\mbox{ceil}(s)$ denotes the smallest integer upper bound for $s$.

\begin{theorem}
Consider a hybrid system $\mathcal{H}=(C,F,D,G)$ on $\mathcal{X}$.
Given an atomic proposition $p$, suppose the state space $\mathcal{X}$ and the atomic proposition $p$ are such that $K$ in \eqref{eqn:K_set} is closed.
Suppose there exists an open set\footnote{A set $\mathcal{N}$ can be chosen as $\mathcal{N} = \mathcal{X}$ for the global version of FTA.} $\mathcal{N}$ that defines an open neighborhood of $K$ such that $G(\mathcal{N}) \subset \mathcal{N} \subset \mathcal{X}$.
Then, if
\begin{itemize}
  \item[1)]there exists a continuous function $V:\mathcal{N} \rightarrow \mathbb{R}_{\geq 0}$, locally Lipschitz on an open neighborhood of $C \cap \mathcal{N}$, and $c_1 > 0, c_2 \in [0,1)$ such that
\begin{itemize}
   \item[1.1)] for every $x \in \mathcal{N} \cap (\overline{C} \cup D)$ such that $p(x) = 0$, each $\phi \in \mathcal{S}_{\mathcal{H}} (x)$ satisfies
\begin{equation}
    \tfrac{V^{1-c_2} (x)}{c_1 (1-c_2)} \leq \sup_{(t,j) \in \text{dom}\,\phi} t \mbox{;}
   \label{eqn:sup_t}
\end{equation}
   \item[1.2)] the function $V$ is positive definite with respect to $K$ and
\begin{itemize}
    \item[1.2a)] for each $x \in \mathcal{X}$ such that $x \in C \cap \mathcal{N}$ and $p(x) = 0$, $u_C(x) + c_1 V^{c_2} (x) \leq 0$;
    \item[1.2b)] for each $x \in \mathcal{X}$ such that $x \in D \cap \mathcal{N}$ and $p(x) = 0$, $u_D(x) \leq 0$.
\end{itemize}
\end{itemize}
\end{itemize}
or
\begin{itemize}
 \item[2)] there exists a continuous function $V:\mathcal{N} \rightarrow \mathbb{R}_{\geq 0}$, locally Lipschitz on an open neighborhood of $C \cap \mathcal{N}$, and $c > 0$ such that
\begin{itemize}
   \item[2.1)] for every $x \in \mathcal{N} \cap (\overline{C} \cup D)$ such that $p(x) = 0$, each $\phi \in \mathcal{S}_{\mathcal{H}} (x)$ satisfies
\begin{equation}
     \mbox{ceil} \left( \tfrac{V(x)}{c} \right) \leq \sup_{(t,j) \in \text{dom}\,\phi} j \mbox{;}
    \label{eqn:sup_j}
\end{equation}
   \item[2.2)] the function $V$ is positive definite with respect to $K$ and
\begin{itemize}
    \item[2.2a)] for each $x \in \mathcal{X}$ such that $x \in C \cap \mathcal{N}$ and $p(x) = 0$, $u_C(x) \leq 0$;
    \item[2.2b)] for each $x \in \mathcal{X}$ such that $x \in D \cap \mathcal{N}$ and $p(x) = 0$, $u_D(x) \leq -\min\{c, V(x)\}$.
\end{itemize}
\end{itemize}
\end{itemize}
hold, then, the formula $f = \Diamond p$ is satisfied for every solution $\phi$ to $\mathcal{H}$ from $L_V(r) \cap (\overline{C} \cup D)$ at $(t,j) = (0,0)$ where $L_V(r) = \{x \in \mathcal{X} : V(x) \leq r\}$, $r \in [0, \infty]$, is a compact sublevel set of $V$ contained in $\mathcal{N}$.
Moreover, for each $\phi \in \mathcal{S}_{\mathcal{H}} (L_V(r) \cap (\overline{C} \cup D))$, defining $\xi = \phi(0,0)$, the first time $(t',j') \in \dom\phi$ such that $\phi(t',j') \Vdash p$ satisfies
\begin{equation}
 t'+j' = \mathcal{T}(\phi) ,
\label{eqn:settling_time}
\end{equation}
and an upper bound on that hybrid time is given as follows:
\begin{itemize}
 \item[a)] if 1) holds, then $\mathcal{T}$ is upper bounded by $\mathcal{T}^\star(\xi) + \mathcal{J}^\star(\phi)$, where $\mathcal{T}^\star(\xi) = \tfrac{V^{1-c_2} (\xi)}{c_1 (1-c_2)}$ and $\mathcal{J}^\star(\phi)$ is such that $(\mathcal{T}^\star(\xi), \mathcal{J}^\star(\phi)) \in \dom\phi$.
 \item[b)] if 2) holds, then $\mathcal{T}$ is upper bounded by $\mathcal{T}^\star(\phi) + \mathcal{J}^\star(\xi)$, where $\mathcal{J}^\star(\xi) = \mbox{ceil} \bigl( \tfrac{V(\xi)}{c} \bigr)$ and $\mathcal{T}^\star(\phi)$ is such that $(\mathcal{T}^\star(\phi), \mathcal{J}^\star(\xi)) \in \dom\phi$ and $(\mathcal{T}^\star(\phi), \mathcal{J}^\star(\xi) -1) \in \dom\phi$.
\end{itemize}
\label{thm:FTA}
\end{theorem}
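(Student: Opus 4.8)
The plan is to reduce the claim to a finite-time attractivity (FTA) statement for the set $K$ in \eqref{eqn:K_set} and then to verify that statement by a (nonsmooth) Lyapunov comparison argument carried out in hybrid time, treating the two alternative hypotheses 1) and 2) in parallel. The starting observation is that $p(x)=1$ precisely when $x\in K$, so along any solution $\phi$ the first $(t',j')\in\dom\phi$ with $\phi(t',j')\Vdash p$ is exactly the first hybrid time at which $\phi(t',j')\in K$; hence, if one shows that every $\phi\in\mathcal{S}_{\mathcal{H}}\bigl(L_V(r)\cap(\overline{C}\cup D)\bigr)$ reaches $K$ at a hybrid time $(t',j')$ and that the limit condition of Definition~\ref{def:FTA} holds with $\mathcal{T}(\phi)=\inf\{t+j:\phi(t,j)\in K\}$, then \eqref{eqn:settling_time} follows, $K$ is FTA for $\mathcal{H}$ with respect to $L_V(r)\cap(\overline{C}\cup D)$, and --- by the reasoning behind Proposition~\ref{prop:eventually_FTA}, applied with this reachable set in place of $\overline{C}\cup D$ --- the formula $f=\Diamond p$ holds at $(t,j)=(0,0)$ for all such $\phi$. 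It then remains to produce the reaching time and to bound it.

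Next I would establish forward invariance of the sublevel set $L_V(r)$ relative to the solutions under consideration, so that the Lyapunov inequalities, imposed only on $\mathcal{N}$, remain valid along the whole solution until it meets $K$. Since $V$ is positive definite with respect to $K$, and by 1.2a)--1.2b) (respectively 2.2a)--2.2b)) one has $u_C(x)\le 0$ for every $x\in C\cap\mathcal{N}$ with $p(x)=0$ and $u_D(x)\le 0$ for every $x\in D\cap\mathcal{N}$ with $p(x)=0$, the nonsmooth chain-rule/comparison tools of Appendix~\ref{appendix:nonsmooth_ly} give $\tfrac{d}{dt}V(\phi(t,j))\le u_C(\phi(t,j))\le 0$ for almost every $t$ on each flow interval off $K$, while $V(\phi(t,j+1))\le V(\phi(t,j))$ at each jump off $K$. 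Therefore $(t,j)\mapsto V(\phi(t,j))$ is nonincreasing while $\phi$ remains in $\mathcal{N}$ and outside $K$; together with $G(\mathcal{N})\subset\mathcal{N}\subset\mathcal{X}$ this keeps $\phi$ inside $L_V(r)\subset\mathcal{N}$ up to the first time it reaches $K$ (in the global case $\mathcal{N}=\mathcal{X}$ and this step is automatic).

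The finite-time bookkeeping is the core computation. Under 1), on each flow interval off $K$ the estimate in 1.2a) gives $\dot v\le -c_1 v^{c_2}$ with $v(t)=V(\phi(t,j))$ and $c_2\in[0,1)$; integrating this scalar inequality and accumulating across the (possibly infinitely many) flow intervals, with the interleaved jumps never increasing $v$, shows that the total ordinary time needed to drive $V(\phi(\cdot))$ from $V(\xi)$ to $0$ is at most $\mathcal{T}^\star(\xi)=\tfrac{V^{1-c_2}(\xi)}{c_1(1-c_2)}$. Condition 1.1) then forces $\sup_{(t,j)\in\dom\phi}t\ge\mathcal{T}^\star(\xi)$, so $\phi$ does not terminate before this much flow time is accumulated, and hence there is $(t',j')\in\dom\phi$ with $V(\phi(t',j'))=0$, i.e.\ $\phi(t',j')\in K$, with $t'\le\mathcal{T}^\star(\xi)$ and $j'$ no larger than the jump index $\mathcal{J}^\star(\phi)$ reached by ordinary time $\mathcal{T}^\star(\xi)$, so $\mathcal{T}(\phi)=t'+j'\le\mathcal{T}^\star(\xi)+\mathcal{J}^\star(\phi)$. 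Under 2), flows keep $V$ nonincreasing by 2.2a) while, by 2.2b), each jump off $K$ decreases $V$ by at least $\min\{c,V(\phi(t,j))\}$, so $V(\phi(\cdot))$ reaches $0$ after at most $\mathcal{J}^\star(\xi)=\mbox{ceil}\bigl(V(\xi)/c\bigr)$ jumps; condition 2.1) forces $\sup_{(t,j)\in\dom\phi}j\ge\mathcal{J}^\star(\xi)$, so $\phi$ does not terminate first and reaches $K$ at some $(t',j')$ with $j'\le\mathcal{J}^\star(\xi)$ and $t'=\mathcal{T}^\star(\phi)$, so $\mathcal{T}(\phi)=t'+j'\le\mathcal{T}^\star(\phi)+\mathcal{J}^\star(\xi)$. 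In both cases $V(\phi(t',j'))=0$ together with positive definiteness of $V$ with respect to $K$, closedness of $K$, and continuity of $x\mapsto|x|_K$ yields the limit in Definition~\ref{def:FTA} with $\mathcal{T}(\phi)=t'+j'$, which establishes the FTA property and, by the first paragraph, proves the theorem.

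I expect the main obstacle to be precisely this third step: converting the per-flow-interval scalar differential inequality, interleaved with nonincreasing jumps, into one clean bound on the reaching time, and --- crucially --- linking that purely analytic estimate (enough accumulated flow time under 1), or enough jumps under 2)) to the actual existence of a reaching hybrid time inside $\dom\phi$, which is what conditions 1.1) and 2.1) supply; one must also be careful throughout that $\phi$ does not leave $\mathcal{N}$, which is the reason the sublevel-set invariance of the second step is needed.
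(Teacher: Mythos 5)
Your proposal is correct and follows essentially the same route as the paper: the paper's proof simply verifies that hypotheses 1) and 2) match the hypotheses of Propositions~\ref{prop:FTA_flow} and \ref{prop:FTA_jump} in Appendix~B, concludes that $K$ is FTA, and then invokes Proposition~\ref{prop:eventually_FTA}. Your second and third paragraphs (sublevel-set invariance, the scalar comparison $\dot v\le -c_1 v^{c_2}$ accumulated across flow intervals with nonincreasing jumps, the per-jump decrease by $\min\{c,V\}$, and the use of 1.1)/2.1) to guarantee the domain is long enough) are precisely the content of those appendix propositions, so you have inlined the same argument rather than taken a different one.
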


\begin{proof}
Note that the set $K$ is closed and collects the set of points such that $p$ is satisfied.
%Now we show that the conditions in~\cite[Proposition B.1]{han.arXiv19}
% or~\cite[Proposition B.3]{han.arXiv19} hold for $K$.
Now we show that the conditions in Proposition~\ref{prop:FTA_flow} or Proposition~\ref{prop:FTA_jump} hold for $K$.
\begin{itemize}
 \item Item 1) implies that for every $x \in \mathcal{N} \cap (\overline{C} \cup D) \setminus K$, each $\phi \in \mathcal{S}_{\mathcal{H}}(x)$ satisfies \eqref{eqn:sup_t};
and the function $V$ is positive definite with respect to
  $K$; and $u_C(x) + c_1 V^{c_2} (x) \leq 0$ for every $x \in (C \cap \mathcal{N}) \setminus K$ and
  $u_D(x) \leq 0$ for all $x \in (D \cap \mathcal{N}) \setminus K$.
%  Thus, Proposition~\cite[Proposition B.1]{han.arXiv19} applies.
Thus, Proposition~\ref{prop:FTA_flow} applies.
 \item Item 2) implies that for every $x \in \mathcal{N} \cap (\overline{C} \cup D) \setminus K$, each $\phi \in \mathcal{S}_{\mathcal{H}}(x)$ satisfies \eqref{eqn:sup_j};
and the function $V$ is positive definite with respect to $K$;
and $u_C(x) \leq 0$ for every $x \in (C \cap \mathcal{N}) \setminus K$ and $u_D(x) \leq -\min\{c, V(x)\}$ for every
  $x \in (D \cap \mathcal{N}) \setminus K$.
%  Thus,~\cite[Proposition B.3]{han.arXiv19} applies.
Thus, Proposition~\ref{prop:FTA_jump} applies.
\end{itemize}
Therefore, $K$ is FTA for $\mathcal{H}$ if item 1) or 2) holds.
Then, by Proposition~\ref{prop:eventually_FTA}, the formula $f = \Diamond p$ is satisfied for all solutions to $\mathcal{H}$
at $(t,j) = (0,0)$.
\end{proof}

 Note that the conditions about the supremum over the hybrid time of a solution in \eqref{eqn:sup_t} and \eqref{eqn:sup_j} are due to not insisting on completeness of maximal solutions. When every maximal solution is complete, these conditions hold automatically.
See Remark~\ref{remark:complete_solutions_FTA} for more details.

\begin{remark}
Under condition 1.2) or 2.2) in Theorem~\ref{thm:FTA}, given a solution $\phi$ to $\mathcal{H}$, there exists some time $(t',j') \in \dom\phi$ such that $\phi$ satisfies $p$.
Furthermore, we have this satisfaction in finite time $(t', j')$, obtained by the settling-time function $\mathcal{T}$, for which an upper bound depends on the Lyapunov function and the solution only.
Note that a settling-time function $\mathcal{T}$ does not need to be computed. However, we provide an estimate of when convergence happens using an upper bound that depends on $V$ and the constants involved in items 1) and 2) only.
\end{remark}

\begin{remark}
Note that conditions \eqref{eqn:sup_t} and \eqref{eqn:sup_j} hold for free for complete solutions unbounded in $t$ or/and $j$ in their domain.
Moreover, maximal solutions are complete when the conditions
 in~\cite[Proposition~2.10 or Proposition~6.10]{goebel2012hybrid} hold.
Specifically, if maximal solutions $\phi$ are complete with $\dom\phi$ unbounded in its $t$ component, then \eqref{eqn:sup_t} holds automatically; and, if the solutions are complete with $\dom\phi$ unbounded in its $j$ component, then \eqref{eqn:sup_j} holds automatically.
\label{remark:complete_solutions_FTA}
\end{remark}

\begin{remark}
Item 1) in Theorem~\ref{thm:FTA} characterizes the situation when the formula $f = \Diamond p$ is being satisfied for all solutions $\phi$ to $\mathcal{H}$ due to the strict decrease of a Lyapunov function during flows.
Item 2) in Theorem~\ref{thm:FTA} provides conditions for $f$ to be satisfied for all solutions $\phi$ to $\mathcal{H}$ due to a Lyapunov function strictly decreasing at jumps.
Finally, we can combine the properties in item 1) and item 2) to arrive to
 strict Lyapunov conditions for verifying that $\mathcal{H}$ is such that
 every $\phi$ satisfies $f$ at $(t,j) = (0,0)$;
% see~\cite[Proposition B.4]{han.arXiv19}.
see Proposition~\ref{prop:FTA_jump2}.
\end{remark}

\begin{remark}
Based on the definition of recurrence for sets in~\cite[Definition
 1]{subbaraman2016equivalence}, the recurrence property could be used
 for certifying the formula $\Diamond p$. When the set $K$ that
 collects the set of points such that $p$ is satisfied is
 globally recurrent for a given hybrid system $\mathcal{H} = (C,F,D,G)$, for each
 complete solution $\phi \in \mathcal{S}_{\mathcal{H}}(C \cup D)$, there exists $(t,j) \in \dom \phi$ such that
 $\phi(t,j) \in K$; namely, it implies that $\phi$ satisfies
 $p$ at $(t,j) \in \dom\phi$.
In~\cite{subbaraman2016equivalence}, robustness of  recurrence  and equivalence
 between the uniform and non-uniform notions are established for open
 and bounded sets.
We observe that the recurrence property is studied with respect to open sets.
 Therefore, once we have an open, bounded set that collects the set of
 points satisfying $p$, we can employ the recurrence property
 to verify that $\Diamond p$ is satisfied. Furthermore, we can use the
 results on robustness of recurrence presented
 in~\cite{subbaraman2016equivalence} to derive the satisfaction of the
 formula $\Diamond p$ with robustness.
\end{remark}

In the following examples, the item 1) in Theorem~\ref{thm:FTA} is exercised.

\begin{example}
Inspired from~\cite[Example 3.3]{188}, consider a hybrid system $\mathcal{H} = (C,F,D,G)$
 with state $x = (z, \tau) \in \mathbb{R} \times [0,1]$ given by\footnote{The function $\mbox{sgn} : \mathbb{R} \rightarrow \{-1,1\}$ is
 defined as $\mbox{sgn}(x) = 1$ if $x \geq 0$, and $\mbox{sgn}(x) = -1$ otherwise.}
\begin{equation}
\begin{split}
 F(x) &:= 
 \left[
\begin{array}{@{}c@{}}
  -k | z| ^\alpha \mbox{sgn}(z)\\
  1
\end{array}
 \right] \quad \forall x \in C :=  \mathbb{R} \times [0,1],\\
 G(x) &:= 
 \left[
\begin{array}{@{}c@{}}
  -z\\
  0
\end{array}
 \right] \qquad\qquad\quad\:\:\, \forall x \in D :=  \mathbb{R} \times \{1\},
\end{split}
\label{eqn:H_FTA}
\end{equation}
where $\alpha \in (0,1)$ and $k > 0$.
Consider the function $V : \mathbb{R} \times [0,1] \rightarrow \mathbb{R}_{\geq 0}$ given by $V(x) = \tfrac{1}{2} z^2$ for each $x \in C$. Moreover, each $x \in C$ satisfies $p$ only when $x \in \{0\} \times [0,1]$.
Now we consider the set $K=\{x \in C : p(x) = 1\}$.
We have that, for each $x \in C \setminus K$,
\[
 \langle \nabla V(x), F(x) \rangle = -k| z| ^{1+\alpha} = -2^{\tfrac{1+\alpha}{2}} k V(x)^{\tfrac{1+\alpha}{2}} .
\]
Furthermore, for all $x \in D \setminus K$, $V(G(x)) - V(x) = 0$. Therefore, condition 1.2) in Theorem~\ref{thm:FTA} is satisfied with $\mathcal{N} = \mathbb{R} \times [0,1]$, $c_1 = 2^{\tfrac{1+\alpha}{2}} k > 0$ and $c_2 = \tfrac{1+\alpha}{2} \in (0,1)$.
By applying~\cite[Proposition 6.10]{goebel2012hybrid}, condition 1.1) in Theorem~\ref{thm:FTA} holds since every maximal solution to $\mathcal{H}$ is
 complete with its domain of definition unbounded in the $t$
 direction. Thus, the formula $f = \Diamond p$ is satisfied for all
 solutions $\phi$ to $\mathcal{H}$ at $(t,j) = (0,0)$. \hfill $\triangle$
\label{eqn:ex_2}
\end{example}

 Next, the bouncing ball example in Example~\ref{ex:bouncing_ball} illustrates Lyapunov conditions for verifying that $\Diamond p$ is satisfied for all solutions to $\mathcal{H}$ at $(t,j)=(0,0)$.

\begin{example}
Consider $\mathcal{H} = (C,F,D,G)$ in Example~\ref{ex:bouncing_ball}.
Define an atomic proposition $p$ as follows:
for each $x \in \mathcal{X}$, $p(x) = 1$ when $x_2 \leq 0$, and $p(x) = 0$ otherwise.
With $K$ in \eqref{eqn:K_set} and $\mathcal{N} = \mathcal{X}$, let $V(x) = | x_2| $ for all $x \in \mathcal{X}$.
This function is continuously differentiable on the open set $\mathcal{X} \setminus (\mathbb{R} \times \{0\})$ and it is Lipschitz on $\mathcal{X}$.
It follows that
\[
 \langle \nabla V(x), F(x) \rangle = -\gamma \qquad\quad \forall x \in (C \cap \mathcal{N}) \setminus K,
\]
and $u_C(x) + c_1 V^{c_2} (x) \leq 0$ holds with $c_1 = \gamma$ and $c_2=0$.
For each $x \in (D \cap \mathcal{N}) \setminus K$,
\[
V(G(x))  = -\lambda x_2 = | -\lambda x_2| = \lambda | x_2|  = \lambda V(x),
\]
and $u_D(x) = V(G(x)) - V(x) = \lambda V(x) - V(x) = -(1-\lambda) V(x)$.
 Thus, condition 1.2) in Theorem~\ref{thm:FTA} is satisfied since $(D \cap \mathcal{N}) \setminus K = \emptyset$.
 Note that by applying~\cite[Proposition 6.10]{goebel2012hybrid}, every maximal
 solution is complete and condition 1.1) in Theorem~\ref{thm:FTA} holds
 with the chosen constants $c_1$ and $c_2$ due to the
 properties of the hybrid time domain of each maximal solution.
Therefore, the formula $f = \Diamond p$ is satisfied for all maximal solutions to $\mathcal{H}$ at $(t,j)=(0,0)$.
Since every solution from $K$, after some time, jumps from $K$ and then converges to $K$ again in finite time, we have that $f = \Diamond p$ holds for every $(t,j)$ in the domain of each solution. \hfill $\triangle$
\end{example}

Note that Theorem~\ref{thm:FTA} guarantees that $\Diamond p$ is satisfied for all solutions $\phi$ to $\mathcal{H}$ at $(t,j) = (0,0)$.
These conditions can be extended to guarantee that $\Diamond p$ is satisfied for all $(t,j)$ in the domain of any solution if the set $K$ is forward pre-invariant or when only jumps are allowed from points in $K$ and the jump map maps points in $K$ into $\mathcal{N}$.

\begin{theorem}
Consider a hybrid system $\mathcal{H}=(C,F,D,G)$ on $\mathcal{X}$.
Given an atomic proposition $p$, suppose the state space $\mathcal{X}$ and the atomic proposition $p$ are such that $K$ in \eqref{eqn:K_set} is closed and that there exists an open set $\mathcal{N}$ that defines an open neighborhood of $K$ such that $G(\mathcal{N}) \subset \mathcal{N} \subset \mathcal{X}$.
Then, if there exists a continuous function $V:\mathcal{N} \rightarrow \mathbb{R}_{\geq 0}$, locally Lipschitz on an open neighborhood of $C \cap \mathcal{N}$, and $c, c_1 > 0$, $c_2 \in [0,1)$,
such that each $\phi \in \mathcal{S}_{\mathcal{H}} (L_V(r) \cap (C \cup D))$ is complete,
$G(D \cap K) \subset L_V(r) \cap (C \cup D)$, and
at least one among items 1.2) and 2.2) in Theorem~\ref{thm:FTA} holds, then, the formula $f = \Diamond p$ is satisfied for every solution $\phi$ to $\mathcal{H}$ from $L_V(r) \cap (C \cup D)$ and for all $(t,j)$ in the domain of each solution, where $L_V(r) = \{x \in \mathcal{X} : V(x) \leq r\}$, $r \in [0, \infty]$ is a compact sublevel set of $V$ contained in $\mathcal{N}$.
\label{thm:FTA_all}
\end{theorem}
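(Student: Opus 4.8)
The plan is to show that every complete solution $\phi$ from $L_V(r)\cap(C\cup D)$ both remains in $L_V(r)\cap(C\cup D)$ for all hybrid time and, by the finite time attractivity of $K$ established in Theorem~\ref{thm:FTA}, visits $K$ at arbitrarily large hybrid times; the latter is exactly what is needed for $\Diamond p$ to hold at \emph{every} $(t,j)\in\dom\phi$. Concretely, I would fix a maximal solution $\phi\in\mathcal{S}_{\mathcal{H}}(L_V(r)\cap(C\cup D))$, which is complete by hypothesis. Since $K=\{x\in\mathcal{X}:p(x)=1\}$, to conclude $(\phi,(t,j))\vDash\Diamond p$ for every $(t,j)\in\dom\phi$ it suffices to show that $\sup\{t'+j':(t',j')\in\dom\phi,\ \phi(t',j')\in K\}=\infty$, so this is what I would establish.

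First I would prove the auxiliary invariance claim: $\phi(t,j)\in L_V(r)\cap(C\cup D)$ for every $(t,j)\in\dom\phi$. Membership in $C\cup D$ is immediate from completeness, since the solution continues past every point of its domain. For $V(\phi(t,j))\le r$, note that $V(\phi(0,0))\le r$ and argue inductively over jumps with a bootstrap over each flow interval: along flow, condition 1.2a) (resp.\ 2.2a)) gives $u_C\le 0$ on $(C\cap\mathcal{N})\setminus K$, while on $K$ the value $V=0$ is the minimum of $V\ge 0$, so $t\mapsto V(\phi(t,j))$ is nonincreasing as long as $\phi(\cdot,j)$ stays in $\mathcal{N}$ --- and it does stay in $\mathcal{N}$ because $L_V(r)\subset\mathcal{N}$ and $V$ stays $\le r$; at a jump from $(D\cap\mathcal{N})\setminus K$, condition 1.2b) (resp.\ 2.2b)) gives $V(\phi(t,j{+}1))\le V(\phi(t,j))\le r$; and at a jump from $D\cap K$, the hypothesis $G(D\cap K)\subset L_V(r)\cap(C\cup D)$ gives $\phi(t,j{+}1)\in L_V(r)$ directly. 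Hence $\phi$ never leaves $L_V(r)\cap(C\cup D)$.

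With this in hand, fix any $(t^\ast,j^\ast)\in\dom\phi$ and consider the tail of $\phi$ starting at $(t^\ast,j^\ast)$, suitably reparametrized; it is a complete, hence maximal, solution of $\mathcal{H}$ whose initial state $\phi(t^\ast,j^\ast)$ lies in $L_V(r)\cap(C\cup D)$. The hypotheses of Theorem~\ref{thm:FTA} are in force here --- $K$ is closed, $\mathcal{N}$ is an open neighborhood of $K$ with $G(\mathcal{N})\subset\mathcal{N}$, $V$ satisfies 1.2) or 2.2), and completeness of all solutions from $L_V(r)\cap(C\cup D)$ supplies 1.1) (resp.\ 2.1)) for free via Remark~\ref{remark:complete_solutions_FTA} --- so $K$ is finite time attractive with respect to $L_V(r)\cap(C\cup D)$ (Proposition~\ref{prop:eventually_FTA}) and this tail reaches $K$ in finite hybrid time: there is $(t',j')\in\dom\phi$ with $t'+j'\ge t^\ast+j^\ast$ and $\phi(t',j')\in K$. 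Taking $(t^\ast,j^\ast)$ arbitrarily far out along the complete domain of $\phi$ shows $\sup\{t'+j':(t',j')\in\dom\phi,\ \phi(t',j')\in K\}=\infty$; applying this with $(t^\ast,j^\ast)=(t,j)$ for each $(t,j)\in\dom\phi$ yields $(\phi,(t,j))\vDash\Diamond p$, and since $\phi$ was an arbitrary solution from $L_V(r)\cap(C\cup D)$ the theorem follows.

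I expect the main obstacle to be the invariance claim of the second paragraph: the decrease conditions 1.2a)--1.2b) and 2.2a)--2.2b) are stated only on $\mathcal{N}\setminus K$, so one must carefully pass to monotonicity of $V$ along $\phi$ through the set $K$ (using continuity of $V$, its positive definiteness with respect to $K$, and local Lipschitzness to produce an almost-everywhere derivative bound along flow), and one must interleave this with the $G(D\cap K)$ hypothesis and a bootstrap that keeps $\phi$ inside $\mathcal{N}$ via $L_V(r)\subset\mathcal{N}$ and compactness of $L_V(r)$; without this step one cannot legitimately reapply Theorem~\ref{thm:FTA} to the tails, since the tails could otherwise leave the region where the Lyapunov estimates are valid. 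A minor additional point is verifying that a tail of a complete maximal solution from $L_V(r)\cap(C\cup D)$ is again a maximal solution of $\mathcal{H}$ from that set.
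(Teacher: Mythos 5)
Your proposal is correct and follows essentially the same route as the paper: both arguments hinge on re-applying the finite-time-attractivity certificate of Theorem~\ref{thm:FTA} after each departure from $K$, using completeness to supply conditions 1.1)/2.1), the flow conditions to rule out leaving $K$ (or $L_V(r)$) during flow, and $G(D\cap K)\subset L_V(r)\cap(C\cup D)$ to land back in the basin after a jump out of $K$. You merely package this as an explicit invariance lemma for $L_V(r)\cap(C\cup D)$ followed by applying FTA to every tail, whereas the paper iterates the argument at the hybrid times when the solution leaves $K$; this is a cosmetic reorganization (arguably a slightly more careful one), not a different proof.
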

\begin{proof}
The set $K$ is closed and collects the points such that $p$ is satisfied.
We first show the case when item 1.2) in Theorem~\ref{thm:FTA} holds. Since each solution $\phi \in \mathcal{S}_{\mathcal{H}} (L_V(r) \cap (C \cup D))$ is complete, this implies that
there exists $(t_1, j_1) \in \dom \phi$ such that
\[
 \lim_{t+j \nearrow t_1 + j_1} | \phi(t,j)| _K = 0.
\]
If there exists $(t_2,j_2) \in \dom \phi$ such that $\phi(t_2,j_2) \notin K$, then $\phi$ left $K$ by jumping since condition 1.2a) in Theorem~\ref{thm:FTA} does not allow flowing out of $K$. However, if that is the case, then $\phi(t_2,j_2) \in L_V(r) \cap (C \cup D)$ since $G(D \cap K) \subset L_V(r) \cap (C \cup D)$; and then, due to completeness of $\phi$, there exists $(t_3, j_3)$ such that $\lim_{t+j \nearrow t_3 + j_3} | \phi(t,j)| _K = 0$.
Thus, proceeding in this way for all hybrid time instant that the solution leaves $K$,
condition 1) in Theorem~\ref{thm:FTA} holds and for every $(t,j)$ in the domain of each solution $\phi$.
Therefore, the formula $f = \Diamond p$ is satisfied for all solutions to $\mathcal{H}$ from $L_V(r) \cap (C \cup D)$ for every $(t,j)$ in the domain of each solution.
The proof for the cases when item 2.2) in Theorem~\ref{thm:FTA} holds follows similarly.
\end{proof}

The following example about the firefly model in~\cite[Example
25]{goebel2009hybrid} illustrates Theorem~\ref{thm:FTA_all}.

\begin{example}
Consider the hybrid system $\mathcal{H} = (C,F,D,G)$ modeling two impulsive oscillators capturing the dynamics of two fireflies. This system has the state $x = (x_1,x_2) \in \mathbb{R}^2$ and the data given by
\begin{equation}
\begin{split}
 F(x) &:= 
 \left[
\begin{array}{@{}c@{}}
  \gamma\\
  \gamma
\end{array}
 \right] \;\;\qquad\qquad\quad \forall x \in C :=  [0,1] \times [0,1],\\
 G(x) &:= 
 \left[
\begin{array}{@{}c@{}}
  g((1+\tilde{\varepsilon})x_1)\\
  g((1+\tilde{\varepsilon})x_2)
\end{array}
 \right] \:\:\: \forall x \in D :=  \{x \in C : \max{\{x_1,x_2\}} = 1\},
\end{split} 
\end{equation}
where $\gamma > 0$ and the parameter $\tilde{\varepsilon} > 0$ denotes the effect on the timer of a firefly when the timer of the other firefly expires, and the set-valued map $g$ is given by $g(z) = z$ when $z < 1$; $g(z) = 0$ when $z > 1$; $g(z) = \{0,1\}$ when $z = 1$.
Define $p$ as follows: for each $x \in \mathbb{R}^2$, $p(x) = 1$ when $x \in C$ and $x_1 = x_2$, and $p(x) = 0$ otherwise.
Then, the set $K$ is $\{x \in C : p(x) = 1\}$.
Let $k = \tfrac{\tilde{\varepsilon}}{2 + \tilde{\varepsilon}}$ and
note that $\tfrac{1+\tilde{\varepsilon}}{2 + \tilde{\varepsilon}} = \tfrac{1+k}{2}$.
Define
\[
V(x) :=  \min{\{ | x_1-x_2| , 1+k-| x_1-x_2| \}}
\]
for all $x \in \mathcal{X} :=  \{x \in \mathbb{R}^2 : V(x) < \tfrac{1+k}{2}\} = \{x \in \mathbb{R}^2 : | x_1 - x_2|  \neq \tfrac{1+k}{2} \}$.
This function is continuously differentiable on the open set $\mathcal{X} \setminus K$ and it is Lipschitz on $\mathcal{X}$.
Let $m^\star = \tfrac{1+k}{2}$ and $m \in (0, m^\star)$.
Consider $C_m = C \cap M$ and $D_m = D \cap M$, where $M :=  \{ z \in C \cup D : V(x) \leq m\}$.
By the definition of $V$, it follows that
\[
 \langle \nabla V(x), F(x) \rangle = 0 \qquad \forall x \in C_m \setminus K .
\]
We now consider $x \in D_m \setminus K$.
Since $V$ is symmetric, without loss of generality, consider $x = (1, x_2) \in D_m \setminus K$ where $x_2 \in [0,1] \setminus \{\tfrac{1}{2+\tilde{\varepsilon}}\}$.\footnote{Since $\left(1, \tfrac{1}{2 + \tilde{\varepsilon}}\right) \in \left\{x \in \mathcal{X} : V(x) = \tfrac{1+k}{2}\right\}$.}
Then, we obtain
\[
\begin{split}
 V(x) &= \min \{1-x_2, k+x_2\},\\
 V(G(x)) &= \min \{ g((1+\tilde{\varepsilon})x_2), 1+k-g((1+\tilde{\varepsilon}) x_2) \} .
\end{split}
\]
When $g((1 + \tilde{\varepsilon})x_2) = 0$, it follows that $V(G(x)) = 0$.
When $g((1 + \tilde{\varepsilon})x_2) = (1 + \tilde{\varepsilon}) x_2$, there are two cases:
\begin{itemize}
 \item[a)] $x_2 < \tfrac{1}{2 +\tilde{\varepsilon}}$, $V(x) = k + x_2 > (1 + \tilde{\varepsilon}) x_2 \geq V(G(x))$;
 \item[b)] $x_2 > \tfrac{1}{2 + \tilde{\varepsilon}}$, $V(x) = 1 - x_2 \geq V(G(x))$.
\end{itemize}
Thus, $V(G(x)) - V(x) \leq 0$ for all $x \in D_m \setminus K$.
By applying~\cite[Proposition 6.10]{goebel2009hybrid}, every maximal solution
 to the hybrid system $\mathcal{H}_m = (C_m, F, D_m, G)$ is complete.
Moreover, given $\tilde{\varepsilon} > 0$, for $\varepsilon = \tfrac{\tilde{\varepsilon}}{1 + \tilde{\varepsilon}}$ and $m$ such that $(K + \varepsilon \mathbb{B}) \cap C \subset C_m$, we have that for all $x \in D_m \cap (K + \varepsilon \mathbb{B})$, $G(x) = 0 \in K$.
Therefore, it follows from Theorem~\ref{thm:FTA_all} that the formula $f = \Diamond p $ is satisfied for every solution $\phi$ to $\mathcal{H}$ from $\mathcal{N} :=  \{x \in C \cup D : V(x) < m\}$ for all $(t,j)$ in the domain of each solution.
\hfill $\triangle$
\end{example}
 \subsection{Sufficient Conditions for $\ocircle p$}

\begin{theorem}
Given an atomic proposition $p$, let the set $K$ be as in \eqref{eqn:K_set}.
The formula $f = \ocircle p$ is satisfied for all solutions $\phi$ to $\mathcal{H}$ at each $(t,j) \in \dom\phi$ if the properties a), b), and c) in Proposition~\ref{prop:next} hold simultaneously.
\label{thm:next}
\end{theorem}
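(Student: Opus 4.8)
The plan is to derive the statement from the sufficiency $(\Leftarrow)$ part of the proof of Proposition~\ref{prop:next}, after noticing that property~d) there is invoked only to forbid solutions beginning in $\overline{C}\setminus D$, a configuration already incompatible with~a) for nontrivial maximal solutions; so a), b), c) alone will do. Concretely, I would fix an arbitrary maximal solution $\phi$ to $\mathcal{H}$ and establish that, for every $(t,j)\in\dom\phi$, one has $(t,j+1)\in\dom\phi$ and $\phi(t,j+1)\in K$, which is precisely $(\phi,(t,j))\vDash\ocircle p$, hence $f=\ocircle p$ along $\phi$.

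First, by~b) no flow of $\mathcal{H}$ is possible from any $x\in C$, so $\phi$ admits no flowing interval along its domain; thus $\dom\phi\subset\{0\}\times\mathbb{N}$, and $\phi$ is nontrivial precisely when it jumps at least once. If $\phi$ is nontrivial, a) gives $\phi(0,0)\in D$, so, $\phi$ being maximal and unable to flow, it must jump: $(0,1)\in\dom\phi$, and by~c), $\phi(0,1)\in G(D)\subset K\cap D$. I would then argue by induction: if $(0,j)\in\dom\phi$ with $j\ge 1$ and $\phi(0,j)\in K\cap D$, then~b) again forbids flow from $\phi(0,j)\in D$, so maximality forces $(0,j+1)\in\dom\phi$, and~c) yields $\phi(0,j+1)\in G(D)\subset K\cap D$. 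Hence $\phi(0,j)\in K\cap D$ for all $(0,j)\in\dom\phi$ with $j\ge1$, so $\phi(0,j+1)\in K$ for every $(0,j)\in\dom\phi$; that is, $f=\ocircle p$ holds along $\phi$. The only remaining configuration is a degenerate maximal solution with $\dom\phi=\{(0,0)\}$; since the standing assumption $D\subset\dom G$ makes a jump available from every point of $D$, such a solution must have $\phi(0,0)\notin D$, which is exactly the case set aside when $\phi$ is taken nontrivial, so it imposes no extra requirement.

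The point requiring the most care is the bookkeeping that lets us dispense with d): showing that, under a), b), c), a maximal solution never visits $\overline{C}\setminus D$ after the initial instant. This was the role of d) in the equivalence of Proposition~\ref{prop:next}, but here it is automatic, because every post‑jump value lies in $G(D)\subset K\cap D\subset D$ by~c), while the initial value of a nontrivial solution lies in $D$ by~a). Beyond that, the proof is a routine unwinding of the definition of a solution to $\mathcal{H}$, the structure of hybrid time domains, and the semantics of $\ocircle$.
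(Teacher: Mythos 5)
Your induction for nontrivial maximal solutions is sound and matches the mechanism of the $(\Leftarrow)$ direction of Proposition~\ref{prop:next}: once the solution is in $D$ it can only jump (by b) and the standing assumption $D\subset\dom G$), and c) traps every post-jump value in $K\cap D$, so d) is indeed never needed \emph{along} a nontrivial solution. The gap is in your last step, where you dismiss the trivial maximal solutions. A maximal solution $\phi$ with $\dom\phi=\{(0,0)\}$ and $\phi(0,0)\in\overline{C}\setminus D$ is a legitimate solution (the definition only requires $\phi(0,0)\in\overline{C}\cup D$; every other clause holds vacuously), the hybrid time $(0,0)$ belongs to its domain, and $(\phi,(0,0))\nvDash\ocircle p$ because $(0,1)\notin\dom\phi$. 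Since the conclusion quantifies over \emph{all} solutions and \emph{all} $(t,j)\in\dom\phi$, this single instant falsifies it. Observing that such a solution ``is exactly the case set aside when $\phi$ is taken nontrivial'' only shows that condition a) does not constrain it; it does not show the conclusion holds for it. This is precisely the configuration that the paper's own $(\Rightarrow)$ argument in Proposition~\ref{prop:next} uses to derive the \emph{necessity} of d): if $\overline{C}\setminus D\neq\emptyset$, a trivial solution violating $\ocircle p$ exists.

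Conditions a)--c) do not exclude this situation. Take $\mathcal{X}=\mathbb{R}$, $C=\{1\}$ with $F(1)=\{1\}$ (no flow is possible from $1$, since flowing would force $\phi\equiv 1$ and $\dot\phi=0\notin F(1)$), $D=\{0\}$, $G(0)=\{0\}$, and $p(x)=1$ iff $x=0$, so $K=\{0\}$. Then a) holds (every nontrivial solution starts at $0\in D$), b) holds, and c) holds since $G(D)=\{0\}=K\cap D$; yet the maximal solution from $\phi(0,0)=1$ is trivial and fails $\ocircle p$ at $(0,0)$. So the claim that d) is dispensable cannot be repaired: either d) must be reinstated (as in the paper's sufficiency proof, where b) and d) together guarantee that a jump is available at every point of every solution's domain), or the quantification must be restricted to nontrivial solutions. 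The rest of your bookkeeping --- $\dom\phi\subset\{0\}\times\mathbb{N}$ under b), and the induction $\phi(0,j+1)\in G(D)\subset K\cap D$ --- is correct; the defect is confined to the initial instant of solutions that never jump.
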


\begin{remark}
By the definition of \emph{next} operator, one could consider that the flow set $C$ is empty to specify $\ocircle p$ for all solutions $\phi$ to $\mathcal{H}$. Under this assumption, $\mathcal{H}$ reduces to a discrete-time system.
\end{remark}

The following example illustrates the sufficient conditions in  Theorem~\ref{thm:next} to guarantee the satisfaction of $\ocircle p$.
\begin{example}
 Let a hybrid system $\mathcal{H} = (C,F,D,G)$ with the state $x \in \mathbb{R}$ and data given by
\begin{equation}
  D :=  \mathbb{R} , \quad G(x) :=  \mbox{sgn} (x) ,
\end{equation}
$C$ is empty, and the flow map $F$ is arbitrary.
The function sgn$(x)$ is defined in Example~\ref{eqn:ex_2}, and $p(x) = 1$ if $| x|  = 1$. Let $K :=  \{-1,1\}$. By using the map $G$, for every $x \in D \cap K$, $G(x) \in K$; for every $x \in D \setminus K$, $G(x) \in K$. Therefore, the formula $f=\ocircle p$ is satisfied for all solutions to $\mathcal{H}$.
\end{example}
 \subsection{Sufficient Conditions for $p \,\mathcal{U} q$}

We present sufficient conditions guaranteeing $f$ in \eqref{eqn:until} by applying the results in Section~\ref{subsec:sufficient_always} and Section~\ref{subsec:sufficient_eventually}.
Note that the \emph{until} operator is characterized as \emph{strong until} ($\mathcal{U}_s$) or \emph{weak until} ($\mathcal{U}_w$).
First, we present sufficient conditions for the formula having the weak until operator. The following result is immediate.

\begin{theorem}
Consider a hybrid system $\mathcal{H}=(C,F,D,G)$ on $\mathcal{X}$ and two atomic propositions $p$ and $q$.
Suppose every $x \in \mathcal{X}$ satisfies either $p(x) = 1$ or $q(x) = 1$.
Then, the formula $f = p \,\mathcal{U}_w q$ is satisfied for every solution $\phi$ to $\mathcal{H}$ at every $(t,j) \in \dom\phi$.
\label{thm:until_1}
\end{theorem}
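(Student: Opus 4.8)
The plan is to unfold the semantics of the weak until operator and reduce the statement to a simple dichotomy. Fix a solution $\phi$ to $\mathcal{H}$ and a point $(t,j)\in\dom\phi$; we must show $(\phi,(t,j))\vDash p\,\mathcal{U}_w q$. By definition this holds if either (i) $(\phi,(t',j'))\vDash p$ for every $(t',j')\in\dom\phi$ with $t'+j'\geq t+j$, or (ii) $(\phi,(t,j))\vDash p\,\mathcal{U}_s q$. Write $P:=\{x\in\mathcal{X}:p(x)=1\}$ and $Q:=\{x\in\mathcal{X}:q(x)=1\}$; the standing hypothesis is precisely $P\cup Q=\mathcal{X}$, so at every $(t',j')\in\dom\phi$ we have $\phi(t',j')\in P$ (i.e.\ $(\phi,(t',j'))\vDash p$) or $\phi(t',j')\in Q$ (i.e.\ $(\phi,(t',j'))\vDash q$). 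If $\phi(t',j')\in P$ for all $(t',j')\in\dom\phi$ with $t'+j'\geq t+j$, then clause (i) holds verbatim and we are done.

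Otherwise there is a future time at which $\phi$ leaves $P$, and the goal becomes to verify clause (ii). The idea is to take the witness $(t',j')$ in the definition of $p\,\mathcal{U}_s q$ to be the first hybrid time at or after $(t,j)$ at which $\phi$ is outside $P$. Concretely, set $\sigma:=\inf\{t'+j':(t',j')\in\dom\phi,\ t'+j'\geq t+j,\ \phi(t',j')\notin P\}$, which is finite and satisfies $\sigma\geq t+j$ by the preceding observation. I would then argue that this infimum is attained by some $(t',j')\in\dom\phi$ with $t'+j'=\sigma$ and $\phi(t',j')\notin P$; since $P\cup Q=\mathcal{X}$, such a point lies in $Q$, so $(\phi,(t',j'))\vDash q$. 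By the definition of $\sigma$, every $(t'',j'')\in\dom\phi$ with $t+j\leq t''+j''<t'+j'$ has $\phi(t'',j'')\in P$, i.e.\ $(\phi,(t'',j''))\vDash p$. These are exactly the two requirements in the semantics of $p\,\mathcal{U}_s q$ at $(t,j)$, so clause (ii) holds. As $\phi$ and $(t,j)$ were arbitrary, $p\,\mathcal{U}_w q$ is satisfied by every solution at every point of its domain.

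The only step that goes beyond a direct appeal to the definitions, and hence the one I expect to be the main obstacle, is showing that the infimum $\sigma$ is attained by a domain point lying outside $P$. Here one uses the structure of a hybrid time domain (for each jump index the time section is an interval, and only finitely many jumps precede any fixed hybrid time, so the computation of $\sigma$ reduces to a single flow interval) together with the local absolute continuity, hence continuity, of $t\mapsto\phi(t,j')$ along that time section. This is immediate when the truth set $P$ is open, and otherwise requires a careful limiting argument; it is the place where the topology of $P$ and the geometry of $\dom\phi$ interact.
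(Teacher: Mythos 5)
Your strategy coincides with the paper's: split on whether $p$ holds at all future hybrid times and, in the remaining case, use the first exit from $P$ as the witness for $p\,\mathcal{U}_s q$. The paper's own proof is a two-sentence informal version of exactly this dichotomy (``every solution that has not converged to $Q$ remains in $P$ at least until it converges to $Q$'') and never touches the attainment issue. The step you single out as the ``main obstacle'' is therefore the whole content of the theorem, and it is not a technicality that a careful limiting argument will dispose of: under the stated hypotheses it is a genuine obstruction, and the statement as written fails. Take $\mathcal{X}=C=\mathbb{R}$, $F\equiv 1$, $D=\emptyset$, $p(x)=1$ iff $x\leq 1$, $q(x)=1$ iff $x>1$. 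Then $P\cup Q=\mathbb{R}$, but for the solution $\phi(t,0)=t$ from $0$ the first disjunct of $\mathcal{U}_w$ fails (since $p(\phi(2,0))=0$), and $\mathcal{U}_s$ fails too: any witness $(t',0)$ with $\phi(t',0)\in Q$ has $t'>1$, and then every $t''\in(1,t')$ violates $p$. Here your $\sigma$ equals $1$, it is attained only by a point of $P\setminus Q$, and no strong-until witness exists. So the gap you flagged cannot be closed without adding a hypothesis that neither you nor the theorem states.

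Two further remarks. First, the target of the limiting argument should be $\phi$ at hybrid time $\sigma$ lying in $Q$, not $\sigma$ being attained by a point outside $P$; the latter is strictly stronger and can fail even in benign cases (e.g.\ $P=(-\infty,1]$, $Q=[1,\infty)$, where the exit times from $P$ are exactly $t>1$ and their infimum $1$ lies in $P$, yet $(1,0)$ is a perfectly good witness because $\phi(1,0)\in Q$). Second, with the repaired target your plan does go through once one assumes $Q$ closed (the convention the paper adopts in its other until-result, Theorem~\ref{thm:until_2}) or $P$ open: reduce to a single flow interval using the structure of the hybrid time domain as you describe, take a minimizing sequence $t_k\searrow\sigma-j^\star$ with $\phi(t_k,j^\star)\notin P\subset\mathcal{X}$, hence $\phi(t_k,j^\star)\in Q$ by $P\cup Q=\mathcal{X}$, and conclude $\phi(\sigma-j^\star,j^\star)\in Q$ by continuity of $t\mapsto\phi(t,j^\star)$ and closedness of $Q$, while minimality of $\sigma$ gives $\phi(t'',j'')\in P$ for all strictly earlier hybrid times (exits occurring at a jump are handled directly, with no limit needed). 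In short: your proof is more honest than the paper's, the obstacle you identified is real, and resolving it requires strengthening the hypotheses of the theorem rather than sharpening the argument.
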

\begin{proof}
Since every $x \in \mathcal{X}$ satisfies either $p(x) = 1$ or $q(x) = 1$,
$\mathcal{X} = P \cup Q$ where $P$ and $Q$ are the sets in \eqref{eqn:pq_sets}.
It implies that for all $x \in \mathcal{X}$, $x \in P$ if $x$ does not belong to $Q$; and thus, every solution that has not converged to $Q$ remains in $P$ at least until it converges to $Q$ (if that ever happens).
Therefore, the formula $f = p \,\mathcal{U}_w q$ is satisfied for all solutions to $\mathcal{H}$ at $(t,j) \in \dom\phi$.
\end{proof}

\begin{remark}
By the definition of $\mathcal{U}_w$ operator, $p \,\mathcal{U}_w q$ is instantly satisfied by a solution $\phi$ to $\mathcal{H}$ at $(t,j) \in \dom\phi$ if $\phi(t,j) \in Q :=  \{x \in \mathcal{X}: q(x) = 1\}$. Moreover, $p \,\mathcal{U}_w q$ is satisfied by $\phi$ at $(t,j) \in \dom\phi$ if $\phi(t',j') \in P :=  \{x \in \mathcal{X}: p(x) = 1\}$ for all $(t',j') \in \dom\phi$ such that $t'+j' \geq t+j$. Under the conditions in Theorem~\ref{thm:until_1}, each solution $\phi \in \mathcal{S}_{\mathcal{H}}(x)$ belongs to either $P$ or $Q$ for each $(t,j) \in \dom\phi$. Therefore, we observe that $q \,\mathcal{U}_w p$ is satisfied for each solution $\phi$ to $\mathcal{H}$ at each $(t,j) \in \dom\phi$.
\end{remark}

Furthermore, if the conditions for FTA in Theorem~\ref{thm:FTA} with $p$ therein replaced by $q$ hold and there exists an open set $\mathcal{N}$ defining an open neighborhood of $Q$ in \eqref{eqn:pq_sets} such that $G(\mathcal{N}) \subset \mathcal{N} \subset \mathcal{X}$, then, under the assumptions in  Theorem~\ref{thm:FTA}, solutions to $\mathcal{H}$ from $L_V(r)$ are guaranteed to satisfy $q$ in finite time where $L_V(r) = \{x \in \mathcal{X} : V(x) \leq r\}$, $r \in [0, \infty]$ is a sublevel set of $V$ contained in $\mathcal{N}$.

 The following result relaxes the covering of $\mathcal{X}$ in  Theorem~\ref{thm:until_1} by requiring that $P$ contains a subset of the basin for finite-time attractivity of $Q$. It provides conditions for the formula $f = p \,\mathcal{U}_s q$ to be satisfied for all solutions $\phi$ to $\mathcal{H}$, both at $(t,j) = (0,0)$ and any $(t,j) \in \dom\phi$.

\begin{theorem}
Consider a hybrid system $\mathcal{H}=(C,F,D,G)$ on $\mathcal{X}$.
Given atomic propositions $p$ and $q$, and sets $P$ and $Q$ in \eqref{eqn:pq_sets},
suppose there exists an open set $\mathcal{N}$ defining an open neighborhood of $Q$ such that $G(\mathcal{N}) \subset \mathcal{N} \subset \mathcal{X}$.
Then, the formula $f = p \, \mathcal{U}_s q$ is satisfied for every solution $\phi$ to $\mathcal{H}$ at $(t,j)=(0,0)$ if
\begin{itemize}
 \item[1)] $Q$ is closed;
 \item[2)] at least one among condition 1) and 2) in Theorem~\ref{thm:FTA} with $p$ therein replaced by $q$ is satisfied with some function $V$ as required therein;
 \item[3)] $\phi(0,0) \in (P \cap L_V(r)) \cup Q$;
 \item[4)] $(L_V(r) \cap (\overline{C} \cup D)) \setminus Q \subset P$,
\end{itemize}
where $L_V(r)$ is a compact sublevel set of $V$ coming from item 2) that is contained in $\mathcal{N}$.
Moreover, the upper bound of the settling-time function $\mathcal{T}$ is given in item a) or b) in Theorem~\ref{thm:FTA}, respectively.
 Furthermore, if the following holds:
\begin{itemize}
 \item[5)] For each $x \in Q \cap D$, $G(x) \subset L_V(r) \cap (\overline{C} \cup D)$ where $L_V(r)$ as above,
\end{itemize}
then the formula $f = p \,\mathcal{U}_s q$ is satisfied for every solution $\phi$ to $\mathcal{H}$ at every $(t,j) \in \dom\phi$.
\label{thm:until_2}
\end{theorem}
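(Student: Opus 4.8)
The plan is to deduce the claim from the finite-time attractivity of $Q$ supplied by Theorem~\ref{thm:FTA} (invoked with $q$ in place of $p$, so that the set ``$K$'' there equals $Q$), combined with the covering property in item~4), which forces a solution to lie in $P$ whenever it sits in $L_V(r)\cap(\overline{C}\cup D)$ but not yet in $Q$. As a preliminary step I would record an invariance fact: by item~2), $V$ is positive definite with respect to $Q$ and, on $(C\cap\mathcal{N})\setminus Q$, satisfies $u_C(x)\le 0$ (this follows from $u_C(x)+c_1 V^{c_2}(x)\le 0$ in case~1) of Theorem~\ref{thm:FTA}, and is direct in case~2)), while on $(D\cap\mathcal{N})\setminus Q$ it satisfies $u_D(x)\le 0$. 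Since $L_V(r)$ is compact and $L_V(r)\subset\mathcal{N}$, a boundary-crossing argument then shows that every solution $\phi$ starting in $L_V(r)$ remains in $L_V(r)$ at least until the first hybrid time it reaches $Q$: flows cannot cross the level $\{V=r\}$ from inside because $u_C\le 0$ there, and when $r=0$ the same inequality forbids a flow from leaving $Q=L_V(0)$, since $V$ would then have to increase from zero along a flow interval contained in $(C\cap\mathcal{N})\setminus Q$.

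For satisfaction at $(t,j)=(0,0)$, I would fix a solution $\phi$ with $\phi(0,0)\in(P\cap L_V(r))\cup Q$. If $\phi(0,0)\in Q$, then $(\phi,(0,0))\vDash q$, and choosing the witness $(t',j')=(0,0)$ in the semantics of $\mathcal{U}_s$ (its ``$p$ holds until then'' clause being vacuous) gives $(\phi,(0,0))\vDash p\,\mathcal{U}_s q$. Otherwise $\phi(0,0)\in P\cap L_V(r)$ and, being the initial point of a solution, $\phi(0,0)\in\overline{C}\cup D$, so $\phi(0,0)\in L_V(r)\cap(\overline{C}\cup D)$ and Theorem~\ref{thm:FTA} (with $q$, using item~2)) yields $(\phi,(0,0))\vDash\Diamond q$ together with the fact that the first $(t',j')\in\dom\phi$ with $\phi(t',j')\Vdash q$ satisfies $t'+j'=\mathcal{T}(\phi)$ and the upper bound in item~a) or~b) of Theorem~\ref{thm:FTA}. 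By the invariance step, $\phi(t'',j'')\in L_V(r)\cap(\overline{C}\cup D)$ for every $(t'',j'')\in\dom\phi$ with $t''+j''<t'+j'$, and $\phi(t'',j'')\notin Q$ because $t'+j'=\mathcal{T}(\phi)$ is the infimum of the hybrid times at which $\phi$ hits $Q$; hence item~4) gives $\phi(t'',j'')\in P$, i.e.\ $(\phi,(t'',j''))\vDash p$. This is precisely the defining condition for $(\phi,(0,0))\vDash p\,\mathcal{U}_s q$, and it simultaneously transfers the settling-time bound.

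Under the additional item~5) I would first upgrade the invariance step to $\rge\phi\subset L_V(r)$ for every such $\phi$: a solution can leave $Q$ only by a jump (flows out of $Q$ being excluded as above), and by item~5) any jump from $Q\cap D$ lands in $L_V(r)\cap(\overline{C}\cup D)$, so, together with the nonincrease of $V$ off $Q$, the solution never escapes $L_V(r)$. Then, given any $(t,j)\in\dom\phi$: if $\phi(t,j)\in Q$, then $(\phi,(t,j))\vDash p\,\mathcal{U}_s q$ with the witness $(t,j)$; if $\phi(t,j)\notin Q$, then $\phi(t,j)\in L_V(r)\cap(\overline{C}\cup D)$, and applying Theorem~\ref{thm:FTA} to the restriction of $\phi$ to hybrid times at or after $(t,j)$ shows $\phi$ reaches $Q$ at some first such time $(t',j')$; the argument of the previous paragraph (invariance plus item~4)) then gives $\phi(t'',j'')\in P$ for all $(t'',j'')$ with $t+j\le t''+j''<t'+j'$, so $(\phi,(t,j))\vDash p\,\mathcal{U}_s q$. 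This yields satisfaction at every $(t,j)\in\dom\phi$.

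The step I expect to be the main obstacle is the invariance claim, namely closing the loop between ``$V$ is nonincreasing wherever the solution lies in $\mathcal{N}$'' and ``the solution stays in $L_V(r)\subset\mathcal{N}$''; I would handle it by taking the supremum of the hybrid times up to which $\phi$ is still in $L_V(r)$ and deriving a contradiction from the decrease inequalities at the would-be exit point (and, when $r=0$, at a point just after a putative flow out of $Q$), mirroring the argument used in the proof of Theorem~\ref{thm:FTA_all}. Everything else is bookkeeping with hybrid time and the definition of $\mathcal{U}_s$, together with the observation that item~5) is exactly what is needed for the solution to re-enter $L_V(r)\cap(\overline{C}\cup D)$ after jumping out of $Q$, thereby triggering finite-time attractivity again.
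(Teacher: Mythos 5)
Your proposal is correct and follows essentially the same route as the paper's proof: finite-time attractivity of $Q$ from items 1)--2), membership in $P$ via item 4) at all hybrid times before the settling time, and item 5) to restart the argument after jumps out of $Q$ so the formula holds at every $(t,j)\in\dom\phi$. The only difference is that you make explicit the invariance of $L_V(r)$ up to the first hitting time of $Q$, which the paper leaves implicit (it is in fact already established inside the proofs of Propositions~\ref{prop:FTA_flow} and~\ref{prop:FTA_jump}), so your write-up is a more detailed version of the same argument rather than a different one.
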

\begin{proof}
When item 1) and 2) hold, $Q$ is FTA for $\mathcal{H}$.
% For the case when condition 1) in Theorem~\ref{thm:FTA} holds with $p$ therein replaced by $q$, 
% each $\phi \in \mathcal{S}_{\mathcal{H}} (x)$ satisfies \eqref{eqn:sup_t} for every $x \in \mathcal{N} \cap (\overline{C} \cup D) \setminus Q$; and
% for every $x \in (C \cup D \cup G(D)) \cap \mathcal{N}$ such that $x \in (C \cap \mathcal{N}) \setminus Q$,
% there exist functions $\alpha_1, \alpha_2 \in \mathcal{K}_\infty$ such that
% $\alpha_1(| x| _Q) \leq V(x) \leq \alpha_2(| x| _Q)$ satisfying
% \begin{itemize}
%  \item[a)] $u_C(x) + c_1 V^{c_2} (x) \leq 0$ for each $x \in \mathcal{X}$ such that $x \in C \cap \mathcal{N}$
%  \item[b)] $u_D(x) \leq 0$ for each $x \in \mathcal{X}$ such that $x \in D \cap \mathcal{N}$.
% \end{itemize}
% The proof for the cases when condition 2) of Theorem~\ref{thm:FTA} holds follow similarly.
Note that $L_V(r)$ is a compact sublevel set of $V$ contained in $\mathcal{N}$.
If the initial state is away from $Q$, it is in $P \cap L_V(r)$; and thus, it converges to $Q$ in finite time.
Moreover, condition 4) implies that the solutions from $L_V(r) \cap (\overline{C} \cup D)$ flow/jump to $P$ if the solutions have not converged to $Q$, which is guaranteed to occur in finite hybrid time.
At this point, the formula $f = p \,\mathcal{U}_s q$ is satisfied for all solutions to $\mathcal{H}$ at $(t,j)=(0,0)$.
Furthermore, condition 5) implies that for each $x \in Q \cap D$, jumps map the state to $L_V(r) \cap (\overline{C} \cup D)$; and thus, every solution from $x$ converges to $Q$ in finite time if it has not converged to $Q$, and the solutions that has not converged to $Q$ remains in $P$.
That is, once a solution belongs to $Q$, it stays in $Q$ or flows/jumps to $P$.
Therefore, the formula $f = p \,\mathcal{U}_s q$ is satisfied for all solutions to $\mathcal{H}$ at every $(t,j) \in \dom \phi$.
\end{proof}

Though at times might be more restrictive, condition 4) in Theorem~\ref{thm:until_2} can be replaced by forward invariance of $P$ when $C$ and $F$ satisfy condition 1) in Theorem~\ref{thm:always}.

The bouncing ball example in Example~\ref{ex:bouncing_ball} is used to illustrate Theorem~\ref{thm:until_2}.

\begin{example}
Consider $\mathcal{H} = (C,F,D,G)$ in Example~\ref{ex:bouncing_ball}.
Define $p$ as 1 when $x_2 \geq 0$, and 0 otherwise.
Define $q$ as 1 when $x_2 \leq 0$, and 0 otherwise.
With the sets $P$ and $Q$ in \eqref{eqn:pq_sets}, as shown in Example~\ref{ex:bouncing_ball}, item 2) in Theorem~\ref{thm:until_2} is satisfied with $\mathcal{N} = \mathbb{R}^2$.
By applying item 1) in Theorem~\ref{thm:FTA}, every solution from $Q$, after some time, jumps from $Q$ to $P$ and then converges to $Q$ again in finite time.
Moreover, from the definition of $Q$ and $P$ in \eqref{eqn:pq_sets}, if a solution does not belong to $Q$, then it belongs to $P$.
Furthermore, $P$ satisfies item 4) in Theorem~\ref{thm:until_2} since $(\overline{C} \cup D) \setminus Q \subset P$.
For each $x \in Q \cap D$, $\xi \in G(x)$ satisfies $\xi \in \overline{C} \cup D$ so that item 5) in Theorem~\ref{thm:until_2} is satisfied.
Thus, every solution that has not converged to $Q$ remains in $P$ at least until it converges to $Q$, which is guaranteed to occur in finite hybrid time.
We conclude that the formula $f = p \,\mathcal{U}_s q$ is satisfied for all solutions to $\mathcal{H}$ at every $(t,j) \in \dom \phi$. \hfill $\triangle$
\end{example}

\section{Sufficient Conditions for LTL Formulas Combining Operators}

\label{sec:conditions_combining}

\begin{table}[!thb]
\centering
\begin{tabular}{ll}
	\hline
	& \qquad\qquad Sufficient Conditions\\
	\hline
	$\Box p$  & \qquad\qquad a) Barrier function for forward pre-invariance\\
	$\Diamond p$ & \qquad\qquad b) Lyapunov function for FTA\\
	$p \,\mathcal{U} q$ & \qquad\qquad c) Combination of a) and b)\\
	$\ocircle p$ & \qquad\qquad d) $G(D) \subset D \cap \{ x \in \mathcal{X}: p(x) = 1\}$\\
	\hline
\end{tabular}
\caption{Sufficient conditions for $\Box$, $\Diamond$, $\mathcal{U}$, $\ocircle$}
\label{table:conditions}
\end{table}

Section~\ref{sec:sufficient_single} provides sufficient conditions for formulas that involve a single temporal operator. Table~\ref{table:conditions} summarizes the conditions for each temporal operator.
As indicated therein, all that is needed is either a certificate for finite-time convergence in terms of a Lyapunov function, or the data of the hybrid system and the set of points where the proposition is true to satisfy conditions for invariance.
The latter can be actually certified using Lyapunov-like functions or barrier
functions as in~\cite{185}, which for space reasons is not pursued here.

Moreover, the case of logic operators can be treated similarly by using intersections, unions, and complements of the sets where the propositions hold. For instance, sufficient conditions for $\Box (p \land q)$ can immediately be derived from the sufficient conditions already given in Section~\ref{subsec:sufficient_always} with $\{x \in \mathcal{X} : p(x) = 1\} \cap \{x \in \mathcal{X} : q(x) = 1\}$ in place of $\{x \in \mathcal{X} : p(x) = 1\}$.

The following sections present sufficient conditions for formulas that combine more than one operator. The conditions therein are given by compositions of the conditions in Table~\ref{table:conditions}.

\subsection{Conditions for $\Diamond \Box$}

\begin{corollary}
Consider a hybrid system $\mathcal{H}=(C,F,D,G)$ on $\mathcal{X}$ and an atomic proposition $p$.
Suppose $C$ is closed relative to $\mathcal{X}$, and
% $C \subset \dom F$, and $D \subset \dom G$, and
\begin{itemize}
 \item The state space $\mathcal{X}$ and the atomic proposition $p$ are such that $\{x \in \mathcal{X} : p(x) = 1\}$ is closed;
 \item The map $F : \mathcal{X} \rightrightarrows \mathcal{X}$ is outer semicontinuous, locally bounded relative to $\{x \in C : p(x) = 1\}$, and $F(x)$ is convex for every $x \in \{x \in C : p(x) = 1\}$. The map $F$ is locally Lipschitz on $\{x \in C : p(x) = 1\}$; and
 \item There exists an open set $\mathcal{N}$ that defines an open neighborhood of $\{x \in \mathcal{X} : p(x) = 1\}$ such that $G(\mathcal{N}) \subset \mathcal{N} \subset \mathcal{X}$.
\end{itemize}
Then, the formula $f = \Diamond \Box p$ is satisfied for all solutions $\phi$ to $\mathcal{H}$ for all $(t,j) \in \dom \phi$ if the following properties hold:
\begin{itemize}
 \item[1)] Conditions 1), 2), and 3) in Theorem~\ref{thm:always} hold; and
 \item[2)] Condition 1) or condition 2) in Theorem~\ref{thm:FTA} holds.
\end{itemize}
\label{cor:eventually_always_1}
\end{corollary}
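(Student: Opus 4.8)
The plan is to certify $\Diamond\Box p$ by establishing two separate invariance-type properties of the set $K := \{x \in \mathcal{X} : p(x) = 1\}$ and then splicing them together along each solution. First, I would invoke Theorem~\ref{thm:always}: by hypothesis $K$ is closed and conditions 1)--3) of that theorem hold for a continuously differentiable barrier function candidate $B$ with respect to $K$, so Theorem~\ref{thm:always} yields that $\Box p$ is satisfied by every solution $\phi$ with $\phi(0,0)\Vdash p$; by Proposition~\ref{prop:always_fpi} this is exactly forward pre-invariance of $K$ for $\mathcal{H}$. (The hypotheses on $F$ needed here concern only the behaviour on $C\cap K$ and, via condition 1), near $\partial K$, so the weaker assumptions in the corollary suffice; equivalently one applies the statement to the restriction of $\mathcal{H}$ to a neighbourhood of $K$.) Second, I would invoke Theorem~\ref{thm:FTA}: since $K$ is closed and there is an open neighbourhood $\mathcal{N}$ of $K$ with $G(\mathcal{N})\subset\mathcal{N}\subset\mathcal{X}$, and condition 1) or 2) of Theorem~\ref{thm:FTA} holds with some function $V$, the formula $\Diamond p$ is satisfied at $(0,0)$ by every solution from $L_V(r)\cap(\overline{C}\cup D)$; by Proposition~\ref{prop:eventually_FTA} this is equivalent to $K$ being FTA for $\mathcal{H}$ (with respect to that set, or to $\overline{C}\cup D$ in the global case $\mathcal{N}=\mathcal{X}$), so every such solution reaches $K$ at some hybrid time $(t^\star,j^\star)\in\dom\phi$, i.e.\ $\phi(t^\star,j^\star)\in K$ with $t^\star+j^\star=\mathcal{T}(\phi)$.

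Next I would combine the two. Fix a solution $\phi$ (from $L_V(r)\cap(\overline{C}\cup D)$) and an arbitrary $(t_0,j_0)\in\dom\phi$; the goal is to exhibit $(t',j')\in\dom\phi$ with $t'+j'\ge t_0+j_0$ such that $\phi(t'',j'')\in K$ for every $(t'',j'')\in\dom\phi$ with $t''+j''\ge t'+j'$. Here I would use the elementary fact that a hybrid time domain is totally ordered by the relation $(t,j)\preceq(t',j')$ meaning $t\le t'$ and $j\le j'$, and that on such a domain this order coincides with the one induced by $t+j$; hence the ``forward part'' of $\phi$ past any point $(s,k)$ is precisely $\{(t,j)\in\dom\phi : t+j\ge s+k\}$, and restricting $\phi$ to it and reindexing yields again a solution to $\mathcal{H}$, starting at $\phi(s,k)$. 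Now split into cases. If $t^\star+j^\star\ge t_0+j_0$, take $(t',j')=(t^\star,j^\star)$: the forward part past $(t^\star,j^\star)$ is a solution starting in $K$, hence remains in $K$ by forward pre-invariance, so $(\phi,(t^\star,j^\star))\vDash\Box p$. If instead $t^\star+j^\star<t_0+j_0$, then $(t_0,j_0)$ lies in the forward part past $(t^\star,j^\star)$, so $\phi(t_0,j_0)\in K$ by forward pre-invariance; taking $(t',j')=(t_0,j_0)$ and applying forward pre-invariance once more to the forward part past $(t_0,j_0)$ gives $(\phi,(t_0,j_0))\vDash\Box p$. In either case $(\phi,(t_0,j_0))\vDash\Diamond\Box p$, which is the desired conclusion.

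I expect the only places genuine care is needed are bookkeeping rather than substance. The first is reconciling the reach set $L_V(r)\cap(\overline{C}\cup D)$ of Theorem~\ref{thm:FTA} with the phrase ``all solutions'' in the corollary; this is handled by reading the conclusion relative to that set, exactly as in Theorem~\ref{thm:FTA} (and it collapses to all solutions in the global case). The second, and the main subtlety, is the case analysis above: the $\Box$ and $\Diamond$ semantics are phrased through $t+j$ while forward pre-invariance is phrased through the domain order, so one must verify that a solution cannot re-enter $\mathcal{X}\setminus K$ after the settling time and that the settling time may legitimately precede $(t_0,j_0)$ --- both dispatched by the total-ordering observation, which also supplies the witness ($(t^\star,j^\star)$ or $(t_0,j_0)$ itself). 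Everything else --- closedness of $K$, the hypotheses on $F$ restricted to $C\cap K$ together with $C$ being closed, and the neighbourhood condition on $\mathcal{N}$ --- is precisely what Theorems~\ref{thm:always} and~\ref{thm:FTA} consume, so the proof is a composition of those two results mediated by Propositions~\ref{prop:always_fpi} and~\ref{prop:eventually_FTA}.
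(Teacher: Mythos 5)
Your proof is correct and follows exactly the route the paper intends: the paper states Corollary~\ref{cor:eventually_always_1} without proof, leaving implicit precisely the composition you spell out --- Theorem~\ref{thm:always} (via Proposition~\ref{prop:always_fpi}) for forward pre-invariance of $K$, Theorem~\ref{thm:FTA} (via Proposition~\ref{prop:eventually_FTA}) for finite-time attractivity, and the tail/total-ordering argument to splice them into $\Diamond\Box p$ at every $(t,j)$. Your two bookkeeping caveats (the ``all solutions'' versus $L_V(r)\cap(\overline{C}\cup D)$ mismatch, and the hypotheses on $F$ being stated only relative to $\{x\in C: p(x)=1\}$ rather than all of $C$ as in Assumption~\ref{assump:always}) are genuine imprecisions in the corollary's statement rather than gaps in your argument, and your resolutions of both are the sensible ones.
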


Alternatively, sufficient conditions to guarantee the formula $\Diamond \Box p$ can be obtained by strengthening the Lyapunov conditions in Theorem~\ref{thm:FTA}.
\begin{corollary}
Consider a hybrid system $\mathcal{H}=(C,F,D,G)$ on $\mathcal{X}$ and an atomic proposition $p$.
Suppose
\begin{itemize}
 \item The state space $\mathcal{X}$ and the atomic proposition $p$ are such that $\{x \in \mathcal{X} : p(x) = 1\}$ is closed; and
 \item There exists an open set $\mathcal{N}$ that defines an open neighborhood of $\{x \in \mathcal{X} : p(x) = 1\}$ such that $G(\mathcal{N}) \subset \mathcal{N} \subset \mathcal{X}$.
\end{itemize}
Then, the formula $f = \Diamond \Box p$ is satisfied for all solutions $\phi$ to $\mathcal{H}$ that remain in a compact subset of $\mathcal{N}$ for all $(t,j) \in \dom \phi$ if the following properties hold:
\begin{itemize}
 \item[1)] there exists a continuous function $V:\mathcal{N} \rightarrow \mathbb{R}_{\geq 0}$, locally Lipschitz on an open neighborhood of $C \cap \mathcal{N}$, and $c, c_1 > 0$, $c_2 \in [0,1)$ such that
\begin{itemize}
    \item[1.1)] for every $x \in \mathcal{N} \cap (\overline{C} \cup D)$ such that $p(x) = 0$, each $\phi \in \mathcal{S}_{\mathcal{H}} (x)$ satisfies $\tfrac{V^{1-c_2} (x)}{c_1 (1-c_2)} \leq \sup_{(t,j) \in \dom\phi} t$ and $\mbox{ceil} \left( \tfrac{V(x)}{c} \right) \leq \sup_{(t,j) \in \dom\phi} j$;
    \item[1.2)] the function $V$ is positive definite with respect to $K$ and
\begin{itemize}
     \item[1.2a)] for each $x \in \mathcal{X}$ such that $x \in C \cap \mathcal{N}$, $u_C(x) + c_1 V^{c_2} (x) \leq 0$;
     \item[1.2b)] for each $x \in \mathcal{X}$ such that $x \in D \cap \mathcal{N}$, $u_D(x) \leq -\min\{c, V(x)\}$.
\end{itemize}
\end{itemize}
\end{itemize}
\label{cor:eventually_always_2}
\end{corollary}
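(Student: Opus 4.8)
The plan is to prove $f=\Diamond\Box p$ at $(0,0)$ --- indeed at every $(t,j)\in\dom\phi$ --- by decomposing it into reachability of the set $K$ in \eqref{eqn:K_set} followed by forward invariance of $K$. Fix a solution $\phi$ with $\rge\phi$ contained in a compact set $M\subset\mathcal{N}$. First I would show that $\phi$ reaches $K$: there is a first hybrid time $(t^\star,j^\star)\in\dom\phi$ with $t^\star+j^\star=\mathcal{T}(\phi)$ and $\phi(t^\star,j^\star)\in K$. Then I would show $\phi(t,j)\in K$ for all $(t,j)\in\dom\phi$ with $t+j\geq t^\star+j^\star$. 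By the semantics of $\Box$ and $\Diamond$ in Section~\ref{sec:LTL_H}, the two facts give $(\phi,(t^\star,j^\star))\vDash\Box p$ and hence $(\phi,(0,0))\vDash\Diamond\Box p$; moreover $(\phi,(t,j))\vDash\Box p$ already holds whenever $t+j\geq t^\star+j^\star$, so $\Diamond\Box p$ is satisfied at every $(t,j)\in\dom\phi$.

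For the reachability step I would check that the hypotheses of the corollary, once restricted to the points $x$ with $p(x)=0$, are exactly those of item~1) of Theorem~\ref{thm:FTA}: conditions 1.2a) and 1.2b) here imply conditions 1.2a) and 1.2b) there (note $u_D(x)\leq-\min\{c,V(x)\}\leq0$), condition 1.1) here includes condition 1.1) there, $\{x\in\mathcal{X}:p(x)=1\}$ is closed, $V$ is continuous, locally Lipschitz on a neighborhood of $C\cap\mathcal{N}$, and positive definite with respect to $K$, and $G(\mathcal{N})\subset\mathcal{N}\subset\mathcal{X}$. Since $\rge\phi$ is a compact subset of $\mathcal{N}$ and $V$ is continuous and positive definite with respect to $K$, there is a compact sublevel set of $V$ containing $\rge\phi$ and contained in $\mathcal{N}$ (for the dynamics restricted to a compact neighborhood of $M$, if needed), so Theorem~\ref{thm:FTA} applies to $\phi$ and, through \eqref{eqn:settling_time}, yields the first $(t^\star,j^\star)\in\dom\phi$ with $\phi(t^\star,j^\star)\Vdash p$, i.e., $\phi(t^\star,j^\star)\in K$. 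Here condition 1.1) is precisely what guarantees that $\phi$ has enough hybrid time to reach $K$, since maximal solutions of $\mathcal{H}$ are not assumed complete.

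For the invariance step I would use the strengthening of Theorem~\ref{thm:FTA} that the corollary introduces: conditions 1.2a) and 1.2b) are now imposed at \emph{every} $x\in C\cap\mathcal{N}$ and $x\in D\cap\mathcal{N}$, not just at $x\notin K$. Since $\rge\phi\subset\mathcal{N}$ and $G(\mathcal{N})\subset\mathcal{N}$, these inequalities hold at every point visited by $\phi$ and at its jump successors. Using the Clarke generalized derivative estimate (legitimate because $V$ is locally Lipschitz near $C\cap\mathcal{N}$ and $t\mapsto\phi(t,j)$ is absolutely continuous), condition 1.2a) gives $\tfrac{d}{dt}V(\phi(t,j))\leq u_C(\phi(t,j))\leq-c_1V(\phi(t,j))^{c_2}\leq0$ for almost all $t$ on every flow interval, while condition 1.2b) gives $V(\phi(t,j+1))\leq\max_{\zeta\in G(\phi(t,j))}V(\zeta)\leq V(\phi(t,j))$ at every jump. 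Hence $(t,j)\mapsto V(\phi(t,j))$ is nonincreasing along $\dom\phi$. Since $V$ is positive definite with respect to $K$ and $\phi(t^\star,j^\star)\in K$, we get $V(\phi(t^\star,j^\star))=0$, and monotonicity forces $V(\phi(t,j))=0$, hence $\phi(t,j)\in K$ and $\phi(t,j)\Vdash p$, for every $(t,j)\in\dom\phi$ with $t+j\geq t^\star+j^\star$ --- using that, by the structure of hybrid time domains, any such point is reached along $\phi$ after $(t^\star,j^\star)$.

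The main difficulty I anticipate is not a single estimate --- reachability is inherited from Theorem~\ref{thm:FTA} and the decrease of $V$ is standard --- but carrying the confinement hypothesis ``$\phi$ stays in a compact subset of $\mathcal{N}$'' consistently through both steps: one must exhibit a compact sublevel set of $V$ that both contains $\rge\phi$ and lies in $\mathcal{N}$ so that Theorem~\ref{thm:FTA} is applicable, and one must verify that $V$ and the Clarke inequality remain valid at the jump successors $\phi(t,j+1)\in G(\mathcal{N})\subset\mathcal{N}$ and at any terminal point of $\phi$. Both are handled by the confinement hypothesis together with $G(\mathcal{N})\subset\mathcal{N}\subset\mathcal{X}$, so this is bookkeeping rather than a genuine obstacle; the only conceptual ingredient beyond Theorem~\ref{thm:FTA} is the observation that, once $V$ is forced to decrease (weakly) everywhere on $\mathcal{N}$, finite-time attractivity of $K$ upgrades to $\Diamond\Box p$.
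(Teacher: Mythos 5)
Your proposal is correct and follows exactly the route the paper intends: the paper states this corollary without an explicit proof, but the remark immediately following it confirms your decomposition --- Theorem~\ref{thm:FTA} (whose hypotheses you correctly verify are implied by conditions 1.1) and 1.2) restricted to points where $p(x)=0$) gives finite-time attractivity of $K$, and the strengthening of 1.2a)--1.2b) to all of $C\cap\mathcal{N}$ and $D\cap\mathcal{N}$ makes $V$ nonincreasing along solutions, so $V$ stays at zero and $K$ is forward invariant once reached, yielding $\Box p$ from the first hitting time onward. The one loose end you flag --- exhibiting a compact sublevel set of $V$ containing $\rge\phi$ inside $\mathcal{N}$ --- is precisely what the confinement hypothesis is there to absorb, so your treatment is adequate.
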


Corollary~\ref{cor:eventually_always_2} imposes bounds on 1.2a) and 1.2b) for each point where flow and jump is possible, respectively, rather than only when $p$ is not true. Such conditions further guarantee invariance of $\{x \in \mathcal{X} : p(x) = 1\}$.

A similar estimate for the time to converge as in Theorem~\ref{thm:FTA} holds.
Condition 1) in Corollary~\ref{cor:eventually_always_1} can be alternatively
guaranteed with a Lyapunov-like/barrier function as in~\cite{185}.

Corollary~\ref{cor:eventually_always_2} requires strict Lyapunov functions, but nonstrict versions as in Theorem~\ref{thm:FTA} can be similarly stated.

\subsection{Conditions for $\Box \Diamond$}

Sufficient conditions to guarantee the formula $f = \Box \Diamond p$ are given by those in Theorem~\ref{thm:FTA_all}.

\subsection{Conditions for $\Box (p \,\mathcal{U}_s q)$}

Sufficient conditions to guarantee the formula $f = \Box (p \,\mathcal{U}_s q)$ are already given by those in Theorem~\ref{thm:until_2}.

\subsection{Conditions for $p \,\mathcal{U}_s \Box q$}

The formula $f = p \,\mathcal{U}_s \Box q$ can be certified by applying Theorem~\ref{thm:until_2} and Corollary~\ref{cor:eventually_always_2} with $p$ therein replaced by $q$.
\begin{corollary}
Consider a hybrid system $\mathcal{H}=(C,F,D,G)$ on $\mathcal{X}$. Suppose $C$ is closed in $\mathcal{X}$ and
\begin{itemize}
 \item There exists an open set $\mathcal{N}$ defining an open neighborhood of $\{x \in \mathcal{X} : q(x) = 1\}$ such that $G(\mathcal{N}) \subset \mathcal{N} \subset \mathcal{X}$;
 \item The map $F : \mathcal{X} \rightrightarrows \mathcal{X}$ is outer semicontinuous, locally bounded relative to $\{x \in C : p(x)=1\}$, and $F(x)$ is convex for every $\{x \in C : p(x)=1\}$. Additionally, the map $F$ is locally Lipschitz on $\{x \in C : p(x)=1\}$.
\end{itemize}
Then, the formula $f = p \, \mathcal{U}_s \Box q$ is satisfied for every solution $\phi$ to $\mathcal{H}$ if
\begin{itemize}
 \item[1)] all of the conditions in Theorem~\ref{thm:until_2} hold; and
 \item[2)] condition 1.2) in Corollary~\ref{cor:eventually_always_2} with $p$ therein replaced by $q$ holds.
\end{itemize}
\label{cor:until_always}
\end{corollary}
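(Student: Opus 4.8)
The plan is to split $p\,\mathcal{U}_s\Box q$ into a reachability part and an invariance part and to certify each with one of the two ingredients already available. Theorem~\ref{thm:until_2}, applied with $q$ in place of $p$ in the underlying Theorem~\ref{thm:FTA}, furnishes finite-time attractivity of $Q=\{x\in\mathcal{X}:q(x)=1\}$ together with the ``remain in $P$ until $Q$'' behavior (items~3) and 4) of that theorem). Condition~2) of the corollary --- which is exactly condition~1.2) of Corollary~\ref{cor:eventually_always_2} with $p$ replaced by $q$, i.e.\ $u_C(x)+c_1V^{c_2}(x)\le 0$ on $C\cap\mathcal{N}$ and $u_D(x)\le-\min\{c,V(x)\}$ on $D\cap\mathcal{N}$ --- makes $V$ nonincreasing along solutions, so its zero-level set $Q$ is forward invariant, as observed in the discussion following Corollary~\ref{cor:eventually_always_2}. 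Recalling from Proposition~\ref{prop:always_fpi} that $\Box q$ holding from a hybrid time instant corresponds precisely to the tail of the solution lying in $Q$, the two ingredients together will yield the formula.

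First I would fix an arbitrary solution $\phi$ to $\mathcal{H}$ with $\phi(0,0)\in(P\cap L_V(r))\cup Q$ as dictated by item~3) of Theorem~\ref{thm:until_2}, where $L_V(r)$ is the compact sublevel set coming from condition~2). By Theorem~\ref{thm:until_2}, $p\,\mathcal{U}_s q$ holds for $\phi$ at $(0,0)$; moreover, since $Q$ is FTA and the first satisfaction time coincides with the settling time via \eqref{eqn:settling_time}, there is a first hybrid time $(t^\star,j^\star)\in\dom\phi$ with $t^\star+j^\star=\mathcal{T}(\phi)$ at which $\phi(t^\star,j^\star)\in Q$, while $\phi(t'',j'')\in P$ for every $(t'',j'')\in\dom\phi$ with $t''+j''<t^\star+j^\star$ --- this uses item~4), together with the fact that $V$ is nonincreasing along $\phi$ and $G(\mathcal{N})\subset\mathcal{N}$, so $\phi$ stays in $L_V(r)\cap(\overline{C}\cup D)$ up to $(t^\star,j^\star)$. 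Next, by forward invariance of $Q$ established above, the tail of $\phi$ from $(t^\star,j^\star)$ on stays in $Q$, i.e.\ $q(\phi(t',j'))=1$ for all $(t',j')\in\dom\phi$ with $t'+j'\ge t^\star+j^\star$; by the semantics of $\Box$, $(\phi,(t^\star,j^\star))\vDash\Box q$. Feeding the two facts --- $\phi$ in $P$ strictly before $(t^\star,j^\star)$, and $(\phi,(t^\star,j^\star))\vDash\Box q$ --- into the semantics of $\mathcal{U}_s$ gives $(\phi,(0,0))\vDash p\,\mathcal{U}_s\Box q$. For the statement at every $(t,j)\in\dom\phi$, the same witness $(t^\star,j^\star)$ serves every earlier $(t,j)$, and for $(t,j)$ with $t+j\ge t^\star+j^\star$ one already has $(\phi,(t,j))\vDash\Box q$ by invariance, so $p\,\mathcal{U}_s\Box q$ is then satisfied trivially.

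The main obstacle is not the LTL bookkeeping but reconciling the two sets of hypotheses so that one function $V$ and one compact sublevel set $L_V(r)\subset\mathcal{N}$ serve both roles at once: the $V$ in item~2) of Theorem~\ref{thm:until_2} must be taken to be exactly the $V$ in condition~2) of the corollary (Corollary~\ref{cor:eventually_always_2}, item~1.2), with $q$ for $p$), and one must verify that the strict inequalities 1.2a)--1.2b) simultaneously deliver the FTA decrease required in Theorem~\ref{thm:FTA} and the forward invariance of $Q$. The regularity of $F$ assumed in the corollary --- outer semicontinuity, local boundedness and convexity on $\{x\in C:p(x)=1\}$, plus local Lipschitzness there --- and closedness of $C$ are precisely what is needed to invoke the invariance machinery of Theorem~\ref{thm:always} inside Corollary~\ref{cor:eventually_always_2}, and to guarantee ``remain in $P$ until $Q$'' (alternatively via forward invariance of $P$, as noted after Theorem~\ref{thm:until_2}). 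Everything else is routine: jumps cannot move $\phi$ out of $Q$ (this strengthens, hence subsumes, item~5) of Theorem~\ref{thm:until_2}), and $\phi$ cannot leave $L_V(r)$ before hitting $Q$, both by monotonicity of $V$ and $G(\mathcal{N})\subset\mathcal{N}$.
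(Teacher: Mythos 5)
Your proposal is correct and follows essentially the same route the paper intends: the paper gives no explicit proof of Corollary~\ref{cor:until_always}, only the preceding one-sentence remark that it follows by applying Theorem~\ref{thm:until_2} together with Corollary~\ref{cor:eventually_always_2} with $p$ replaced by $q$, and your argument is precisely a fleshed-out version of that decomposition (reach $Q$ while staying in $P$, then use the nonincrease of $V$ on all of $C\cap\mathcal{N}$ and $D\cap\mathcal{N}$ plus $G(\mathcal{N})\subset\mathcal{N}$ to keep the tail in $Q$, so the witness time for $\mathcal{U}_s$ also witnesses $\Box q$).
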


 \subsection{Decomposition of general formulas using finite state automata}

In certain cases, formulas that combine more than one operator can be decomposed into simpler formulas for which our results for formulas with a single operator can be applied. To decompose a general formula combining into several formulas with a single operator, one can employ the finite state automaton (FSA) representation of an LTL formula \cite{wolper2000constructing,babiak2012ltl,belta2017formal}.
Following~\cite[Chapter 2]{belta2017formal}, a particular fragment of LTL, called syntactically co-safe LTL (scLTL), is considered so that each formula $f$ over a set of observations can always be translated into an FSA.
An LTL formula belongs to the scLTL fragment if it contains only temporal operators $\Diamond$, $\ocircle$, $\mathcal{U}$, and it is written in positive normal form: the negation operator $\lnot$ occurs only in front of atomic propositions.
Next, given an LTL formula $f$ in the scLTL fragment, we outline the process of constructing an FSA, which we denote $A_f$, and specify properties of a hybrid system $\mathcal{H}$ with $A_f$. 
We first introduce the FSA representation of LTL formulas that belongs to the scLTL fragment.

\begin{definition}[{Finite State Automaton}]
Given an scLTL formula $f$, a finite state automaton (FSA) is given by the tuple $A_f = (S, s_0, O, \delta, S_F)$, where
\begin{itemize}
 \item $S$ is a finite set of states,
 \item $s_0 \in S$ is the initial state,
 \item $O$ is a finite set of observations,
 \item $\delta : S \times O \rightarrow S$ is a transition function,\footnote{When $\delta$ is set valued, namely, $\delta : S \times O \rightrightarrows S$ maps points in $S \times O$ to subsets of $S$, then $A_f$ is said to be non-deterministic.}
 \item $S_F \subseteq S$ is the set of accepting (final) states.
\end{itemize}
\end{definition}
The semantics of an FSA are defined over finite words of observations (or inputs).
A run of $A_f$ over a word of observations $w_O = w_O(1)w_O(2) \ldots w_O(n)$ with $w_O(k) \in O$ for all $k = 1, \ldots, n$ is a sequence $w_S = w_S(1)w_S(2) \ldots w_S(n+1) \in S$ where $w_S(1) = s_0$ and $w_S(k+1) = \delta(w_S(k),w_O(k))$ for all $k = 1, \ldots, n$.
The word $w_O$ is accepted by $A_f$ if the corresponding run ends in an accepting automaton state; i.e.,~$w_S(n+1) \in S_F$.

With an FSA associated to a general formula $f$ in the scLTL fragment,
the tools presented in this paper for the satisfaction of basic formulas having one operator
can be applied to certify  $f$.
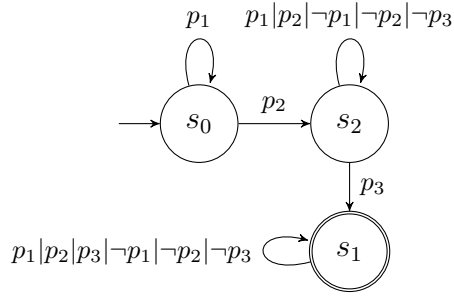
\begin{figure}[!t]
\centering
\begin{tikzpicture}
	\node[state, initial, black] (s0) at (0,0) {$s_0$};
	\node[state, black] (s2) at (2,0) {$s_2$};
	\node[state, accepting] (s1) at (2,-1.7) {$s_1$};
	\draw	(s0) edge[loop above, black] node{\footnotesize$p_1$} (s0)
			(s2) edge[loop above, black] node{\footnotesize$p_1|p_2|\lnot p_1|\lnot p_2|\lnot p_3$} (s2)
			(s0) edge[above, black] node{\footnotesize$p_2$} (s2)
			(s1) edge[loop left] node{\footnotesize$p_1|p_2|p_3|\lnot p_1|\lnot p_2|\lnot p_3$} (s1)
			(s2) edge[right] node{\footnotesize$p_3$} (s1);
\end{tikzpicture}
\label{fig:fsa_combining}
\caption{An example of an FSA representing the formula $f = \Diamond p_3 \land (p_1 \,\mathcal{U}_s p_2)$. The state $s_0$ is the initial state and $s_1$ is the final state. When several transitions are present between two states, one transition labeled by the set of all observations using the symbol $|$ as shown.}
\label{fig:fsa_combining}
\end{figure}
For instance, 
the formula $f = \Diamond p_3 \land (p_1 \,\mathcal{U}_s p_2)$ has the
following associated FSA:
$A_f = (S, s_0, O, \delta, S_F)$, where
\begin{equation}
\begin{split}
 &S = \{s_0, s_1, s_2\}, \quad S_F = \{s_1\}, \quad O = \{p_1, p_2, p_3, \lnot p_1, \lnot p_2, \lnot p_3\},\\
 &\delta(s,o) =
 \left\{
\begin{array}{ll}
  s_0 & \mbox{if } s = s_0\mbox{, } o = p_1\mbox{, }\\
  s_2 & \mbox{if } s = s_0\mbox{, } o = p_2\mbox{, }\\
  s_2 & \mbox{if } s = s_2\mbox{, } o \neq p_3\mbox{, }\\
  s_1 & \mbox{if } s = s_2\mbox{, } o = p_3\mbox{, }\\
  s_1 & \mbox{if } s = s_1.\end{array}\right.
 \quad \forall (s,o) \in S \times O
\end{split}
\label{eqn:Af_ex}
\end{equation}
This FSA is shown in Figure~\ref{fig:fsa_combining}.
As shown therein, the FSA state $s$ is initially at $s_0$ and
when $s$ reaches the final state $s_1$, it implies that the given formula $f$ is satisfied.
As $s$ starts at $s_0$, we must have that the initial observation $o$ is either $o=p_1$ or $o=p_2$. If it is $o=p_1$, $s$ remains at $s_0$, but if $o=p_2$, we have a transition from $s=s_0$ to $s=s_2$.
Then, once $s$ is at $s_2$, we have a transition of $s$ from $s_2$ to $s_1$ if $o = p_3$. If $o \neq p_3$, $s$ remains at $s_2$.
 In other words, the FSA captures the given formula as follows:
\begin{itemize}
 \item[1)] When $s$ is at $s_0$, $p_2$ has to be eventually satisfied and $p_1$ has to be satisfied until $p_2$ is satisfied; i.e.,~$p_1 \mathcal{U}_s p_2$ is satisfied. Once $p_2$ is satisfied, $s$ jumps to $s_2$.
 \item[2)] When $s$ is at $s_2$, $p_3$ needs to be eventually satisfied for $f$ to be satisfied; i.e.,~$\Diamond p_3$ is satisfied. Additionally, once $p_3$ is satisfied, $s$ jumps to $s_1$.
\end{itemize}

To apply our tools, 
by extending the ideas in~\cite{bisoffi2018hybrid}, 
we build an augmented version of $\mathcal{H}$, denoted by $\mathcal{H}_A$, with state $(x,s) \in \mathcal{X} \times S$ and input $o \in O$ in which $s$ transitions according to the FSA associated with the formula.  
 Its input $o$, namely, the observation $o$, is determined by 
the propositions that are satisfied (or not). For example, when 
$x$ is such that $p_1(x) = 1$ then $o = p_1$, 
while when $p_1(x) = 0$ then $o = \lnot p_1$.
Then, according to our tools, the satisfaction of the formula $f$ is assured by the following conditions:
\begin{itemize}
 \item Conditions in Theorem~\ref{thm:until_2}, with $q$ therein replaced by $p_2$ and with $P=\{(x,s) \in \mathcal{X} \times S: p_1(x)=1, s=s_0\}$ and $Q=\{(x,s) \in \mathcal{X} \times S: p_2(x)=1, s=s_2\}$, are satisfied; and
 \item $K = \{(x,s) \in \mathcal{X} \times S: p_3(x) = 1, s = s_1\}$ is FTA for $\mathcal{H}_A$; namely, conditions in Theorem~\ref{thm:FTA}, with $p$ therein replaced by $p_3$ and with set $K$ just defined, are satisfied.
\end{itemize}
The methodology outlined above can be automated, and is part of current research.
 \section{Conclusion}

When the hybrid system is well-posed as defined in~\cite{goebel2012hybrid},
which requires mild conditions on the system data, the satisfaction of the
formulas guaranteed by our results are robust to small general perturbations
over finite hybrid time horizons. Such intrinsic robustness, though small, is
useful in real-world applications as it allows for small errors on the initial
conditions, small perturbations during flows (both on $F$ and
$C$) and at jumps (both on $G$ and $D$).

Future work includes certifying formulas for hybrid systems with robustness to large perturbations and over the infinite horizon.
The authors in~\cite{fainekos2009robustness} propose robust semantics for
Metric Temporal Logic (MTL) formulas to prove that a continuous-time signal
satisfies an MTL specification robustly.
By extending the ideas therein to our setting, it might be possible to assure the satisfaction of temporal logic specifications with robustness to large perturbations.

%%%%%%%%%%%%%%%%%%%%%%%%%%%%%%%%%%%%%%%%%%%%%%%%%%%%%%%%%%%%%%%%
\section*{Acknowledgement}
This research has been partially supported by the National Science Foundation under Grant no. ECS-1710621 and Grant no. CNS-1544396, by the Air Force Office of Scientific Research under Grant no. FA9550-16-1-0015, Grant no. FA9550-19-1-0053, and Grant no. FA9550-19-1-0169, and by CITRIS and the Banatao Institute at the University of California.
%%%%%%%%%%%%%%%%%%%%%%%%%%%%%%%%%%%%%%%%%%%%%%%%%%%%%%%%%%%%%%%%
%%%%%%%%%%%%%%%%%%%%%%%%%%%%%%%%%%%%%%%%%%%%%%%%%%%%%%%%%%%%%%%%
%% The Appendices part is started with the command \appendix;
%% appendix sections are then done as normal sections
%% \appendix
\appendix
\section*{Appendix}

\renewcommand{\thesection}{A} 
\section{Nonsmooth Lyapunov Functions}
\label{appendix:nonsmooth_ly}

For a hybrid system $\mathcal{H}=(C,F,D,G)$, let $V : \mathcal{X} \rightarrow \mathbb{R}$ be continuous on $\mathcal{X}$ and locally Lipschitz on a neighborhood of $C$. The generalized gradient of $V$ at $x \in C$, denoted by $\partial V(x)$, is a closed, convex and nonempty set equal to the convex hull of all limits of the sequence $\nabla V(x_i)$, where $x_i$ is any sequence converging to $x$ while $x$ avoids an arbitrary set of measure zero containing all the points at which $V$ is not differentiable. As $V$ is locally Lipschitz, $\nabla V$ exists almost everywhere.
%\PaperConf{
The generalized directional derivative of $V$ at $x$ in the direction of $v$ can be presented as follows \cite{clarke1990optimization}:
\begin{equation}
	V^\circ (x,v) = \max_{\zeta \in \partial V(x)} \langle \zeta,v \rangle \mbox{.}
\end{equation}
In addition, for any solution $t \mapsto \phi(t,0)$ to $\dot{\phi}(t,0) \in F(x)$,
\begin{equation}
	\tfrac{d}{dt} V(\phi(t,0)) \leq V^\circ (\phi(t,0), \dot{\phi}(t,0))
\end{equation}
for almost all $t$ in the domain of definition of $\phi$, where $\tfrac{d}{dt} V(\phi(t,0))$ is understood in the standard sense since $V$ is locally Lipschitz.
%}%
%\nPaperConf{The generalized directional derivative of $V$ at $x$ in the direction of $v$ can be presented as $V^\circ (x,v) \!=\! \max_{\zeta \in \partial V(x)} \langle \zeta,v \rangle$; see \cite{clarke1990optimization}.
%
%In addition, for any solution $t \mapsto \phi(t,0)$ to $\dot{\phi}(t,0) \in F(x)$, $\tfrac{d}{dt} V(\phi(t,0)) \leq V^\circ (\phi(t,0), \dot{\phi}(t,0))$ for almost all $t$ in the domain of definition of $\phi$, where $\tfrac{d}{dt} V(\phi(t,0))$ is understood in the standard sense since $V$ is locally Lipschitz.}
%

To bound the increase of the function $V$ along solutions to a hybrid system $\mathcal{H}$, we define the function $u_C : \mathcal{X} \rightarrow [-\infty, +\infty)$ as follows \cite{18}:
\begin{equation}
	u_C (x) :=
\left\{
	\begin{array}{ll}
		\underset{{v \in F(x)}}{\max} \: \underset{\zeta \in \partial V(x)}{\max} \langle \zeta,v \rangle & x \in C\\
		-\infty & \mbox{otherwise.}
	\end{array}
\right.
\label{eqn:uc_1}
\end{equation}
%\PaperConf{
In particular, for any solution $\phi$ to $\mathcal{H}$ and any $t$ where $\tfrac{d}{dt} V(\phi(t,j))$ exists, we have
\begin{equation}
	\tfrac{d}{dt} V(\phi(t,j)) \leq u_C (\phi(t,j))\mbox{.}
\label{eqn:uc_2}
\end{equation}
%}
%\nPaperConf{In particular, for any solution $\phi$ to $\mathcal{H}$ and any $t$ where $\tfrac{d}{dt} V(\phi(t,j))$ exists, we have $\tfrac{d}{dt} V(\phi(t,j)) \leq u_C (\phi(t,j))$.
%}
%

Furthermore, in order to bound the change in $V$ after jumps, we define the following quantity:
\begin{equation}
	u_D(x) :=
\left\{
	\begin{array}{ll}
		\underset{\zeta \in G(x)}{\max} V(\zeta) - V(x) & x \in D\\
		-\infty & \mbox{otherwise.}
	\end{array}
\right.
\label{eqn:ud_1}
\end{equation}
%\PaperConf{
Then, for any solution $\phi$ to $\mathcal{H}$ and for any $(t_{j+1}, j), (t_{j+1}, j+1) \in$ $\dom\phi$, it follows that
\begin{equation}
	V(\phi(t_{j+1}, j+1)) - V(\phi(t_{j+1}, j)) \leq u_D (\phi (t_{j+1}, j)) \mbox{.}
\label{eqn:ud_2}
\end{equation}
%}
%\nPaperConf{Then, for any solution $\phi$ to $\mathcal{H}$ and for any $(t_{j+1}, j), (t_{j+1}, j+1) \in$ $\dom\phi$, it follows that $V(\phi(t_{j+1}, j+1)) - V(\phi(t_{j+1}, j)) \leq u_D (\phi (t_{j+1}, j))$.}
%

Note that when $F$ is a single-valued map, $u_C(x) = V^\circ (x, F(x))$ for each $x \in C$. When $G$ is a single-valued map, $u_D(x) = V(G(x)) - V(x)$ for each $x \in D$.

\renewcommand{\thesection}{B}
\renewcommand{\thesection}{B}
\section{Results on Finite Time Attractivity}
\label{appendix:eventually}

In the following, we present sufficient conditions that guarantee FTA of a closed set $K$ for a hybrid system $\mathcal{H}$; see \cite{188}. First, Proposition \ref{prop:FTA_flow} characterizes the scenario where the distance of each solution $\phi \in \mathcal{S_H} (\mathcal{O})$ to $K$ strictly decreases during flows, but is only non-increasing at jumps, where $\mathcal{O} \subset \mathcal{N}$ and $\mathcal{N}$ is an open neighborhood of $K$.

% From Theorem 3.5 in FTS
\begin{proposition}
Let a hybrid system $\mathcal{H}=(C,F,D,G)$ on $\mathcal{X}$ and a closed set $K \!\subset\! \mathcal{N} \!\subset\! \mathcal{X}$ with an open set $\mathcal{N}$ such that $G(\mathcal{N}) \!\subset\! \mathcal{N}$.
If there exists a continuous function $V:\mathcal{N} \rightarrow \mathbb{R}_{\geq 0}$, locally Lipschitz on an open neighborhood of $C \cap \mathcal{N}$, and $c_1 > 0$, $c_2 \in [0,1)$ such that
	\begin{itemize}
	\item[1)] for every $x \in \mathcal{N} \cap (\overline{C} \cup D) \setminus K$, each $\phi \in \mathcal{S_H} (x)$ satisfies
	\[
		\tfrac{V^{1-c_2} (x)}{c_1 (1-c_2)} \leq \sup_{(t,j) \in \textrm{dom}\,\phi} t \mbox{;}
	\]
	\item[2)] the function $V$ is positive definite with respect to $K$ and
	\begin{subequations}
	\begin{align}
		u_C(x) + c_1 V^{c_2} (x) \leq 0 \phantom{u_D(x)} & \qquad\forall x \in (C \cap \mathcal{N}) \setminus K\mbox{,} \label{eqn:flow_uc}\\
		u_D(x) \leq 0 \phantom{u_C(x) + c_1 V^{c_2} (x)} & \qquad\forall x \in (D \cap \mathcal{N}) \setminus K \label{eqn:jump_ud} \mbox{,}
	\end{align}
	\label{eqn:all_uc_ud}
	\end{subequations}
	where the functions $u_C$ and $u_D$ are defined in \eqref{eqn:uc_1} and \eqref{eqn:ud_1}, respectively.
	\end{itemize}
Then, the set $K$ is FTA for $\mathcal{H}$ with respect to $\mathcal{O} := L_V(r) \cap (\overline{C} \cup D)$ where $L_V(r) \!=\! \{x \!\in\! \mathcal{X} : V(x) \leq r \}, r \!\in\! [0, \infty]$, is a sublevel set of $V$ contained in $\mathcal{N}$.

Furthermore, for each $\phi \in \mathcal{S_H} (\mathcal{O})$,
defining $\xi := \phi(0,0)$, the following holds:
\begin{itemize}
	\item[a)] The settling-time function $\mathcal{T}$ satisfies $\mathcal{T}(\phi) \leq \mathcal{T}^\star(\xi) + \mathcal{J}^\star(\phi)$
	where $\mathcal{T}^\star(\xi) = \tfrac{V^{1-c_2} (\xi)}{c_1 (1-c_2)}$ and $\mathcal{J}^\star(\phi)$ is such that $(\mathcal{T}^\star(\xi), \mathcal{J}^\star(\phi)) \in \dom\phi$; and
	\item[b)] $|\phi(t,j)|_K = 0$ for some $(t,j) \in \dom\phi$ such that $t \leq \mathcal{T}^\star(\xi)$.
\end{itemize}
\label{prop:FTA_flow}
\end{proposition}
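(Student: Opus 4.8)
The plan is to run a hybrid comparison argument against the scalar flow $\dot y = -c_1 y^{c_2}$, using hypothesis 1) only to guarantee that enough ordinary time elapses along the solution for this scalar flow to reach zero. Fix $\phi \in \mathcal{S}_{\mathcal{H}}(\mathcal{O})$ and set $\xi := \phi(0,0) \in L_V(r) \cap (\overline{C} \cup D)$; if $\xi \in K$ the conclusion is immediate with $\mathcal{T}(\phi) = 0$, so assume $\xi \notin K$. First I would show that the solution stays in $\mathcal{N}$ — in fact in $L_V(r)$ — until it first reaches $K$: by \eqref{eqn:uc_2} and \eqref{eqn:flow_uc}, $\tfrac{d}{dt} V(\phi(t,j)) \leq u_C(\phi(t,j)) \leq -c_1 V^{c_2}(\phi(t,j)) \leq 0$ for almost all $t$ on each flow interval before $K$ is reached, while by \eqref{eqn:ud_2} and \eqref{eqn:jump_ud}, $V(\phi(t,j+1)) - V(\phi(t,j)) \leq u_D(\phi(t,j)) \leq 0$ at each such jump; hence the value $V(\phi(t,j))$ is non-increasing along the solution until $K$ is hit, so $V(\phi(t,j)) \leq V(\xi) \leq r$, which keeps the solution in $L_V(r) \subset \mathcal{N}$ (using $G(\mathcal{N}) \subset \mathcal{N}$ for the jumps), so that $V$ is defined along it and all of the inequalities above are legitimate.

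Next I would invoke the comparison principle. Let $y$ be the maximal solution of $\dot y = -c_1 y^{c_2}$, $y(0) = V(\xi)$, explicitly $y(t) = \bigl(V(\xi)^{1-c_2} - c_1(1-c_2)t\bigr)^{1/(1-c_2)}$ for $t \in [0, \mathcal{T}^\star(\xi)]$ and $y(t) = 0$ for $t \geq \mathcal{T}^\star(\xi)$, with $\mathcal{T}^\star(\xi) = \tfrac{V^{1-c_2}(\xi)}{c_1(1-c_2)}$. Applying the standard comparison lemma to the absolutely continuous map $t \mapsto V(\phi(t,j))$ on each flow interval, together with the non-increase of $V$ at jumps, yields $V(\phi(t,j)) \leq y(t)$ for every $(t,j) \in \dom\phi$ no later than the first hybrid time at which $\phi$ enters $K$. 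Now hypothesis 1) with $x = \xi$ gives $\sup_{(t,j)\in\dom\phi} t \geq \mathcal{T}^\star(\xi)$, so either the solution already reached $K$ at a smaller ordinary time, or there is a hybrid time $(\mathcal{T}^\star(\xi), \mathcal{J}^\star(\phi)) \in \dom\phi$; at such a time $V\bigl(\phi(\mathcal{T}^\star(\xi), \mathcal{J}^\star(\phi))\bigr) \leq y(\mathcal{T}^\star(\xi)) = 0$, and since $V$ is positive definite with respect to $K$ and the point lies in $\mathcal{N}$, it must lie in $K$. This gives part b): $|\phi(t,j)|_K = 0$ at some $(t,j) \in \dom\phi$ with $t \leq \mathcal{T}^\star(\xi)$.

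From here the remaining claims fall out. The set $\{(t,j) \in \dom\phi : \phi(t,j) \in K\}$ is nonempty by the previous step and, intersected with any bounded portion of $\dom\phi$, is compact (a finite union of compact intervals, the preimage of the closed set $K$ under a continuous hybrid arc), so $\mathcal{T}(\phi) := \inf\{t+j : \phi(t,j) \in K\}$ is attained at some $(t_1,j_1)$; since $t_1 \leq \mathcal{T}^\star(\xi)$ and the number of jumps occurring before ordinary time $\mathcal{T}^\star(\xi)$ is $\mathcal{J}^\star(\phi)$, we get $\sup_{(t,j)\in\dom\phi}(t+j) \geq t_1 + j_1 = \mathcal{T}(\phi)$ and $\mathcal{T}(\phi) \leq \mathcal{T}^\star(\xi) + \mathcal{J}^\star(\phi)$, i.e., part a). For the limit in Definition~\ref{def:FTA}, continuity of $t \mapsto \phi(t,j_1)$ on $I^{j_1}$ gives $\lim_{t \nearrow t_1} \phi(t,j_1) = \phi(t_1,j_1) \in K$, hence $|\phi(t,j_1)|_K \to 0$; points of $\dom\phi$ at levels $j < j_1$ satisfy $t + j \leq \mathcal{T}(\phi) - 1$, so they are irrelevant to the limit as $t+j \nearrow \mathcal{T}(\phi)$, and the limit condition holds. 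Combining, $K$ is FTA for $\mathcal{H}$ with respect to $\mathcal{O}$.

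The main obstacle I anticipate is the rigorous hybrid comparison step: ensuring the scalar bound $V(\phi(t,j)) \leq y(t)$ genuinely propagates through jumps (it does, because jumps are non-increasing, but one must phrase the induction over successive flow intervals carefully, and only up to the first entry into $K$, since $u_C \leq -c_1 V^{c_2}$ is only assumed off $K$), and handling the degenerate edge cases — $\mathcal{T}^\star(\xi)$ equal to $\sup_{(t,j)} t$ without being attained, or $\mathcal{T}(\phi)$ attained right after a jump — in which the limit ``$t+j \nearrow \mathcal{T}(\phi)$'' must be read as a limit over the remaining (possibly eventually empty) portion of the domain. Relatedly, converting $V(\phi(t,j)) = 0$ into $\phi(t,j) \in K$ is exactly where the positive definiteness of $V$ with respect to $K$ from hypothesis 2) is used.
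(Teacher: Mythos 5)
Your proposal is correct and follows essentially the same route as the paper's proof: the comparison against $\dot y = -c_1 y^{c_2}$ is just the integrated form of the paper's separation-of-variables step (yielding $V^{1-c_2}(\phi(t,j)) \leq V^{1-c_2}(\xi) - c_1(1-c_2)t$), hypothesis 1) is used identically to guarantee the domain is long enough in ordinary time, and positive definiteness converts $V=0$ into membership in $K$. The only organizational difference is that you establish invariance of $L_V(r)$ upfront by monotonicity of $V$, whereas the paper defers it and argues by contradiction; your explicit handling of the attainment of $\mathcal{T}(\phi)$ and the limit in the FTA definition is, if anything, slightly more careful than the paper's.
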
%
\begin{proof}
Let $\phi \in \mathcal{S_H} (\xi)$ with $\xi \in L_V(r) \cap (\overline{C} \cup D)$.
Pick any $(t,j) \in$ $\dom \phi$ and let $0 = t_0 \leq t_1 \leq \cdots \leq t_{j+1} = t$ satisfy
\begin{equation}
	\dom \phi \cap ( [0,t] \times \{0,1,\dots,j\} ) = \bigcup_{i=0}^j([t_i, t_{i+1}] \times \{i\}) \mbox{.}
\end{equation}
Now, suppose that, for each $i \in \{0,1,\dots,j\}$ and almost all $s \in [t_i, t_{i+1}]$, $\phi(s,i) \in (C \cap L_V(r)) \setminus K$.
We will later show that this is the case by picking $t = t_{j+1}$ and $j$ appropriately.
Note that $L_V(r) \subset \mathcal{N}$ by assumption.
Using \eqref{eqn:uc_2}, the condition in \eqref{eqn:flow_uc} implies that, for each $i \in \{0,1,\dots,j\}$ and for almost all $s \in [t_i, t_{i+1}]$,
\begin{equation}
	\tfrac{d}{ds} V(\phi(s,i)) \leq u_C(\phi(s,i)) \leq -c_1 V^{c_2}(\phi(s,i)) \mbox{,}
\end{equation}
which implies that
\begin{equation}
	V^{-c_2} (\phi(s,i)) \,dV (\phi(s,i)) \leq -c_1 ds \mbox{.}
\end{equation}
Integrating over $[t_i, t_{i+1}]$ both sides of this inequality yields
\begin{equation}
%	\frac{1}{1-c_2} \left(V^{1-c_2}(\phi(t_{i+1}, i)) - V^{1-c_2} (\phi(t_i,i))\right) \leq -c_1(t_{i+1} - t_i) \mbox{.}
	\tfrac{1}{1-c_2} \left(V^{1-c_2}(\phi(t_{i+1}, i)) - V^{1-c_2} (\phi(t_i,i))\right) \leq -c_1(t_{i+1} - t_i) \mbox{.}
\label{eqn:proof_c}
\end{equation}
Similarly, for each $i \in \{1,\dots,j\}$, $\phi(t_i,i-1) \in (D \cap L_V(r))\setminus K$.
As stated above, we will later show that this is the case by picking $t = t_{j+1}$ and $j$ appropriately.
Then,
\begin{equation}
	V(\phi(t_i,i)) - V(\phi(t_i, i-1)) \leq 0 \mbox{.}
\label{eqn:proof_d}
\end{equation}
The two inequalities in \eqref{eqn:proof_c} and \eqref{eqn:proof_d} imply that, for each $(t,j) \in \dom \phi$,
\begin{equation}
	\tfrac{1}{1-c_2} \left(V^{1-c_2}(\phi(t,j)) - V^{1-c_2} (\xi) \right) \leq -c_1 t \mbox{.}
\end{equation}
%Using positive definiteness of $V$, item 1), and the fact that $c_2 \in (0,1)$, we get
Using the fact that $c_2 \in (0,1)$, we get 
\begin{equation}
%	\alpha_1^{1-c_2} (|\phi (t,j)|_K) \leq V^{1-c_2} (\phi(t,j)) \leq V^{1-c_2}(\xi) - c_1 (1-c_2) t \mbox{.}
	V^{1-c_2} (\phi(t,j)) \leq V^{1-c_2}(\xi) - c_1 (1-c_2) t \mbox{.}
\end{equation}
Then, it follows that the quantity $V^{1 - c_2}(\xi) - c_1(1 - c_2)t$ converges to zero in finite time that is upper bounded by $\tfrac{V^{1 - c_2}(\xi)}{c_1(1 - c_2)}$,
which, in turn, implies that 
$\phi$ converges to the set $K$ in finite time since $V$ is positive definite with respect to $K$ and using item 1). Indeed, item 1) implies that each solution starting from 
$\xi$ satisfies $\dom \phi \cap 
\big( \big\{ \tfrac{V^{1 - c_2}(\xi)}{c_1(1 - c_2)} \big\} \times \mathbb{N}\big) \neq \emptyset$.
Consequently, we conclude that there exists a hybrid time $(t^\star,j^\star) \in \dom\phi$ such that $\phi(t^\star,j^\star) \in K$ with $t^\star \leq \tfrac{V^{1 - c_2}(\xi)}{c_1(1 - c_2)}$.

Next, we show that $\phi$ stays in $L_V(r)$ until it reaches the set $K$.
Proceeding by contradiction, if $\phi$ does not stay in $L_V(r)$ until $\phi$ reaches the set $K$, then there exists a first hybrid time $(t',j') \in \dom\phi$ such that $V(\phi(t',j')) > r$ and $\phi(t,j) \notin K$ for every $(t,j) \in \dom\phi$ such $t + j \leq t' + j'$. Note that $\phi(0,0) \in L_V(r) \subset \mathcal{N}$. Then, using the conditions in \eqref{eqn:all_uc_ud}, the fact that $G(\mathcal{N}) \subset \mathcal{N}$, the continuity of $V$ and the concept of solutions to hybrid inclusions, we conclude that $V(\phi(t',j')) \leq V(\phi(0,0)) \leq r$ since $\phi(t,j) \in (\overline{C} \cup D) \setminus K$ for every $(t,j) \in \dom\phi$ such $t + j \leq t' + j'$. Hence, the contradiction follows.

Furthermore, an upper bound for the settling-time function can be computed as
\begin{equation}
	\mathcal{T}(\phi) \leq \mathcal{T}^\star (\xi) + \mathcal{J}^\star (\phi) \mbox{,}
\end{equation}
where $\mathcal{T}^\star(\xi) = \frac{V^{1-c_2} (\xi)}{c_1 (1-c_2)}$, and $\mathcal{J}^\star(\phi)$ can be  chosen as
$$\mathcal{J}^\star(\phi) = \sup_{\substack{(t,j)\in\dom\phi\\ t < \mathcal{T}^\star(\xi)}} j\mbox{.}$$
Note that since $\mathcal{T}^\star(\xi) \leq \sup_{(t,j)\in \dom \phi} t$, the existence of $(\mathcal{T}^\star (\xi), \mathcal{J}^\star (\phi)) \in \dom \phi$ is guaranteed.
\end{proof}

\begin{remark}
Condition 1) in Proposition \ref{prop:FTA_flow} guarantees that the domain of definition of the solutions to $\mathcal{H}$ are long enough to allow for the solution to converge to $K$. Condition \eqref{eqn:flow_uc} guarantees finite time convergence of $\lim_{t+j \to \mathcal{T}(\phi)} |\phi(t,j)|_K$ to zero over a finite amount of ordinary time $t$ (potentially with jumps within it). Finally, the upper bound on the settling-time function $\mathcal{T}$ depending on the Lyapunov function and the initial condition will be effectively exploited to estimate the amount of hybrid time it takes for a temporal specification to be satisfied.
\end{remark}

A dual version of Proposition \ref{prop:FTA_flow} is given next, namely, it pertains to the case,
when the distance of a solution $\phi \in \mathcal{S_H}(\mathcal{O})$ to a closed set $K$ strictly decreases at jumps where $\mathcal{O} \subset \mathcal{N}$ and $\mathcal{N}$ is an open neighborhood of $K$.

% From Theorem 3.9 in FTS
\begin{proposition}
Let a hybrid system $\mathcal{H}=(C,F,D,G)$ on $\mathcal{X}$ and a closed set $K \!\subset\! \mathcal{N} \!\subset\! \mathcal{X}$ with an open set $\mathcal{N}$ such that $G(\mathcal{N}) \!\subset\! \mathcal{N}$.
If there exists a continuous function $V:\mathcal{N} \rightarrow \mathbb{R}_{\geq 0}$, locally Lipschitz on an open neighborhood of $C \cap \mathcal{N}$, and $c > 0$ such that
\begin{itemize}
	\item[1)] for every $x \in \mathcal{N} \cap (\overline{C} \cup D) \setminus K$, each $\phi \in \mathcal{S_H}(x)$ satisfies
		\[
			\mbox{ceil} \left( \tfrac{V(x)}{c} \right) \leq \sup_{(t,j) \in \textrm{dom}\,\phi} j \mbox{;}
		\]
	\item[2)] the function $V$ is positive definite with respect to $K$ and
\begin{subequations}
	\begin{align}
		u_C(x) \leq 0\phantom{-\min\{c, V(x)\}} &\qquad\forall x \in (C \cap \mathcal{N}) \setminus K\mbox{,} \label{eqn:jump_uc_2}\\
		u_D(x) \leq -\min\{c, V(x)\}\phantom{0}& \qquad\forall x \in (D \cap \mathcal{N}) \setminus K \label{eqn:jump_ud_2} \mbox{,}
	\end{align}
\label{eqn:all_uc_ud_2}
\end{subequations}
	where $u_C$ and $u_D$ are defined in \eqref{eqn:uc_1} and \eqref{eqn:ud_1}, respectively.
\end{itemize}
Then, the set $K$ is FTA for $\mathcal{H}$ with respect to $\mathcal{O} := L_V(r) \cap (\overline{C} \cup D)$ where $L_V(r) \!=\! \{x \!\in\! \mathcal{X} : V(x) \leq r \}, r \!\in\! [0, \infty]$, is a sublevel set of $V$ contained in $\mathcal{N}$.

Furthermore, for each $\phi \in \mathcal{S_H} (\mathcal{O})$, defining $\xi = \phi(0,0)$, the following holds:
\begin{itemize}
	\item[a)] the settling-time function $\mathcal{T}$ satisfies $\mathcal{T}(\phi) \leq \mathcal{T}^\star(\phi) + \mathcal{J}^\star(\xi)$
	where $\mathcal{J}^\star(\xi) = \mbox{ceil} \left( \tfrac{V(\xi)}{c} \right)$ and $\mathcal{T}^\star (\phi)$ is such that $(\mathcal{T}^\star(\phi), \mathcal{J}^\star(\xi)) \in \dom\phi$ and $(\mathcal{T}^\star(\phi), \mathcal{J}^\star(\xi)-1) \in \dom\phi$;
	\item[b)] $|\phi(t,j)|_K = 0$ for some $(t,j) \in \dom\phi$ such that $j \leq \mathcal{J}^\star(\xi)$.
\end{itemize}
\label{prop:FTA_jump}
 
\begin{proof}
Let $\phi \in \mathcal{S_H}(\xi)$ with $\xi \in L_V(r) \cap (\overline{C} \cup D)$.
Pick any $(t,j) \in \dom \phi$ and let $0 = t_0 \leq t_1 \leq \dots \leq t_{j+1} = t$ satisfy
\begin{equation}
	\dom \phi \cap ( [0,t] \times \{0,1,\dots,j\} ) = \bigcup_{i=0}^j([t_i, t_{i+1}] \times \{i\}) \mbox{.}
\end{equation}
Now, suppose that, for each $i \in \{0,1,\dots,j\}$ and almost all $s \in [t_i, t_{i+1}]$, $\phi(s,i) \in (C \cap L_V(r)) \setminus K$.
We will later show that this is the case by picking $t = t_{j+1}$ and $j$ appropriately.
Note that $L_V(r) \subset \mathcal{N}$ by assumption.
Using \eqref{eqn:uc_2}, the condition in \eqref{eqn:jump_uc_2} implies that, for each $i \in \{0,1,\dots,j\}$ and for almost all $s \in [t_i,t_{i+1}]$,
%\begin{equation}
%	\frac{d}{ds}V(\phi(s,i)) \leq 0\mbox{.}
%\end{equation}
$\tfrac{dV(\phi(s,i))}{ds} \leq 0$.
Integrating over $[t_i, t_{i+1}]$ both sides of this inequality yields
\begin{equation}
	V(\phi(t_{i+1},i)) - V(\phi(t_i,i)) \leq 0 \mbox{.}
\label{eqn:proof_fta1}
\end{equation}
Similarly, by using \eqref{eqn:ud_2} and \eqref{eqn:jump_ud_2}, for each $i \in \{1, \dots, j\}$, $\phi(t_i, i-1) \in (D \cap L_V(r)) \setminus K$ and
\begin{equation}
	V (\phi(t_i,i)) - V(\phi(t_i,i-1)) \leq - \min \{c, V(\phi(t_i,i-1))\} \mbox{.}
\label{eqn:proof_fta2}
\end{equation}
The two inequalities in \eqref{eqn:proof_fta1} and \eqref{eqn:proof_fta2} imply that, for each $(t,j) \in \dom\phi$,
\begin{equation*}
	V (\phi(t,j)) - V(\xi) \leq -\sum_{i=1}^j \min \{c, V(\phi(t_i,i-1))\} \mbox{.}
\end{equation*}
Then, it follows that the quantity $V(\xi) - \sum_{i=1}^j \min \{c, V(\phi(t_i,i-1))$ converges to zero in finite time that is upper bounded by $\mbox{ceil} \big( \tfrac{V(\xi)}{c} \big)$, which implies that $\phi$ converges to the set $K$ in finite time since $V$ is positive definite with respect to $K$ and due to item 1).
Indeed, item 1) implies that each solution starting from $\xi$ satisfies $\dom\phi \cap ( \mathbb{R}_{\geq 0} \times \{0,1,\dots, J\} ) \neq \emptyset$ where $J = \mbox{ceil} \big( \tfrac{V(\xi)}{c} \big)$.
Consequently, we conclude that there exists a hybrid time $(t^\star, j^\star) \in \dom\phi$ such that $\phi(t^\star, j^\star) \in K$ with $j^\star \leq \mbox{ceil} \big( \tfrac{V(\xi)}{c} \big)$.
%As in the proof of Proposition~\ref{prop:FTA_flow}, this implies that $\phi$ converges to $K$ in finite time.

Furthermore, an upper bound for the settling-time function can be computed as
\begin{equation}
	\mathcal{T}(\phi) \leq \mathcal{T}^\star (\phi) + \mathcal{J}^\star (\xi) \mbox{,}
\end{equation}
where $\mathcal{J}^\star(\xi) = \mbox{ceil} \left( \tfrac{V(\xi)}{c} \right)$ and $\mathcal{T}^\star (\phi)$ is such that $(\mathcal{T}^\star(\phi), \mathcal{J}^\star(\xi)), (\mathcal{T}^\star(\phi), \mathcal{J}^\star(\xi) - 1) \in \dom \phi$.
Note that since $\mathcal{J}^\star(\xi) < \sup_{(t,j)\in \dom \phi} j$, the existence of $(\mathcal{T}^\star (\phi), \mathcal{J}^\star (\xi)) \in \dom \phi$ is guaranteed.
\end{proof}
\end{proposition}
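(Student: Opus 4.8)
The plan is to prove Proposition~\ref{prop:FTA_jump} as the exact dual of Proposition~\ref{prop:FTA_flow}: instead of extracting a strict decrease of $V$ over ordinary time from \eqref{eqn:flow_uc}, I would extract a strict decrease of $V$ \emph{per jump} from \eqref{eqn:jump_ud_2}, while \eqref{eqn:jump_uc_2} is used only to keep $V$ from growing during flows. First I would fix $\xi\in\mathcal{O}=L_V(r)\cap(\overline{C}\cup D)$ with $\xi\notin K$ (the case $\xi\in K$ being immediate since $V$ is positive definite with respect to $K$), pick any $\phi\in\mathcal{S}_{\mathcal{H}}(\xi)$, and for a chosen $(t,j)\in\dom\phi$ write $\dom\phi\cap([0,t]\times\{0,\dots,j\})=\bigcup_{i=0}^{j}([t_i,t_{i+1}]\times\{i\})$. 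I would then assume, pending a forward-invariance argument closed at the end, that $\phi(s,i)$ stays in $L_V(r)\subset\mathcal{N}$ and outside $K$ on all the hybrid times under consideration, so that the bounds in \eqref{eqn:all_uc_ud_2} are available along $\phi$.

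Next, combining \eqref{eqn:uc_2} with $u_C\le 0$ from \eqref{eqn:jump_uc_2} and integrating over each flow interval gives $V(\phi(t_{i+1},i))\le V(\phi(t_i,i))$, i.e. $V$ is non-increasing during flows; and combining \eqref{eqn:ud_2} with \eqref{eqn:jump_ud_2} gives $V(\phi(t_i,i))-V(\phi(t_i,i-1))\le-\min\{c,V(\phi(t_i,i-1))\}$ at each jump. Telescoping these two facts along the domain yields $V(\phi(t,j))-V(\xi)\le-\sum_{i=1}^{j}\min\{c,V(\phi(t_i,i-1))\}$ for every $(t,j)\in\dom\phi$ preceding the first hybrid time at which $\phi$ meets $K$. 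Since the post-jump value cannot increase before the next jump, the pre-jump values $v_i:=V(\phi(t_{i+1},i))$ satisfy $v_i\le\max\{v_{i-1}-c,0\}$ with $v_0\le V(\xi)$, so an easy induction forces $V$ to reach $0$ after at most $J:=\mbox{ceil}(V(\xi)/c)$ jumps; positive definiteness of $V$ with respect to $K$ then gives $\phi(t_{j^\star},j^\star)\in K$ for some $j^\star\le J$. That the $J$-th jump is actually present in $\dom\phi$ (unless $\phi$ already reached $K$ earlier) is exactly what hypothesis~1) provides, namely $\mbox{ceil}(V(\xi)/c)\le\sup_{(t,j)\in\dom\phi}j$, so the domain attains jump level $J$.

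I would then close the postponed claim by the same contradiction argument as in Proposition~\ref{prop:FTA_flow}: if $(t',j')$ were the first hybrid time with $V(\phi(t',j'))>r$ while $\phi$ has not yet met $K$, then on all of $\dom\phi$ up to $(t',j')$ the solution lies in $(\overline{C}\cup D)\setminus K$ and, using $\phi(0,0)\in\mathcal{N}$ and $G(\mathcal{N})\subset\mathcal{N}$, in $\mathcal{N}$; hence \eqref{eqn:all_uc_ud_2} forces $V$ non-increasing there, so by continuity of $V$ and the concept of solutions to hybrid inclusions one gets $V(\phi(t',j'))\le V(\xi)\le r$, a contradiction. With $\phi$ thus confined to $L_V(r)$ until it reaches $K$, the FTA definition follows: $\phi$ attains $K$ at some $(t^\star,j^\star)\in\dom\phi$ with $j^\star\le\mathcal{J}^\star(\xi):=\mbox{ceil}(V(\xi)/c)$, so $\sup_{(t,j)\in\dom\phi}t+j\ge\mathcal{T}(\phi)$, and by closedness of $K$ together with continuity of $t\mapsto|\phi(t,j)|_K$ along flows one obtains $\lim_{(t,j)\in\dom\phi:\,t+j\nearrow\mathcal{T}(\phi)}|\phi(t,j)|_K=0$. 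For the settling-time estimate in~a) I would take $\mathcal{T}^\star(\phi):=t_J$, the ordinary-time coordinate of the $J$-th jump, so that $(\mathcal{T}^\star(\phi),\mathcal{J}^\star(\xi))$ and $(\mathcal{T}^\star(\phi),\mathcal{J}^\star(\xi)-1)$ both lie in $\dom\phi$ by hypothesis~1), and conclude $\mathcal{T}(\phi)\le t_{j^\star}+j^\star\le t_J+J=\mathcal{T}^\star(\phi)+\mathcal{J}^\star(\xi)$ using $t_{j^\star}\le t_J$ and $j^\star\le J$; item~b) is the bound $j^\star\le\mathcal{J}^\star(\xi)$ just recorded.

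I expect the main obstacle to be bookkeeping rather than any isolated deep step. One must run the telescoping Lyapunov estimate only on the part of $\dom\phi$ where $\phi$ is known to stay in $\mathcal{N}\cap(\overline{C}\cup D)\setminus K$, invoke hypothesis~1) to guarantee the required number of jumps is present in $\dom\phi$, and then read off the settling-time bound in the somewhat delicate form involving both $(\mathcal{T}^\star(\phi),\mathcal{J}^\star(\xi))$ and $(\mathcal{T}^\star(\phi),\mathcal{J}^\star(\xi)-1)$; getting the order right --- establishing the $L_V(r)$-confinement (the ``first exit'' argument, which uses only the local validity of the bounds) before running the telescoping estimate --- is the part that needs care, whereas the two Lyapunov inequalities themselves drop straight out of \eqref{eqn:uc_2} and \eqref{eqn:ud_2}.
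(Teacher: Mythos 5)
Your proposal is correct and follows essentially the same route as the paper's proof: non-increase of $V$ along flows from \eqref{eqn:jump_uc_2}, strict decrease by $\min\{c,V\}$ at jumps from \eqref{eqn:jump_ud_2}, a telescoped bound forcing $K$ to be reached within $\mbox{ceil}(V(\xi)/c)$ jumps, hypothesis 1) to guarantee the domain contains that many jumps, and the first-exit contradiction to confine $\phi$ to $L_V(r)$. If anything, you are slightly more explicit than the paper here --- the induction $v_i\le\max\{v_{i-1}-c,0\}$ and the closing of the $L_V(r)$-confinement step, which the paper defers to the argument given for Proposition~\ref{prop:FTA_flow} --- but the substance is identical.
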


The following result combines the conditions in Proposition \ref{prop:FTA_flow} and in Proposition \ref{prop:FTA_jump}.
Its proof can be formulated by combining the arguments in the proofs of Proposition \ref{prop:FTA_flow} and Proposition \ref{prop:FTA_jump}.

% From Theorem 3.10 in FTS
\begin{proposition}
Let a hybrid system $\mathcal{H}=(C,F,D,G)$ on $\mathcal{X}$ and a closed set $K \!\subset\! \mathcal{N} \!\subset\! \mathcal{X}$ with an open set $\mathcal{N}$ such that $G(\mathcal{N}) \!\subset\! \mathcal{N}$. If there exists a continuous function $V:\mathcal{N} \rightarrow \mathbb{R}_{\geq 0}$, locally Lipschitz on an open neighborhood of $C \cap \mathcal{N}$, and $c_1, c_3 > 0$, $c_2 \in [0,1)$ such that item 1) in Proposition \ref{prop:FTA_flow} and item 1) in Proposition \ref{prop:FTA_jump} are satisfied,
the function $V$ is positive definite with respect to $K$ and
\begin{subequations}
	\begin{align}
		u_C(x) \leq - c_1 V^{c_2}(x) \phantom{-\min\{c_3, V(x)\}} & \forall x \in (C \cap \mathcal{N}) \setminus K\mbox{,} \label{eqn:combine_uc}\\
		u_D(x) \leq -\min\{c_3, V(x)\} \phantom{-c_1 V^{c_2}(x)} &\forall x \in (D \cap \mathcal{N}) \setminus K \label{eqn:combine_ud} \mbox{,}
	\end{align}
	\label{eqn:all_uc_ud_3}
\end{subequations}
where $u_C$ and $u_D$ are defined in \eqref{eqn:uc_1} and \eqref{eqn:ud_1}, respectively.
Then, the set $K$ is FTA for $\mathcal{H}$ with respect to $\mathcal{O} := L_V(r) \cap (\overline{C} \cup D)$ where $L_V(r) \!=\! \{x \!\in\! \mathcal{X} : V(x) \leq r \}, r \!\in\! [0, \infty]$, is a sublevel set of $V$ contained in $\mathcal{N}$.

Furthermore, for each $\phi \in \mathcal{S_H} (\mathcal{O})$,
%defining $\xi = \phi(0,0)$,
the following holds:
\begin{itemize}
	\item[a)] the settling-time function $\mathcal{T}$ satisfies $\mathcal{T}(\phi) \leq \min_{i \in \{1,2\}} \left\{\mathcal{T}_i^\star(\phi) + \mathcal{J}_i^\star(\phi)\right\}$
	where $\mathcal{T}_1^\star(\phi) = \tfrac{V^{1-c_2} (\phi(0,0))}{c_1(1-c_2)}$, $\mathcal{J}_1^\star(\phi)$ is such that $(\mathcal{T}_1^\star(\phi), \mathcal{J}_1^\star(\phi)) \in \dom\phi$, $\mathcal{J}_2^\star(\phi)= \mbox{ceil} \Big( \tfrac{V(\phi(0,0))}{c_3} \Big)$, and $\mathcal{T}_2^\star(\phi)$ is such that $(\mathcal{T}_2^\star(\phi), \mathcal{J}_2^\star(\phi)) \in \dom \phi$ and $(\mathcal{T}_2^\star(\phi), \mathcal{J}_2^\star(\phi)-1) \in \dom\phi$;
	\item[b)] $|\phi(t,j)|_K = 0$ for some $(t,j) \in \dom\phi$ such that $t \leq \mathcal{T}_1^\star(\phi)$ or $j \leq \mathcal{J}_2^\star(\phi)$. 
\end{itemize}
\label{prop:FTA_flow_jump}
\end{proposition}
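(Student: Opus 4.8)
The plan is to follow the template of the proofs of Proposition~\ref{prop:FTA_flow} and Proposition~\ref{prop:FTA_jump}, running the flow estimate and the jump estimate in parallel. Fix $\phi \in \mathcal{S}_{\mathcal{H}}(\xi)$ with $\xi \in L_V(r) \cap (\overline{C} \cup D)$, pick $(t,j) \in \dom\phi$, and write $\dom\phi \cap ([0,t]\times\{0,1,\dots,j\}) = \bigcup_{i=0}^j ([t_i,t_{i+1}]\times\{i\})$. Assuming for the moment that $\phi(s,i) \in (C\cap L_V(r))\setminus K$ for almost all $s$ and all relevant $i$, and $\phi(t_i,i-1)\in (D\cap L_V(r))\setminus K$ at jumps (to be justified afterward), the flow bound \eqref{eqn:combine_uc} together with \eqref{eqn:uc_2} yields, exactly as in \eqref{eqn:proof_c}, that $V^{1-c_2}(\phi(t,j)) \leq V^{1-c_2}(\xi) - c_1(1-c_2)\,t$; and the jump bound \eqref{eqn:combine_ud} with \eqref{eqn:ud_2} yields, as in \eqref{eqn:proof_fta2}, that $V(\phi(t,j)) \leq V(\xi) - \sum_{i=1}^j \min\{c_3, V(\phi(t_i,i-1))\}$ (here using that $V$ is nonincreasing along flows, a consequence of \eqref{eqn:combine_uc}, and nonincreasing across jumps). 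Each of these two inequalities forces $V(\phi(t,j)) = 0$, hence $\phi(t,j)\in K$ by positive definiteness of $V$ with respect to $K$, once $t$ reaches $\mathcal{T}_1^\star(\phi)=\tfrac{V^{1-c_2}(\phi(0,0))}{c_1(1-c_2)}$, respectively once $j$ reaches $\mathcal{J}_2^\star(\phi)=\mbox{ceil}\big(\tfrac{V(\phi(0,0))}{c_3}\big)$.

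Next I would invoke item~1) of Proposition~\ref{prop:FTA_flow} and item~1) of Proposition~\ref{prop:FTA_jump} to ensure that the hybrid time domain of $\phi$ is long enough: the former gives $(\mathcal{T}_1^\star(\phi),\mathcal{J}_1^\star(\phi))\in\dom\phi$ for a suitable $\mathcal{J}_1^\star(\phi)$, and the latter gives $(\mathcal{T}_2^\star(\phi),\mathcal{J}_2^\star(\phi))\in\dom\phi$ and $(\mathcal{T}_2^\star(\phi),\mathcal{J}_2^\star(\phi)-1)\in\dom\phi$. Combining these with the two displayed estimates, $\phi$ reaches $K$ at some $(t^\star,j^\star)\in\dom\phi$ with either $t^\star\leq\mathcal{T}_1^\star(\phi)$ or $j^\star\leq\mathcal{J}_2^\star(\phi)$, which gives $\mathcal{T}(\phi)=t^\star+j^\star \leq \min_{i\in\{1,2\}}\{\mathcal{T}_i^\star(\phi)+\mathcal{J}_i^\star(\phi)\}$ and in particular $\sup_{(t,j)\in\dom\phi}t+j \geq \mathcal{T}(\phi)$, which is the second requirement in Definition~\ref{def:FTA}. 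The identity $\lim_{(t,j)\in\dom\phi:\,t+j\nearrow\mathcal{T}(\phi)}|\phi(t,j)|_K = 0$ then follows from $\phi(t^\star,j^\star)\in K$, from $\mathcal{T}(\phi)=t^\star+j^\star$, and from continuity of $|\cdot|_K$ along the absolutely continuous pieces of $\phi$, exactly as in the two source propositions. This establishes that $K$ is FTA for $\mathcal{H}$ with respect to $L_V(r)\cap(\overline{C}\cup D)$, and items~a) and~b) are read off directly from the same two estimates.

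It remains to discharge the standing assumption that $\phi$ stays in $L_V(r)$, hence in $\mathcal{N}$, until it first hits $K$; this I would do by the contradiction argument of Proposition~\ref{prop:FTA_flow}: if $(t',j')$ were the first hybrid time with $V(\phi(t',j'))>r$ while $\phi$ has not yet reached $K$, then using $G(\mathcal{N})\subset\mathcal{N}$, the monotonicity of $V$ along flows and across jumps coming from \eqref{eqn:combine_uc}--\eqref{eqn:combine_ud}, continuity of $V$, and the notion of solution to a hybrid inclusion, one obtains $V(\phi(t',j'))\leq V(\phi(0,0))\leq r$, a contradiction. I expect the only genuine care to be in the bookkeeping of the two simultaneous estimates --- making the ``first time $K$ is reached'' consistent with both the $t$-bound and the $j$-bound, and checking that the hybrid times $(\mathcal{T}_i^\star(\phi),\mathcal{J}_i^\star(\phi))$ actually lie in $\dom\phi$ --- since each individual analytic step is already performed verbatim in Propositions~\ref{prop:FTA_flow} and~\ref{prop:FTA_jump}, which is why the statement here merely asserts that the proof follows by combining those two arguments.
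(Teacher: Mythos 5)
Your proposal is correct and takes essentially the same route the paper intends: the paper gives no detailed proof of Proposition~\ref{prop:FTA_flow_jump}, stating only that it ``can be formulated by combining the arguments in the proofs of Proposition~\ref{prop:FTA_flow} and Proposition~\ref{prop:FTA_jump},'' and your parallel running of the flow estimate and the jump estimate, with the cross-use of each condition to guarantee monotonicity in the other estimate and the contradiction argument for staying in $L_V(r)$, is exactly that combination.
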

%%%%%%%%%%%%%%%%%%%%%%%%%%%%%%%%%%%%%%%%%%%%%%%%%%%%%%%%%%%%%%%%

%%%%%%%%%%%%%%%%%%%%%%%%%%%%%%%%%%%%%%%%%%%%%%%%%%%%%%%%%%%%%%%%
%\newpage
%\bibliographystyle{elsarticle-num}
\bibliographystyle{IEEEtran}
\bibliography{LTL_NAHS_arXiv}

\begin{thebibliography}{10}
\providecommand{\url}[1]{#1}
\csname url@rmstyle\endcsname
\providecommand{\newblock}{\relax}
\providecommand{\bibinfo}[2]{#2}
\providecommand\BIBentrySTDinterwordspacing{\spaceskip=0pt\relax}
\providecommand\BIBentryALTinterwordstretchfactor{4}
\providecommand\BIBentryALTinterwordspacing{\spaceskip=\fontdimen2\font plus
\BIBentryALTinterwordstretchfactor\fontdimen3\font minus
  \fontdimen4\font\relax}
\providecommand\BIBforeignlanguage[2]{{%
\expandafter\ifx\csname l@#1\endcsname\relax
\typeout{** WARNING: IEEEtran.bst: No hyphenation pattern has been}%
\typeout{** loaded for the language `#1'. Using the pattern for}%
\typeout{** the default language instead.}%
\else
\language=\csname l@#1\endcsname
\fi
#2}}

\bibitem{tabuada2006linear}
P.~Tabuada and G.~J. Pappas, ``Linear time logic control of discrete-time
  linear systems,'' \emph{IEEE Transactions on Automatic Control}, vol.~51,
  no.~12, pp. 1862--1877, 2006.

\bibitem{kloetzer2008fully}
M.~Kloetzer and C.~Belta, ``A fully automated framework for control of linear
  systems from temporal logic specifications,'' \emph{IEEE Transactions on
  Automatic Control}, vol.~53, no.~1, pp. 287--297, 2008.

\bibitem{kwon2008ltlc}
Y.~Kwon and G.~Agha, ``{LTLC}: Linear temporal logic for control,''
  \emph{Hybrid Systems: Computation and Control}, pp. 316--329, 2008.

\bibitem{Pnueli.77}
A.~Pnueli, ``The temporal logic of programs,'' in \emph{18th Annual Symposium
  on Foundations of Computer Science, 1977}.\hskip 1em plus 0.5em minus
  0.4em\relax IEEE, 1977, pp. 46--57.

\bibitem{MannaPnueli.92}
A.~Pnueli and Z.~Manna, ``The temporal logic of reactive and concurrent
  systems,'' \emph{Springer}, vol.~16, p.~12, 1992.

\bibitem{Fainekos.ea.09.Automatica}
G.~E. Fainekos, A.~Girard, H.~Kress-Gazit, and G.~J. Pappas, ``Temporal logic
  motion planning for dynamic robots,'' \emph{Automatica}, vol.~45, no.~2, pp.
  343--352, 2009.

\bibitem{karaman2008optimal}
S.~Karaman, R.~G. Sanfelice, and E.~Frazzoli, ``Optimal control of mixed
  logical dynamical systems with linear temporal logic specifications,'' in
  \emph{Decision and Control, 2008. CDC 2008. 47th IEEE Conference on}.\hskip
  1em plus 0.5em minus 0.4em\relax IEEE, 2008, pp. 2117--2122.

\bibitem{Wolff.ea.14.ICRA}
E.~M. Wolff, U.~Topcu, and R.~M. Murray, ``Optimization-based trajectory
  generation with linear temporal logic specifications,'' in \emph{Robotics and
  Automation (ICRA), 2014}.\hskip 1em plus 0.5em minus 0.4em\relax IEEE, 2014,
  pp. 5319--5325.

\bibitem{dimitrova2014deductive}
R.~Dimitrova and R.~Majumdar, ``Deductive control synthesis for
  alternating-time logics,'' in \emph{Proceedings of the 14th International
  Conference on Embedded Software}.\hskip 1em plus 0.5em minus 0.4em\relax ACM,
  2014, p.~14.

\bibitem{Saha.Julius.16.ACC}
S.~Saha and A.~A. Julius, ``An {MILP} approach for real-time optimal controller
  synthesis with metric temporal logic specifications,'' in \emph{American
  Control Conference (ACC), 2016}.\hskip 1em plus 0.5em minus 0.4em\relax IEEE,
  2016, pp. 1105--1110.

\bibitem{bisoffi2018hybrid}
A.~Bisoffi and D.~V. Dimarogonas, ``A hybrid barrier certificate approach to
  satisfy linear temporal logic specifications,'' in \emph{2018 Annual American
  Control Conference (ACC)}.\hskip 1em plus 0.5em minus 0.4em\relax IEEE, 2018,
  pp. 634--639.

\bibitem{Raman.ea.15.HSCC}
V.~Raman, A.~Donz{\'e}, D.~Sadigh, R.~M. Murray, and S.~A. Seshia, ``Reactive
  synthesis from signal temporal logic specifications,'' in \emph{Proceedings
  of the 18th International Conference on Hybrid Systems: Computation and
  Control}.\hskip 1em plus 0.5em minus 0.4em\relax ACM, 2015, pp. 239--248.

\bibitem{vanderSchaftSchumacher00}
A.~van~der Schaft and H.~Schumacher, \emph{An Introduction to Hybrid Dynamical
  Systems}.\hskip 1em plus 0.5em minus 0.4em\relax Lecture Notes in Control and
  Information Sciences, Springer, 2000.

\bibitem{LygerosJohanssonSimicZhangSastry03}
J.~Lygeros, K.~H. Johansson, S.~N. Simic, J.~Zhang, and S.~S. Sastry,
  ``Dynamical properties of hybrid automata,'' \emph{IEEE Transactions on
  automatic control}, vol.~48, no.~1, pp. 2--17, 2003.

\bibitem{Collins04}
P.~Collins, ``A trajectory-space approach to hybrid systems,'' in
  \emph{MTNS2004z}, 2004.

\bibitem{HaddadChellaboinaNersesov06}
W.~M. Haddad, V.~Chellaboina, and S.~G. Nersesov, \emph{Impulsive and Hybrid
  Dynamical Systems: Stability, Dissipativity, and Control}.\hskip 1em plus
  0.5em minus 0.4em\relax Princeton University, 2006.

\bibitem{goebel2012hybrid}
R.~Goebel, R.~G. Sanfelice, and A.~R. Teel, \emph{Hybrid Dynamical Systems:
  modeling, stability, and robustness}.\hskip 1em plus 0.5em minus 0.4em\relax
  Princeton University Press, 2012.

\bibitem{Johansson99}
K.~Johansson, M.~Egerstedt, J.~Lygeros, and S.~Sastry, ``On the regularization
  of {Z}eno hybrid automata,'' \emph{Systems \& Control Letters}, vol.~38,
  no.~3, pp. 141--150, 1999.

\bibitem{fainekos2009robustness}
G.~E. Fainekos and G.~J. Pappas, ``Robustness of temporal logic specifications
  for continuous-time signals,'' \emph{Theoretical Computer Science}, vol. 410,
  no.~42, pp. 4262--4291, 2009.

\bibitem{185}
J.~Chai and R.~G. Sanfelice, ``Forward invariance of sets for hybrid dynamical
  systems {(Part I)},'' \emph{IEEE Transactions on Automatic Control}, 2018.

\bibitem{188}
Y.~Li and R.~G. Sanfelice, ``Finite time stability of sets for hybrid dynamical
  systems,'' \emph{Automatica}, vol. 100, pp. 200--211, 2019.

\bibitem{cimatti2015hreltl}
A.~Cimatti, M.~Roveri, and S.~Tonetta, ``{HRELTL}: A temporal logic for hybrid
  systems,'' \emph{Information and Computation}, vol. 245, pp. 54--71, 2015.

\bibitem{Khalil}
H.~Khalil, \emph{Nonlinear Systems}, 3rd~ed.\hskip 1em plus 0.5em minus
  0.4em\relax Prentice Hall, 2002.

\bibitem{176}
H.~Han and R.~G. Sanfelice, ``Sufficient conditions for temporal logic
  specifications in hybrid dynamical systems,'' in \emph{Proceedings of the 6th
  Analysis and Design of Hybrid Systems}, vol. Volume 51, 2018, pp. 97--102.

\bibitem{105}
R.~G. Sanfelice, \emph{Analysis and Design of Cyber-Physical Systems: A Hybrid
  Control Systems Approach}.\hskip 1em plus 0.5em minus 0.4em\relax CRC Press,
  2015, pp. 3--31.

\bibitem{Maghenem.Sanfelice.18.}
M.~Maghenem and R.~G. Sanfelice, ``Barrier function certificates for invariance
  in hybrid inclusions,'' in \emph{Decision and Control (CDC), 57th Annual
  Conference on}.\hskip 1em plus 0.5em minus 0.4em\relax IEEE, 2018.

\bibitem{subbaraman2016equivalence}
A.~Subbaraman and A.~R. Teel, ``On the equivalence between global recurrence
  and the existence of a smooth lyapunov function for hybrid systems,''
  \emph{Systems \& Control Letters}, vol.~88, pp. 54--61, 2016.

\bibitem{goebel2009hybrid}
R.~Goebel, R.~G. Sanfelice, and A.~R. Teel, ``Hybrid dynamical systems,''
  \emph{IEEE Control Systems}, vol.~29, no.~2, 2009.

\bibitem{wolper2000constructing}
P.~Wolper, ``Constructing automata from temporal logic formulas: A tutorial,''
  in \emph{School organized by the European Educational Forum}.\hskip 1em plus
  0.5em minus 0.4em\relax Springer, 2000, pp. 261--277.

\bibitem{babiak2012ltl}
T.~Babiak, M.~K{\v{r}}et{\'\i}nsk{\`y}, V.~{\v{R}}eh{\'a}k, and
  J.~Strej{\v{c}}ek, ``Ltl to b{\"u}chi automata translation: Fast and more
  deterministic,'' in \emph{International Conference on Tools and Algorithms
  for the Construction and Analysis of Systems}.\hskip 1em plus 0.5em minus
  0.4em\relax Springer, 2012, pp. 95--109.

\bibitem{belta2017formal}
C.~Belta, B.~Yordanov, and E.~A. Gol, \emph{Formal methods for discrete-time
  dynamical systems}.\hskip 1em plus 0.5em minus 0.4em\relax Springer, 2017,
  vol.~89.

\bibitem{clarke1990optimization}
F.~H. Clarke, \emph{Optimization and nonsmooth analysis}.\hskip 1em plus 0.5em
  minus 0.4em\relax SIAM, 1990.

\bibitem{18}
R.~G. Sanfelice, R.~Goebel, and A.~R. Teel, ``Invariance principles for hybrid
  systems with connections to detectability and asymptotic stability,''
  \emph{IEEE Transactions on Automatic Control}, vol.~52, no.~12, pp.
  2282--2297, 2007.

\end{thebibliography}

\end{document}